\newcommand{\m}[1]{\ensuremath{\mathcal{#1}}}
\algnewcommand{\IfL}[1]{\State\algorithmicif\ #1\ \algorithmicthen}
\algnewcommand{\EndIfL}{\unskip\ \algorithmicend\ \algorithmicif}
\newcommand{\rd}[1]{\textsc{read#1}}
\newcommand{\wrt}[1]{\textsc{write#1}}
\newcommand{\adt}[1]{\textsc{audit#1}}
\newcommand{\LL}{\textsc{LL}\xspace}
\newcommand{\SC}{\textsc{SC}\xspace}
\newcommand{\llk}{\textsc{ll}}
\newcommand{\scd}{\textsc{sc}}
\newcommand{\adReg}[2]{($#1,#2$)-auditable register}
\newcommand{\adRegs}[2]{($#1,#2$)-auditable registers}
\newcommand{\DenyList}{Deny List\xspace}
\newcommand{\DenyLists}{Deny Lists\xspace}
\title{Auditable Shared Objects:\\ From Registers to Synchronization Primitives}
\author{Hagit Attiya\\
  \small{Technion -- Israel Institute of Technology, Israel}
  \and
  Antonio Fernández Anta\\
  \small{IMDEA Software Institute \& IMDEA Networks Institute, Madrid, Spain}
  \and
  Alessia Milani\\
  \small{Aix Marseille Univ, CNRS, LIS, Marseille, France}
  \and
  Alexandre Rapetti \\
  \small{Université Paris-Saclay, CEA, List, F-91120, Palaiseau, France}
  \and
  Corentin Travers \\
  \small{Aix Marseille Univ, CNRS, LIS, Marseille, France}
}
\newtheorem{theorem}{Theorem}[section]
\newtheorem{lemma}{Lemma}[section]
\newtheorem{proposition}{Proposition}[section]
\newtheorem{definition}{Definition}[section]
\begin{document}
\date{}
\maketitle

   \begin{abstract}
\emph{Auditability} allows to track operations performed on a shared object, recording who accessed which information.
This gives data owners more control on their data.
Initially studied in the context of single-writer registers, 
this work extends the notion of auditability to other shared objects, 
and studies their properties.

We start by moving from single-writer to \emph{multi-writer} registers, 
and provide an implementation of an \emph{auditable 
$n$-writer $m$-reader read / write register},
with $O(n+m)$ step complexity.
This implementation uses $(m+n)$-sliding registers, 
which have consensus number $m+n$.
We show that this consensus number is necessary.
The implementation extends naturally to support an 
\emph{auditable load-linked / store-conditional} (LL/SC) shared object. 
LL/SC is a primitive that supports efficient implementation 
of many shared objects. 
Finally, we relate auditable registers to other access control objects, 
by implementing an \emph{anti-flickering deny list} 
from auditable registers.
\end{abstract}

%%% Local Variables:
%%% mode: LaTeX
%%% TeX-master: "main"
%%% End:

\setcounter{tocdepth}{2}
{\small{\tableofcontents}}
\thispagestyle{empty}
\newpage
\setcounter{page}{1}
\section{Introduction}

The ability to track operations on shared objects is a fundamental feature in distributed systems. 
\emph{Auditability}, in particular, enables data owners to monitor access to their data by recording who accessed which information. 
This mechanism provides a robust alternative to traditional access control mechanisms, shifting the emphasis from access restriction to post-hoc accountability. 
Auditability is helpful for preserving data privacy, 
as it can be used after a data breach,
to enforce accountability for data access.
This is particularly useful in shared, remotely accessed storage systems, 
where, for instance, understanding the extent of a data breach 
can help mitigate its impact. 

Auditability has been initially studied in the context of non-atomic replicated storage~\cite{BessaniDisc} and single-writer atomic registers~\cite{AttiyaPMPR23}. 
This work explores auditability for \emph{other shared objects}, 
considering the following key questions: 
(1) What types of base objects are necessary and sufficient to 
implement auditable \emph{multi-writer} atomic registers?  
(2) How does auditability impact the design of \emph{common synchronization primitives}?
(3) Can auditable registers be leveraged to construct more complex \emph{access control mechanisms}?
 
We begin to answer these questions by generalizing the notion of auditable atomic registers to the multi-writer setting. 
Hence, we present a wait-free, linearizable implementation of an auditable $n$-writer $m$-reader read/write register (abbreviated as \adReg{m}{n}), with linear (in $n+m$) step complexity.
To achieve these properties, our algorithm uses 
sliding registers~\cite{MostefaouiPR18}: 
a \emph{$k$-sliding register} keeps an ordered list of the latest $k$ values written to it.
If each reader and writer writes at most once in 
an $(m+n)$-sliding register, the sliding register can be used to uniquely order the
operations.
This implies a linearization of the read and write operations 
on the auditable register.
The key challenge is, therefore, how to deal with multiple 
read and write operations on the auditable register.
The backbone of our \adReg{m}{n} implementation is a sequence 
of $(m+n)$-sliding registers, 
each associated with a different value written to the auditable register.
The sliding registers are used to agree which write operation will set the next 
value of the register, and to track which readers have read this value. 
An additional challenge is to be able to efficiently find the ``current'' 
sliding register when reading and writing, to maintain bounded step complexity.

The consensus number of $(m+n)$-sliding registers is exactly $m+n$.
(I.e., they allow to solve consensus among exactly $m+n$ processes.)
Thus, consensus number $m+n$ is sufficient for implementing an \adReg{m}{n}.
We prove that it is also necessary, 
by showing that an \adReg{m}{n} can be used to solve consensus among $m+n$ processes.

Beyond registers, we explore auditability 
for more expressive synchronization primitives, 
like \emph{load-linked/store-conditional} (LL/SC) objects~\cite{jensen1987new}.
The LL operation reads a value from a memory location, while the SC operation writes a new value to the same location only if no process has written to it since the LL operation. If another process has changed the value, the SC fails, requiring the operation to be retried. 
LL/SC enables the efficient construction of 
a wide range of non-blocking data structures~\cite{EllenFKMT2012}.
We leverage the algorithmic ideas of our auditable register
to construct an algorithm that implements an LL/SC object for $n$ processes, using $2n$-sliding register.

Finally, we relate auditable registers to \DenyLists~\cite{FreyGR23}, showcasing how 
auditability can be leveraged to build security primitives. \emph{Allow Lists} and \emph{\DenyLists}, as defined by Frey, Gestin, and Raynal~\cite{FreyGR23}, are two access control objects that give designated processes 
(called \emph{managers}) 
the ability to grant and/or revoke access rights for a given set of
resources. 
Their work specifies and investigates the 
synchronization power of Allow Lists and \DenyLists; 
the latter in two flavors, with and without an \emph{anti-flickering} property. 
(An anti-flickering \DenyList ensures that transient revocations 
do not undermine long-term access control policies.)

We give a relatively simple specification of an \emph{immediate} \DenyList,
which is stronger than the anti-flickering \DenyList of Frey et al., 
and show that it can be efficiently implemented from auditable registers.
The step complexity of the resulting implementation (polynomially) depends on the number of resources and the number of processes. 
In contrast, while being wait-free, the step complexity of the 
anti-flickering \DenyList implementation in~\cite{FreyGR23} 
grows with the number of operations, and is, essentially, unbounded.

\subsection*{Related Work}

Auditability was introduced by Cogo and Bessani~\cite{BessaniDisc} 
in the context of replicated registers. 
They considered a register as an abstraction for distributed 
storage that provides read and write operations to clients. 
Cogo and Bessani define auditability in terms of two properties: 
\emph{completeness} ensures that all readers’ data access are detected,
while \emph{accuracy} ensures that readers who do not access data 
are not wrongly incriminated.
They present an algorithm to implement an auditable \emph{regular} 
(non-atomic)
register, using $n \geq 4f+1$ atomic read/write shared objects,
$f$ of which may be faulty (writers and auditors fail only by crashing; 
faulty readers may be Byzantine).
Their implementation relies on information dispersal schemes, 
where the input of a high-level write is split into several pieces, 
each written in a different low-level shared object.
Each low-level shared object keeps a trace of each access, and in order to read, 
a process has to collect sufficiently many pieces of information in many low-level 
shared objects, which allows to audit the read.

In asynchronous message-passing systems where $f$ processes can be Byzantine,
Del Pozzo, Milani, and Rapetti~\cite{DelPozzoMR2022} study the possibility of implementing
an atomic auditable register,  
with fewer than $4f+1$ servers.
They prove that without communication between servers,
auditability requires at least $4f+1$ servers. %, $f$ of which may be Byzantine.
They also show that allowing servers to communicate with each other admits
an auditable atomic register with optimal resilience of $3f+1$. 

The auditability definition of~\cite{BessaniDisc}  
is tightly coupled with their multi-writer, multi-reader register emulation 
in a replicated storage system using an information-dispersal scheme.
An implementation-agnostic auditability definition was later proposed 
by Attiya et al.~\cite{AttiyaPMPR23}, 
based on collectively linearizing read, write, and audit operations. 
They show that auditing 
adds power to reading and writing, as it allows to solve consensus, 
implying that auditing requires strong synchronization primitives.
They also give several implementations that use non-universal primitives
(like \emph{swap} and \emph{fetch\&add}), for a single writer 
and either several readers or several auditors (but not both). 

Recent work~\cite{AttiyaAFMRT2025} have extended auditability to ensure that even a curious (but honest) reader cannot effectively read data without being audited; in fact, curious readers cannot even audit other readers.
The paper provides implementations of several auditable objects, including registers.
However, the constructions depend on universal primitives, 
including \emph{compare\&swap}, in contrast to our constructions, 
which only employ sliding registers, whose consensus number is bounded. 

Alhajaili and Jhumka~\cite{AlhajailiJ2019} study an interesting variant 
of auditability with malicious processes, with and without a trusted party.
Their definition of auditability has to do with the ability of determining if a system runs as specified, and determine the incorrect behavior in case it does not. This involves systematically tracking and recording system activities to facilitate fault detection and diagnosis. 
The authors propose methodologies for integrating auditability into system design, emphasizing its role in simplifying the debugging process and improving overall system robustness.

Our algorithms (like most prior work) maintain a large amount of auditing information, 
which grows with the number of operations performed in an execution. 
Hajisheykhi, Roohitavaf, and Kulkarni~\cite{HajisheykhiRK2017} investigate how 
to reduce the space used for saving auditing data (to be used to restore the state after a fault), 
by leveraging causal dependencies among them. 
They propose protocols for ensuring accountability in distributed systems when auditable events — deviations from standard protocols — occur. They introduce two self-stabilizing protocols: an unbounded state space protocol that propagates auditable events across all nodes and a more efficient bounded state space protocol that achieves the same awareness without increasing resource demands. Their approach ensures that before a system can recover from an auditable event, all processes are aware of it, preventing unauthorized restorations and reinforcing accountability.

Access control objects regulate access to resources by defining policies for granting or denying permissions. Some of the most common access control mechanisms include Access Control Lists (ACLs) \cite{sandhu1994access}, which specify allowed or denied actions for individual users or processes, and Capability-Based Access Control (CBAC) \cite{levy2014capability}, where access is granted through transferable tokens rather than predefined rules. Role-Based Access Control (RBAC) \cite{sandhu1998role} simplifies management by assigning permissions to roles instead of individuals, making it widely used in enterprise environments, while Attribute-Based Access Control (ABAC) \cite{hu2015attribute} extends RBAC by dynamically evaluating attributes such as user location, device type, and time of access. AllowLists and DenyLists~\cite{FreyGR23} further control access by explicitly specifying permitted or blocked users or processes, often used in security filters and authentication systems.

An area related to auditability is \textit{data provenance} (also called \textit{lineage}) in databases \cite{becker1988auditing,glavic2021data}, which involves tracking the origin and transformations of data, focusing on its lineage across different processes or systems. Provenance is concerned with the integrity, quality, and reproducibility of data as it moves through various stages, such as aggregation or computation. While both auditability and provenance track interactions with data, auditability is more focused on recording who performed an operation, primarily for access control and security, while provenance aims to provide transparency about the history and transformations of the data for purposes such as verification and accountability in data analysis.

\subsection*{Summary of Our Contributions}

\begin{itemize}
    \item We extend auditability from single-writer to multi-writer registers and 
    characterize the necessary and sufficient consensus number needed for their implementation.
    \item We introduce and implement auditable LL/SC objects, 
    demonstrating auditability can be provided beyond read / write operations.
    \item We construct an anti-flickering \DenyList object from auditable registers, 
    illustrating the connection between auditability and access control mechanisms.

\end{itemize}

The remainder of this paper is organized as follows. Section~\ref{sec:definitions} formally defines auditable shared objects and presents our model. Sections~\ref{sec:mwa-cn-lower-bound} and~\ref{sec:fast_mwmr_auditable} prove the necessary and sufficient conditions for auditable multi-writer registers, respectively. Section~\ref{sec:LLSC} describes how to implement auditable LL/SC objects. Section~\ref{label:CN_of_DL} presents the algorithm to implement an anti-flickering \DenyList object with auditable registers. 
Finally, Section~\ref{sec:conclusions} concludes with open questions and future directions.

%%% Local Variables:
%%% mode: LaTeX
%%% TeX-master: "main"
%%% End:

\section{Definitions}
\label{sec:definitions}

We use a standard model, in which a set of processes $p_1, \dots, p_n$, 
communicate through a shared memory consisting of \emph{base objects}.
The base objects are accessed with \emph{primitive operations}.
In addition to atomic registers, 
our implementations use $k$-sliding registers, 
whose consensus number is exactly $k$~\cite{MostefaouiPR18}.
Specifically, 
a $k$-\emph{sliding register}~\cite{MostefaouiPR18} stores the sequence 
of the last $k$ values  written to it (or the last $x$ values 
when only  $x < k$ values have been written). 
A $write$ with input $v$ appends $v$ at the end of the sequence and 
removes the first one if the sequence is already of size $k$. 
A $read$ operation returns the current sequence. 
A standard read / write register is a $1$-sliding register.

%, and they are explicitly mentioned. \ha{they are not and should not}
%\ct{define max-register} \ha{would do that where $M$ is used}
An \emph{implementation} of a (high-level) object $T$ specifies 
a program for each process and each operation of the object $T$;
when receiving an \emph{invocation} of an operation,
the process takes \emph{steps} according to this program. 
Each step by a process consists of some local computation,
followed by a single primitive operation on a base object.
The process may change its local state after a step, and it
may return a \emph{response} to the operation of the high-level object.

In order not to confuse operations performed on the implementation of the high-level object $T$ 
and primitives applied to base objects, the former are denoted with 
capital letters and the later in normal font.

A \emph{configuration} $C$ specifies the state of every process and of every base object.
An \emph{execution} $\alpha$ is an alternating sequence of configurations and events, 
starting with an \emph{initial configuration}; it can be finite or infinite.
An operation \emph{completes} in an execution $\alpha$ if $\alpha$ 
includes both the invocation and response of the operation;
if $\alpha$ includes the invocation of an operation, 
but no matching response, then the operation is \emph{pending}.
An operation $op$ \emph{precedes} another operation $op'$ in $\alpha$
if the response of $op$ appears before the invocation of $op'$ in $\alpha$.

A \emph{history} $H$ is a sequence of invocation and response events;
no two events occur at the same time.
The notions of \emph{complete}, \emph{pending} and \emph{preceding} operations
extend naturally to histories. 

The standard correctness condition for concurrent implementations 
is \emph{linearizability}~\cite{HerlihyWing90}: intuitively,
it requires that each operation appears to take place instantaneously 
at some point between its invocation and its response.
Formally:

\begin{definition}
Let \m{A} be an implementation of an object $T$.
An execution $\alpha$ of \m{A} is \emph{linearizable} if there is 
a sequential execution $\lambda(\alpha)$ (a \emph{linearization} 
of the operations on $T$ in $\alpha$) 
such that:
\begin{itemize}
    \item $\lambda(\alpha)$ contains all complete operations in $\alpha$, 
    and a (possibly empty) subset of the pending operations in $\alpha$ 
    (completed with response events),
    \item 
    If an operation $op$ precedes an operation $op'$ in $\alpha$, 
    then $op$ appears before $op'$ in $\lambda(\alpha)$, and
    \item $\lambda(\alpha)$ respects the sequential specification of the high-level object $T$.
\end{itemize}
\m{A} is \emph{linearizable} if all its executions are linearizable. 
\end{definition}

An implementation is 
\emph{wait-free} if, whenever there is 
a pending operation by process $p$, 
this operation returns in a finite number of steps by $p$.

%\subsection*{Auditable objects}

An \emph{auditable register} supports, 
in addition to the standard $\rd{}$ and $\wrt{}$ operations, 
also an $\adt{}$ operation that reports which values were read by each process~\cite{AttiyaPMPR23}. 
An \adt{} has no parameters and it returns a set of pairs, 
$(j,v)$, where $j$ is a process id, and $v$ is a value of the register.
A pair $(j,v)$ indicates that process $p_j$ has read the value $v$.
The sequential specification of an auditable register enforces, 
in addition to the usual specification of \rd{} and \wrt{} operations,
that a pair appears in the set returned by an \adt{} operation if and only if 
it corresponds to a preceding \rd{} operation.

We implement a \emph{load-linked / store-conditional} (LL/SC) variable, 
supporting the following operations:
$\LL(x)$ returns the value stored in $x$, 
and $\SC(x, \mathit{new})$ writes the value $\mathit{new}$ to $x$,
if it was not written since the last $\LL(x)$ performed by the process; otherwise, $x$ is not modified. 
$\SC$ returns \textit{true} if it writes successfully, 
and \textit{false} otherwise.
An \emph{auditable LL/SC variable adds 
an \adt{} operation, whose sequential specification} returns a set of process-value pairs, 
corresponding to preceding \LL{} operations.

%%% Local Variables:
%%% mode: LaTeX
%%% TeX-master: "main"
%%% End:
 
\section{\texorpdfstring{Consensus number of $n$-writer, $m$-reader auditable register $\geq m+n$}{Consensus number of n-writer, m-reader auditable register is at least m+n}}
\label{sec:mwa-cn-lower-bound}

An \adReg{m}{n}
can be written
by $n$ processes, read by $m$ processes, and be audited by all processes. 
An audit operation returns a set of pairs $(p,v)$ where $p$ is
a process id, and $v$ is a value. 
This set holds are the values returned by the read operations that precede the audit.

\begin{theorem}
  \label{thm:mwar_cn}
  For every pair of integers $m, n > 0$, % such that $n+m = \ell$, 
  there is an $(m+n)$-process consensus algorithm using \adRegs{m}{n}.
\end{theorem}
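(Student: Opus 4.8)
The plan is to build an $(m+n)$-process consensus protocol in which $n$ of the participants play the role of writers $w_1,\dots,w_n$ and the remaining $m$ play the role of readers $r_1,\dots,r_m$ of a single shared \adReg{m}{n}; every participant may also invoke \adt{}. Since atomic read/write registers have consensus number $1$ and may be used freely alongside the auditable register, the first step is to let every participant announce its proposal in a dedicated single-writer register $\mathit{Prop}[\,\cdot\,]$. The consensus task then reduces to \emph{winner election}: getting all $m+n$ participants to agree on a single participant, whose announced proposal everyone adopts and returns. Validity is then immediate, and the whole difficulty concentrates in guaranteeing agreement on the winner.

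The mechanism for winner election is to have each writer $w_i$ \wrt{} a value tagged with its own id (so that the $n$ written values are pairwise distinct), each reader $r_j$ \rd{} the register once, and every participant finally invoke \adt{}. By linearizability the $n$ writes and $m$ reads are totally ordered, and the returned audit set exposes, for every completed read, the identity of the writer whose value it observed (or the initial value $\bot$), i.e.\ it reveals how the reads are interleaved with the writes. The guiding idea is that the winner should be the participant whose operation comes \emph{first} in this linearization. Two structural facts make this candidate tractable: a read returns $\bot$ if and only if it is linearized before every write, so such a read must begin before the first write completes; and any \adt{} invoked after some write has been linearized necessarily reports all reads that returned $\bot$. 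I would first verify the construction in the base case of one writer and one reader (already giving consensus number $2$): the reader decides the writer if it read the writer's tag and decides itself if it read $\bot$, while the writer inspects its audit to learn on which side of its write the reader's read fell; a short case analysis on the relative linearization order of the single write and single read shows the two decisions always coincide.

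Extending this to $m+n$ participants, the decision rule is that a participant elects the earliest operation it can certify from its own \rd{}/\wrt{} outcome together with its \adt{} set --- for instance, a reader that observed a writer's tag but whose audit reveals an even earlier $\bot$-read defers to that reader, a writer that sees a $\bot$-read in its audit defers to it, and otherwise the writer elects itself. The correctness argument has two parts: \emph{validity}, which holds because the elected participant always performed a real operation and announced a genuine proposal; and \emph{agreement}, which requires showing that any two participants compute the \emph{same} winner. The main obstacle is precisely this agreement step under \emph{partial, time-varying audit views}: because the protocol must be wait-free, a participant whose operation is early in real time must decide from an audit set that does not yet reflect the operations of slower participants, so the winner it elects has to be forced upon every later participant as well. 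I expect the heart of the proof to be a monotonicity-plus-precedence lemma --- audit sets only grow, and the precedence of a $\bot$-read before all writes guarantees that every later audit contains it --- used to argue that the ``earliest certifiable'' operation is a single well-defined operation on which all audit views agree. Handling the reader/writer asymmetry (readers cannot \wrt{} and, in an \adReg{m}{n}, writers cannot \rd{}) is folded into this analysis by routing all proposal values and any auxiliary stabilization information through the freely available read/write registers, while the ordering power is extracted solely from the auditable register.
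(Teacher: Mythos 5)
Your overall architecture---proposals in SWMR registers, a single \adReg{m}{n} used as an ordering oracle, decision $=$ ``earliest operation in the linearization''---has a genuine gap at exactly the point you flagged, and no monotonicity-plus-precedence lemma can close it. An audit returns a \emph{set} of (process, value) pairs: it reveals which reads returned which values, but not the relative order of the reads. So if two readers both return $\bot$, no participant can ever learn which of the two reads was linearized first, and the partial views are inconsistent in a fatal way: a reader $r_5$ that reads $\bot$ and audits immediately may see only $\{(r_5,\bot)\}$ and elect itself, while a later read by $r_3$ that is still linearized before every write adds $(r_3,\bot)$ to all subsequent audits, so every later participant elects $r_3$ under a min-id tie-break. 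A $\bot$-reader cannot wait for the set of $\bot$-reads to stabilize (that only happens once the first write is linearized, and wait-freedom forbids waiting for the writers to move). The writers are even worse off: a write that is overwritten before any read leaves no trace in any audit, so the writes cannot be ordered at all, and your rule ``otherwise the writer elects itself'' makes every writer decide its own value whenever no $\bot$-read exists. The only fact that all participants can compute consistently is the \emph{existence} of a pair $(r,\bot)$ in their audit: every participant audits after its own operation, so any audit by a writer or by a reader that read $\top$ is linearized after the first write and hence contains \emph{all} $\bot$-reads, while a $\bot$-reader sees at least its own pair. That is a binary, group-level decision (``did a read beat every write?''), not an election of an individual operation.

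The paper closes this gap with an induction that your sketch is missing. For $n+m=\ell+1$, the $n$ writers first run an $n$-process consensus among themselves using \adRegs{1}{n-1} (induction hypothesis), the $m$ readers likewise agree among themselves using \adRegs{m-1}{1}, and each process posts its group's agreed value in its SWMR register. Only then is the single \adReg{m}{n} invoked: every writer writes $\top$, every reader reads, everyone audits, and the presence or absence of a pair $(r,\bot)$ determines which group's already-agreed value is adopted. In other words, intra-group agreement is delegated to smaller auditable registers, and the big register is asked only the one question it can answer consistently. Your base case $m=n=1$ is fine (it is Proposition 19 of the cited prior work, where both groups are singletons and intra-group agreement is vacuous), but generalizing to $m+n$ processes requires this recursive decomposition, not a finer ordering argument extracted from audit sets.
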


\begin{proof}
The proof is by induction on $\ell = n+m$. 
The base case, $\ell=2$, 
is proved in~\cite[Proposition 19]{AttiyaPMPR23}.

For the induction step, assume the lemma holds for all pairs of values $n', m'$,
such that $n' + m' = \ell \geq 2$,
and we prove it for $\ell+1 > 2$. 
Pick $m>0, n>0$ such that $n+m = \ell+1$;
note that $m, n \leq \ell$.
To solve consensus among $\ell+1$ processes $p_1,\ldots, p_{\ell+1}$, 
we partition the processes into two sets:
$R$ (the \emph{readers}) of size $m$,
and $W$ (the \emph{writers}) of size $n$. 
Since $m>0$ and $n > 0$, both sets $R$ and $W$ are non-empty.

Each process $p_i, 1 \leq i \leq \ell+1 $ has a standard
single-writer multi-reader register $S_i$. 
By the induction hypothesis, since $n \leq \ell$,
the $n$ writers can agree on one of their proposals 
with an $n$-process consensus that uses \adRegs{1}{n-1}. 
Similarly, the $m$ readers can agree on one of their proposal 
using \adRegs{m-1}{1}. 
Each reader and writer process $p_i$ writes the value agreed upon in
its register $S_i$. 

We use now one \adReg{m}{n}, $AR$, whose initial value is $\bot$.
Writers and readers access $AR$ to select 
one of the two consensus values.
Each writer $\in W$  writes  $\top$ to $AR$ and 
each reader $\in R$ performs a read operation on $AR$. 
Writers and readers then audit $AR$.
If the set returned is empty or contains no pair
      $(p_r,\bot)$ where $r \in R$, then the first operation performed
      on $AR$ is a write and the writers win.
      Then, the decision for consensus can be read from at  least one of the registers of the
      writers. 
Otherwise, the  audit operation returns a set containing a
      pair $(p_r, \bot)$ with $r \in R$. 
      In that case, the first operation performed
      on $AR$ is a read operation and the readers win. As in the
      previous case, the decision for consensus can be found by
      reading the readers' registers.\qedhere
\end{proof}

\section{\texorpdfstring{Implementing an $n$-Writer $m$-Reader Auditable Register Using $(n+m)$-Sliding Registers}{n-Writer m-Reader Auditable Register Using (n+m)-Sliding Registers}}
\label{sec:fast_mwmr_auditable}

We present a wait-free and linearizable implementation of an 
$n$-writer $m$-reader auditable  register; 
it can support any number of auditors. 
The implementation uses $(m+n)$-sliding registers. 
Since the consensus number of \adReg{m}{n}s is $m+n$ 
(Theorem~\ref{thm:mwar_cn}), objects with consensus number $\geq m+n$,
like $(m+n)$-sliding registers, are required.
(Objects with consensus number $\infty$, like \emph{compare\&swap}, 
can also be used~\cite{AttiyaAFMRT2025}, 
but our goal is to use objects with the minimal consensus number.)

\begin{theorem}
\label{thm:mrmw_bounded}
There is a wait-free linearizable implementation of an $m$-reader, 
$n$-writer auditable register from $(m+n)$-sliding registers.
The step complexity of each operation is in $O(m+n)$. 
\end{theorem}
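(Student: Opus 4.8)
The plan is to realize the backbone sketched in the introduction: an unbounded sequence of $(m+n)$-sliding registers $SL_1, SL_2, \ldots$, where $SL_s$ is the object that both decides the $s$-th value ever installed in the auditable register (call it \emph{version} $s$) and records which readers observed that version. Alongside, I keep a shared hint $\mathit{Head}$ holding (an approximation of) the index of the current version, so that an operation can jump near the frontier rather than scanning from $SL_1$; efficiently locating the current slot is the ingredient needed to keep the step complexity bounded.

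For a \Wrt{}$(v)$ by $p_i$, I read $\mathit{Head}$, scan forward to the first \emph{undecided} slot $s$, append the entry $(\textsc{w}, i, v)$ to $SL_s$, and read $SL_s$ back; the first $\textsc{w}$-entry in the returned sequence wins and defines version $s$. If it is the caller's own entry the write has installed version $s$, so it updates $\mathit{Head}$ and returns; otherwise it advances to $s+1$ and repeats. A \Rd{}$()$ locates the current version $s$ the same way, reads its value $v_s$ from the winning $\textsc{w}$-entry of $SL_s$, appends a trace $(\textsc{r}, i)$ to $SL_s$ (skipping the append if its id already occurs there, so it writes to $SL_s$ at most once), and returns $v_s$. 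An \Adt{}$()$ scans the decided slots and, for each version $s$ with value $v_s$, emits a pair $(p_j, v_s)$ for every reader trace $(\textsc{r}, j)$ in $SL_s$, returning the union. The reason an $(m+n)$-sliding register is exactly the right size is a counting argument: a writer appends to $SL_s$ only while $s$ is the frontier (once $s$ is decided, later operations start past $s$ and never revisit $SL_s$), so each of the $n$ writers contributes at most one $\textsc{w}$-entry and each of the $m$ readers at most one $\textsc{r}$-entry; hence at most $m+n$ writes ever reach $SL_s$, none is evicted, and the register yields one total order of all operations touching version $s$.

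For correctness I would linearize all installed writes by version number, insert each \Rd{} returning $v_s$ between the write installing version $s$ and the write installing version $s+1$, and place each \Adt{} immediately after exactly the reads whose traces it collected. It then remains to verify: (i) real-time order is respected, the crux being that a \Rd{} returning $v_s$ cannot have completed before version $s$ was installed, which follows because the reader located $s$ as the current (decided) frontier; (ii) the register sequential specification holds, since every read assigned to version $s$ returns $v_s$; and (iii) the audit iff-condition holds, because a reader's trace sits in $SL_s$ precisely when its \Rd{} of version $s$ took effect, and the scan collects exactly those traces.

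Wait-freedom and the $O(m+n)$ step bound are where the main difficulty lies, and I would isolate them into two sub-claims. First, \emph{locating the frontier is cheap}: since a writer updates $\mathit{Head}$ right after installing a version, the number of installed-but-unannounced versions at any instant is $O(n)$, so reading a fresh $\mathit{Head}$ and scanning forward traverses $O(m+n)$ slots; the delicate point is that the frontier keeps advancing while a process scans, so the bound must be proved by charging each extra scanned slot to a concurrent installing write and using that at most $m+n$ processes can install concurrently. Second, \emph{writes do not retry too often}: each failed attempt is charged to the operation that won the contested slot, and bounding the number of such overtakings by $O(m+n)$ — rather than by the a priori unbounded number of concurrent operations — is exactly where the $\mathit{Head}$ hint and the sliding-register ordering must be exploited, and I expect this to be the hardest part of the argument. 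Granting these two sub-claims, every operation performs $O(m+n)$ sliding-register accesses and therefore terminates, which yields both wait-freedom and the claimed step complexity.
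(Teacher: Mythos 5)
Your proposal reproduces the paper's backbone (one $(m+n)$-sliding register per installed value, the counting argument that each process writes at most once per slot so no entry is ever evicted, and a shared index hint), but it diverges exactly where the theorem's difficulty lies, and both divergences are fatal. First, your writers retry: a writer that loses a slot ``advances to $s+1$ and repeats,'' and you defer wait-freedom to a charging argument bounding the number of overtakings by $O(m+n)$. No such bound exists: each failed attempt is charged to a distinct \emph{winning} write operation, and during one slow write the other $n-1$ writers can start and complete an unbounded number of operations, each stealing the next slot. This is the classic lock-free-but-not-wait-free pattern, and neither the $\mathit{Head}$ hint nor the sliding-register ordering changes the count. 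The paper's algorithm eliminates writer retries altogether: a writer reads the current index $widx$ from a max register $M$, writes its $w$-tuple to $SLR[widx]$ \emph{exactly once}, updates $M$, and returns even if its tuple is not first. This is sound because the max-register protocol guarantees that all writes landing in the same slot are concurrent, so the losing (``hidden'') writes can be linearized immediately before the unique visible one; helping is then needed only on the reader side, where the array $H$ together with the vector $ridx$ stored in $M$ bounds a reader to three iterations.

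Second, your read/audit protocol is not linearizable, and your audit is not $O(m+n)$. Your reader deposits its trace $(\textsc{r},i)$ into the \emph{same} slot $SL_s$ as the value it returns, \emph{after} reading that value, so a trace can arrive arbitrarily late. Concretely: a reader locates frontier $s$ and reads $v_s$, then stalls; a writer installs $v_{s+1}$ and completes; an auditor then runs, scans every slot, sees no trace of $p_i$, and returns; finally the reader appends its trace and returns $v_s$. Linearizability forces the read before the write of $v_{s+1}$ (to justify its return value) and forces the audit after that write (real-time order), hence the read precedes the audit --- yet the audit does not report $(i,v_s)$. Moreover, scanning all decided slots makes the audit's step complexity grow with the number of writes in the execution, not $O(m+n)$. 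The paper avoids both problems with the opposite convention: a read of $v_{x-1}$ counts only if the reader's id is recorded in the \emph{next} slot $SLR[x]$ \emph{before} its first $w$-tuple (or appears in that tuple's helping set). Once a $w$-tuple is written to $SLR[x]$, the reader set of $v_{x-1}$ is frozen, which is what lets the audit set be accumulated incrementally inside $M$ and lets an \adt{} finish after one read of $M$ and one sliding-register read. A consequence you should internalize: the ``too late'' readers that your design quietly accepts are precisely the ones the paper must reject and then rescue via helping --- rejecting them is what makes audits correct and cheap, and helping them is what restores reader wait-freedom.
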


\subsection{The Algorithm}

An implementation of an auditable register has to keep track of the
latest value written to the register as well as, for each written value, its set of
readers. This can be easily achieved using a single sliding register
$\mathit{SLR}$, provided that its window is unbounded. Such a register
hence stores the complete sequence of values written to it, ordered
by the oldest first but has infinite consensus number. To perform
a \wrt($v$), a writer simply writes $v$ to
$\mathit{SLR}$. 
For auditing purpose, a reader $p_i$ first writes its identifier $i$ to
the sliding register, before reading it. The value returned by this \rd{}
 is then the nearest non-identifier value in the sequence read
from $\mathit{SLR}$ previous to the identifier $i$ written by $p_i$.
The reader set of each value  can easily be inferred
from the sequence  stored in $\mathit{SLR}$. For a value $v$ in the
sequence, its reader set is the set of processes whose identifiers
follows $v$, and are before the first non-identifier value that
succeeds $v$, if any.  

\begin{algorithm*}
\small
  \caption{Multi-writer auditable register: \rd{} and \wrt{} and \adt{}}
  \label{alg:mw_auditable}
  \begin{algorithmic}[1]
    \State\textbf{shared variables}
    \State\hspace{\algorithmicindent}$\mathit{M}$: a max-register storing a triple $(widx,ridx,auditset)$ ordered by their first field
    \State\hspace{\algorithmicindent} $~~~ widx$ initially $0$ 
        \Comment{index of sliding registers}
    \State\hspace{\algorithmicindent} $~~~ ridx$ array of size $m$, initially $[-1,\ldots,-1]$ 
    \Statex \Comment{$ridx[i]$ highest index of sliding register in which a \rd{} by  $p_i$ is recorded}
    \State\hspace{\algorithmicindent} $~~~ auditset$ set of pair (process,value), initially $\emptyset$
    \State\hspace{\algorithmicindent}$SLR[-1,0,..+\infty]$: unbounded array of $(m+n)$-sliding registers,\label{l:SLR}
    \Statex\hspace{\algorithmicindent}\hspace{\algorithmicindent} 
        initially, $SLR[-1] = ((\mathsf{w},j_0,v_0,\emptyset))$, $SLR[\ell] = ()$ for all $\ell \geq 0$
    \State\hspace{\algorithmicindent}$H[1..m]$: array of SWMR register, one per reader, initially $[-1,\ldots,-1]$\label{l:H}

    \State\textbf{local variables: reader}    
    \State\hspace{\algorithmicindent}$lsr \gets -1$; $lval \gets \bot$  \Comment{index of last sliding register read and last value read}\label{l:local_seq_r} 

    \Function{read}{$~$} \Comment{code for reader process $p_i, i \in \{1,\ldots,m\}$}
    \If{$lsr \geq 0$} $window \gets SLR[lsr].\mathsf{read}()$\label{l:rd_last_wd} \Comment{check for new \wrt{} since last \rd{}}
    \If{there is no $w$-tuple $(\mathsf{w},\_,\_,\_)$ in $\mathit{window}$} \Return{$lval$}
    \Comment{no new \wrt{}}\label{l:no_new_write}\EndIf
    $(widx,ridx,auditset) \gets M.\mathsf{read}()$\label{l:new_write} \Comment{new \wrt{}, check if it needs help to complete}
    \If{$widx = lsr$}
    \State \textbf{for each} $j \in \Call{readers}{window}$  \textbf{do} 
    \State\hspace{\algorithmicindent} $ridx[j] \gets lsr$; $auditset \gets auditset \cup \{(j,lval)\}$\label{l:rd_update_ridx1}\label{l:rd_help_A1}
    \State $M.\mathsf{writeMax}(lsr+1,ridx,auditset)$\label{l:rd_help_M1}
    \EndIf
    \EndIf
    \Repeat{}\label{l:rd_start_repeat}
    \State $(widx,ridx,auditset)  \gets M.\mathsf{read}()$ \label{l:rd_M}
    \If{$ridx[i] > lsr$} \Comment{found help} 
    \State $lsr \gets ridx[i]$; $lval \gets$ \Call{getValue}{$lsr-1$} \Return $lval$ \label{l:rd_return_help}
    \EndIf
    \State $lsr \gets widx$; $H[i].\mathsf{write}(lsr)$ \label{l:rd_annouce_att} \Comment{catch up with writers, announce attempt}
    \State $SLR[lsr].\mathsf{write}(i)$; $window \gets SLR[lsr].\mathsf{read}()$; $lval \gets $\Call{getValue}{$lsr-1$}\label{l:rd_do_att}\label{l:rd_val}
    \If{$\exists (\mathsf{w},\_,\_,\_) \in window$} \Comment{help corresponding \wrt{} to complete}\label{l:rd_help_write}
%    \State let $(\mathsf{w},id,\_,h)$ be the 1st quadruple $(\mathsf{w},\_,\_,\_)$ in $window$
    \State \textbf{for each} $j \in \Call{readers}{window}$  \textbf{do} 
    \State\hspace{\algorithmicindent} $ridx[j] \gets lsr$; $auditset \gets auditset \cup \{(j,lval)\}$\label{l:rd_update_ridx2}\label{l:rd_help_A2}
%    \State $auditset \gets auditset \cup \{ (j,lval) : j \in \Call{readers}{window}\}$ 
    \State $M.\mathsf{writeMax}(lsr+1,ridx,auditset)$ \label{l:rd_help_M2}
    \EndIf
    \Until{$i \in \Call{readers}{window}$}\label{l:rd_att_success}
    \State \Return{$lval$} \label{l:rd_return_no_help}
    \EndFunction
     
    \Function{write}{$v$} \Comment{code for writer $p_j$, $j \in \{1,\ldots,n\}$}
    \State $widx,ridx,auditset  \gets M.\mathsf{read}()$;  $to\_help  \gets \emptyset$ \label{l:w_read_M}
    \ForAll{$i \in \{1,\ldots,m\}$} $aidx_j \gets H[j].\mathsf{read}()$ \Comment{help ongoing read operations}\label{l:w_att}
    \If{$ridx[j] <  aidx_j$} \Comment{reader $p_j$ may need help} \label{l:w_help_needed?}
    \State $window \gets SLR[aidx_j].\mathsf{read}()$  \label{l:w_help_wd}
    \State \textbf{if} $j \notin \Call{readers}{window}$ \textbf{then} $to\_help \gets to\_help \cup  \{j\}$ \label{l:w_att_fail}
    \EndIf
    \EndFor
    \State $SLR[widx].\mathsf{write}(\mathsf{w},j,v,to\_help)$;\Comment{announce new write}\label{l:w_write_SLR}
    \State $window \gets SLR[widx].\mathsf{read}()$; $val \gets \Call{getvalue}{widx-1}$
    \State \textbf{for each} $j \in \Call{readers}{window}$  \textbf{do} 
    $ridx[j] \gets widx$; $auditset \gets auditset \cup \{(j,val)\}$ \label{l:w_update_ridx}\label{l:w_A}
%    \State $auditset \gets auditset \cup  \{ (i,lval) : i \in \Call{readers}{window}\}$
    \State $M.\mathsf{writeMax}(widx+1,ridx,auditset)$; \Return\label{l:w_write_M}
    \EndFunction
    \Function{audit}{$~$}
    \State $widx,\_,auditset  \gets M.\mathsf{read}()$ \label{l:adt_read_M} \Comment{readers of val with seq. num $\leq widx-2$}%\label{l:adt_current_val}
    \State $window \gets SLR[widx].read()$; $val \gets \Call{getValue}{widx-1}$ \label{l:adt_val}
    \State $auditset \gets auditset \cup \{(j,val) : j \in \Call{readers}{window}\}$ 
                \Comment{readers of val with seq. num $widx-1$}\label{l:adt_current_val}
    \State \Return{$auditset$}\label{l:adt_return} 
    \EndFunction
    \algstore{alg:fast_mwa_break}
  \end{algorithmic}
\end{algorithm*}

\begin{algorithm*}
\small
  \caption{Multi-writer auditable register: auxiliary functions \textsc{getValue} and  \textsc{readers}}
  \label{alg:mw_auditable_aux}
  \begin{algorithmic}[1]
    \algrestore{alg:fast_mwa_break}
   
    \Function{getValue}{$sn$}
    \State $window \gets SLR[sn].\mathsf{read}()$; let $(\mathsf{w},id,val,\_)$ be the first $w$-tuple $(\mathsf{w},\_,\_,\_)$ in $window$
    \label{l:getValue}
    \State \Return{$val$}
    \EndFunction

        \Function{readers}{$window$}
    \State $readers \gets \{j: ~\mbox{there is no $w$-tuple $(\mathsf{w},\_,\_,\_)$ preceding $j$ in $window$}\}$
    \label{l:read_stop_1}
    \If{$\exists (\mathsf{w},\_,\_,\_) \in window$}
    \State let $(\mathsf{w},id,\_,h)$ be the 1st $(\mathsf{w},\_,\_,\_)$ in window; $readers \gets readers \cup h$
    \label{l:read_stop_2}
    \EndIf
    \State \Return{$readers$}
    \EndFunction
  \end{algorithmic}
\end{algorithm*}

Our implementation (Algorithms~\ref{alg:mw_auditable}
and \ref{alg:mw_auditable_aux}) is based on this simple
idea, but instead of a single sliding register with an
unbounded window, we use an unbounded array $SLR[-1,\ldots]$ of
sliding registers, each with a bounded window of size $m+n$.
In order not to confuse identifiers and written values, \wrt{}
operations insert into the sliding registers $w$-\emph{tuples} of the form $(\mathsf{w},j,v,h)$, where
$j$ the identifier of the writer, $v$ is the input value of the \wrt{} operation, and
$h$ a \emph{helping set} of readers' identifiers
 (whose role will be explained later).

Each sliding  register $\mathit{SLR}[x]$ contains initially the
empty sequence $()$, except the first $\mathit{SLR}[-1]$ whose
sequence contains the $w$-tuple $(\mathsf{w},j_0,v_0,\emptyset)$ where
$v_0$ is the initial value of the auditable register and $j_0$ an
arbitrary writer's identifier. At any point in the execution, the
current value $v$ of the auditable register is found in the sliding
register with highest index $x$ whose sequence contains a
$w$-tuple. Specifically, $v$ is the value contained in the \emph{first}
$w$-tuple in the sequence stored in $\emph{SLR}[x]$. Similarly to the basic
implementation sketched above, readers of this value
(if any) are the processes whose identifiers appear \emph{before} any
$w$-tuple in the \emph{next} sliding register $\emph{SLR}[x+1]$.

Therefore, reading or writing the auditable register
involves finding the \emph{valid} sliding register (that is, one that
does not contain a $w$-tuple)  with lowest index. If this is not done with
care, progress of some operation may be lost. For example, the same
writer may be always the first to write in each sliding register
(i.e., each non empty sequence in any sliding register starts with a
$w$-tuple posted by this writer), thus preventing \rd{} operations from
completing  or other writers from changing the value of the auditable
register. We also have to make sure that each writer or reader writes
at most once to each sliding register. Otherwise, the auditing may become
inaccurate (as readers identifiers may be removed from some sequence),
or the current value of the auditable register may be lost.

To solve theses challenges, the implementation combines the following
ideas:
\begin{itemize}
\item First, we observe that a \wrt{} operation can terminate 
after writing in a given sliding register $\mathit{SLR}[x]$, even if its
  corresponding $w$-tuple is not the first in the sequence held in
  $\mathit{SLR}[x]$, provided that that it is concurrent with the
  \wrt{} operation that writes first its $w$-tuple into $SLR[x]$. 
  Indeed, in that case, 
  the \wrt{} operations that are late to post their $w$-tuple may be linearized
  immediately \emph{before} the \wrt{} whose $w$-tuple is first.
  Accordingly, the value, denoted $v_x$, in the first $w$-tuple in the
  sequence stored in $SLR[x]$ is said to be \emph{visible}. The values in
  the other $w$-tuple are never returned by any \rd{} operation, and
  are thus \emph{invisible}. 
  
\item Second, we use a \emph{max register}\footnote{
    Recall that a max register retains the largest value written to it. 
    Wait-free and linearizable max registers can be implemented from 
    atomic read/write registers with linear step complexity
    \cite{JACM_monotone}.}  $M$  to store the  current
  smallest index $widx$ of the still valid sliding registers.
  A \wrt{} thus starts by retrieving this index from $M$, 
  writes its $w$-tuple in $SLR[widx]$, and finally updates $M$ with the
  new index  $widx+1$. This ensures that, if several  \wrt{} operations
  post $w$-tuples in the same sliding register $SLR[x]$, they are
  concurrent.  
  
  In fact,  in addition, $M$ stores an
  $m$-vector $ridx$ of indexes of sliding registers (which is part of
  the helping mechanism described next) and, for convenience, the
  $auditset$ of the values written whose set of readers is already completely determined
  (the set of readers of each visible value $v_x$,  for $x < widx$).
  Triples $(widx, ridx, auditset)$ are ordered by their first field.
 An \adt{} operation therefore reads $M$, and for $M.widx=x$,  
 gets the \emph{definitive} readers of values $v_{x'}, x' \leq x-2$ from $M.auditset$, to which it adds the possibly \emph{non-definitive} readers of $v_{x-1}$ by reading $SLR[x]$. 
\item Finally, a helping mechanism is used to make sure that \rd{}
  operations are wait-free. The set of readers of the visible value $v_x$ (that appears in
  the first $w$-tuple stored in $SLR[x]$) are the processes whose
  identifiers are before any $w$-tuple in  $SLR[x+1]$ \emph{or} are
  in the helping set $h$ of the first $w$-tuple of $SLR[x+1]$. An
  additional array $H$ indicates, for each reader, the index of the
  latest sliding register in which it attempted to write its identifier
  before any $w$-tuple.  The array $ridx$ (stored, as seen above, in
  the max register $M$), 
  records for each reader the highest index of a sliding register in
  which it has received help (that is, the identifier of the reader is
  included in the helping set of the first $w$-tuple of that sliding
  register). A writer $p_j$ hence determines if a given  reader $p_i$ needs
  help by comparing $H[i]$ and $M.ridx[i]$. If $M.ridx[i] < H[i]$,
  $p_i$ has an ongoing \rd{} operation, and $i$ is therefore added to the
  helping set of the $w$-tuple of the writer. Similarly, reader $p_i$
  discovers if it has received help by comparing $M.ridx[i]$ with the
  index of the latest sliding register in which it writes its
  identifier. 
\end{itemize}

Hence, besides $w$-tuples, a given sliding register  $SLR[x]$ may also include identifiers $i$ of readers, 
with the following meaning: 
\begin{enumerate}
\item If $i$ appears \emph{before} the first $w$-tuple in $SLR[x]$,
then $p_i$'s \rd{} returns $v_{x-1}$, 
which is the value stored in the first $w$-tuple in $SLR[x-1]$.

\item If $i$ appears in the helping set of the first $w$-tuple in $SLR[x]$, 
then $p_i$'s \rd{} also returns $v_{x-1}$ as in the previous case.

\item If $i$ appears \emph{after} the first $w$-tuple, 
    then it is too late and fails to read $v_{x-1}$.

\end{enumerate}

The reader processes satisfying Cases (1) and (2) are said to be \emph{recorded}
in $SLR[x]$.
The function {\sc readers} in Algorithm~\ref{alg:mw_auditable_aux}
returns the ids of readers recorded in a sliding register (which is
used, in particular, to update the audit set). 
The processes satisfying Case (2) are called \emph{helped} in $SLR[x]$.

Regarding Case (3), we prove that after at most two failed attempts,
a reader $p_i$ receives help, or succeeds in appearing before  any $w$-tuple in a sliding register. The algorithm also ensures that a \rd{} operation is helped at most once. Before an attempt to write its id into $SLR[x]$, reader $p_i$  knows, 
by checking if $ridx[i]  \geq H[i]$, if it has already received help. 
If this is the case, it directly returns the corresponding value %$v_{x'-1}$ 
without updating $H[i]$.

\subsection{Proof of Correctness} 

In this section, we prove Theorem~\ref{thm:mrmw_bounded},
starting with basic properties.
We show that each operation returns within $O(m+n)$ of its own steps,
and then explain how to construct a sequential 
execution $\lambda$ that contains all completed operations in an execution $\alpha$, 
and some pending operations.
Lemma~\ref{lem:rt_order} shows that 
$\lambda$ preserves the real-time order between operations in $\alpha$. 
By Lemma~\ref{lem:read_last}, Lemma~\ref{lem:adt_complete}, and   Lemma~\ref{lem:adt_accurate}, $\lambda$ is a sequential
execution of a $m$-reader, $n$-writer auditable register.  

To proceed with the detailed proofs, 
fix a finite execution $\alpha$ of the algorithm.
We start with basic properties. 
Lemma~\ref{lem:write_once} 
shows that that each sliding register retains the complete history 
of the $\mathsf{write}$s applied to it. 

\begin{restatable}{lemma}{writeonce}
\label{lem:write_once}
Each process applies at most one $\mathsf{write}$ to $SLR[k]$, 
 for any $k \geq 0$, 
\end{restatable}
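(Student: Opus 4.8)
The plan is to track, for each process, every point in the algorithm where it issues a primitive $\mathsf{write}$ to some sliding register $SLR[k]$, and argue that for a fixed $k$ no process reaches such a point twice. First I would enumerate the write sites: a reader $p_i$ writes its own identifier $i$ only at line~\ref{l:rd_do_att} (inside the repeat loop of \textsc{read}), and a writer $p_j$ writes its $w$-tuple only at line~\ref{l:w_write_SLR} of \textsc{write}. No other line applies a $\mathsf{write}$ to any $SLR[\cdot]$ (the helping code only issues $M.\mathsf{writeMax}$ and $SLR[\cdot].\mathsf{read}$). So I must show that the value of the index variable used at each of these two sites is strictly increasing across the repetitions that can occur, and that distinct invocations by the same process use disjoint ranges of indices.

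For the writer, this is immediate: \textsc{write} executes line~\ref{l:w_write_SLR} exactly once per invocation, using $widx = M.widx$ read at line~\ref{l:w_read_M}, and then advances $M$ to $widx+1$ at line~\ref{l:w_write_M}. Since $M$ is a max register, its first field is nondecreasing over the whole execution, so any later \textsc{write} by the same $p_j$ reads a value $widx' \geq widx+1 > widx$, guaranteeing that $p_j$ never writes to the same $SLR[k]$ twice. For the reader the argument is the crux: within a single \textsc{read} the index $lsr$ used at line~\ref{l:rd_do_att} is set at line~\ref{l:rd_annouce_att} to the current $M.widx$, and I must show that if the repeat loop iterates, $lsr$ strictly increases between successive iterations. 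The key observation is that the loop continues (the until condition at line~\ref{l:rd_att_success} fails) only when $i \notin \textsc{readers}(window)$, meaning that after $p_i$ wrote $i$ into $SLR[lsr]$ some $w$-tuple already preceded $i$; combined with the fact that a $w$-tuple in $SLR[lsr]$ forces $M.widx$ to have been advanced to at least $lsr+1$, the next read of $M$ at the top of the loop yields a strictly larger $widx$, hence a strictly larger $lsr$. Across distinct \textsc{read} invocations by the same $p_i$, the persisted local variable $lsr$ and the monotonicity of $M.widx$ again ensure indices never repeat.

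I would therefore structure the proof as two claims. Claim one: within one invocation, the sequence of index values at which a process issues a sliding-register $\mathsf{write}$ is strictly increasing. Claim two: the index used by a new invocation strictly exceeds every index used by the same process's previous invocations, using that $lsr$ is carried over and that $M.widx$ only grows. Together these show each process applies at most one $\mathsf{write}$ to any fixed $SLR[k]$.

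The main obstacle I expect is the reader's intra-invocation strict increase: I must rule out the degenerate possibility that the repeat loop spins on the same $lsr$ forever. This requires the linking lemma that whenever $SLR[lsr]$ contains a $w$-tuple (which is exactly the situation causing the until-test to fail after $p_i$'s own $i$ was placed after it), the writer responsible for that $w$-tuple has already, or will have, pushed $M.widx$ past $lsr$ via its $M.\mathsf{writeMax}(widx+1,\ldots)$ at line~\ref{l:w_write_M} or the readers' helping updates at lines~\ref{l:rd_help_M1} and~\ref{l:rd_help_M2}. I would isolate this as a short auxiliary fact—\emph{if $SLR[k]$ contains a $w$-tuple at some configuration, then $M.widx > k$ from that point on}—and then the reader's reread of $M$ at line~\ref{l:rd_M} necessarily returns $widx > lsr$, forcing the strict increase. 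Verifying that this auxiliary fact is maintained as an invariant across all the interleavings of writers' and helpers' updates to $M$ is the delicate, interleaving-sensitive part; the rest is bookkeeping.
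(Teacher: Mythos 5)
Your overall decomposition (enumerate the two write sites, dispose of the writer case by its own advance of $M$, reduce the reader case to strict growth of the index used at line~\ref{l:rd_do_att}, both within one invocation and across invocations) is the same as the paper's. However, the auxiliary fact you propose to hang the reader case on --- ``if $SLR[k]$ contains a $w$-tuple at some configuration, then $M.widx > k$ from that point on'' --- is false, and this is a genuine gap. Between a writer's $SLR[widx].\mathsf{write}(\mathsf{w},\ldots)$ at line~\ref{l:w_write_SLR} and its later $M.\mathsf{writeMax}(widx+1,\ldots)$ at line~\ref{l:w_write_M}, there are configurations in which $SLR[k]$ already holds a $w$-tuple while $M.widx = k$; these are exactly the $E_\ell$ phases of Lemma~\ref{lem:phase}, where $M.widx = \mathit{WIDX}$. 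So the ``delicate, interleaving-sensitive'' verification you anticipate cannot succeed, and the phrase ``has already, or will have, pushed $M.widx$ past $lsr$'' does not give you what you need: a writer that \emph{will} eventually advance $M$ is of no use if the reader re-reads $M$ before that happens.

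What actually makes the index strictly increase is self-help, not a global invariant: whenever the until-test at line~\ref{l:rd_att_success} fails, the window the reader just read contains a $w$-tuple, so the reader \emph{itself} has already executed $M.\mathsf{writeMax}(lsr+1,\ldots)$ at line~\ref{l:rd_help_M2} in that same iteration, before it re-reads $M$; hence the next $widx$ exceeds $lsr$ regardless of what the responsible writer has done. The same issue recurs across invocations, where your appeal to ``monotonicity of $M.widx$'' is too weak: monotonicity yields only $widx' \geq x'$ for a previously used index $x'$, not strict inequality. (Concretely, a previous \rd{} can exit the loop at line~\ref{l:rd_return_no_help} from a window containing no $w$-tuple, having performed no $\mathsf{writeMax}$, so $M.widx$ may still equal the index it wrote to.) Strictness there comes from the silent-return check at line~\ref{l:no_new_write} together with the pre-loop help at lines~\ref{l:new_write}--\ref{l:rd_help_M1}: the reader writes again only after seeing a $w$-tuple in $SLR[lsr]$ and, if $M.widx$ still equals $lsr$, itself pushing it to $lsr+1$ before entering the loop. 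This is precisely how the paper's proof argues both steps; your proof goes through once the false invariant is replaced by these two self-help observations.
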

\begin{proof}
  \adt{} operations only read shared objects.
  A \wrt{} operation $op$ by $p_j$ applies a single $\mathsf{write}$ to some sliding register $SLR[x]$ (line~\ref{l:w_write_SLR}) after reading $x$ from  $M.widx$ (line~\ref{l:w_read_M}).
  After $op$ completes (line~\ref{l:w_write_M}), $M.widx > x$ . 
  
  A \rd{} operation $op$ by process $p_i$ makes several attempts to place $i$ before every $w$-tuple in some sliding register. 
  Each attempt (an iteration of the repeat loop) applies at most one  $\mathsf{write}$ to a distinct sliding register (line~\ref{l:rd_do_att}). Process $p_i$ starts by reading $x$ from $M.widx$ before  writing $i$ to $SLR[x]$, and, if the attempt is not successful (the sequence  read from $SLR[x]$ contains an invalidating  $w$-tuple preceding $i$), the code makes sure that $M.widx > x$ (line~\ref{l:rd_help_M2}) 
  before starting a new attempt. 

  It remains to ensure that the first attempt of $op$  does not write to some sliding  register $SLR[x']$ in which a previous operation $op'$ by $p_i$ has already written. 
  The local persistent variable $lsr$ keeps track of the highest index of a sliding register in which $p_i$ has written, and hence,
  $lsr \geq x'$ when $op$ starts.  
  In $op$, $p_i$ enters the repeat loop only if $SLR[lsr]$ contains an invalidating tuple $(\mathsf{w},\_,\_,\_)$ (line~\ref{l:no_new_write}), and if this the case, makes sure that $M.widx > lsr \geq x'$ (line~\ref{l:rd_help_M1}) before the loop.
\end{proof}

The max register $M$ stores a triple $(widx,ridx,auditset)$, 
where $ridx$ is a $m$-vector and $audiset$ a set of (process,value) pairs. 
Triples are ordered in increasing order of their first element. 
At any point in the execution, the triple stored in $M$ is thus 
a triple with the largest first member written to $M$.
We also partially order $m$-vectors as follows: $ridx \leq ridx'$ if and only $\forall i \in \{1,\ldots,m\}, ridx[i] \leq ridx'[i]$. 

\begin{proposition}
\label{prop:widx}
  The successive values of $M.widx$ are $0,1,2,\ldots$.
\end{proposition}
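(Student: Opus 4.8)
The plan is to show two things: (i) every value placed into the first component of $M$ equals $x+1$, where $x$ is a value that the writing process read from $M.widx$ earlier in the same operation; and (ii) by the max-register semantics, such a read certifies that $x$ was already stored in $M$ (or is the initial value $0$). Together with monotonicity of $M$, these imply the claim by downward induction.

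First I would verify (i) by inspecting the three call sites of $M.\mathsf{writeMax}$. At line~\ref{l:w_write_M} a writer uses argument $widx+1$, where $widx$ was read from $M.widx$ at line~\ref{l:w_read_M}. In the first reader helping branch, line~\ref{l:rd_help_M1} uses $lsr+1$; but that branch is guarded by the test $widx = lsr$ with $widx$ read from $M$ at line~\ref{l:new_write}, so $lsr+1 = widx+1$ with $widx$ read from $M.widx$. In the second reader helping branch, line~\ref{l:rd_help_M2} uses $lsr+1$, where $lsr$ was set to $widx$ at line~\ref{l:rd_annouce_att} and $widx$ was read from $M.widx$ at line~\ref{l:rd_M}. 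Hence in every case the stored value is (a value read from $M.widx$)$\,+\,1$.

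Next I would invoke the semantics of the max register $M$: its first component is non-decreasing along the execution, and any value returned by a read of $M$ was previously written to $M$ or is the initial value $0$. Combining this with (i) gives the key step: for any $k \geq 1$, if $k$ is ever stored in $M.widx$, then the process that wrote it first read $k-1$ from $M.widx$, so $k-1$ was stored in $M.widx$ strictly earlier.

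Finally I would conclude by induction. Let $S$ be the set of values ever held by $M.widx$. The initial value $0$ lies in $S$, and the previous step shows $S$ is downward closed: $k \in S$ with $k \geq 1$ implies $k-1 \in S$. Therefore $S$ is an initial segment $\{0,1,2,\ldots\}$ (finite or infinite). Because $M.widx$ is non-decreasing, these values are attained in strictly increasing order, so the successive distinct values of $M.widx$ are exactly $0,1,2,\ldots$. I expect the only delicate part to be the bookkeeping in (i)—specifically, tracing how $lsr$ and $widx$ acquire their values in the two reader helping branches to confirm that the argument of each $\mathsf{writeMax}$ is one more than a value freshly read from $M.widx$; the inductive argument itself is then immediate.
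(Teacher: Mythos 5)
Your proof is correct and takes essentially the same approach the paper relies on: the paper states this proposition without an explicit proof, treating it as immediate from the fact---which you verify at all three call sites---that every $\mathsf{writeMax}$ applied to $M$ has first argument one greater than a value just read from $M.widx$, combined with the max-register semantics. Your downward-induction packaging of this observation is precisely the reasoning the paper invokes elsewhere (e.g., in the base case of the proof of Lemma~\ref{lem:same_ridx}), so there is no substantive difference.
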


The function \textsc{readers} extracts from a sequence read from some 
sliding register $SLR[x]$ a set of processes. 
This set does not change once a $w$-tuple has been written to $SLR[x]$. 

\begin{proposition}
\label{lem:same_readers}
If $win$ and $win'$ are two sequences read from $SLR[x]$ after  $SLR[x].\mathsf{write}(\mathsf{w},\_,\_,\_)$ has been applied,
then $\textsc{readers}(win) = \textsc{readers}(win')$.
\end{proposition}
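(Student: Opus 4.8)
The plan is to exploit the fact that an $(m+n)$-sliding register whose window has size exactly $m+n$ never discards any value, as long as at most $m+n$ $\mathsf{write}$s are ever applied to it. First I would invoke Lemma~\ref{lem:write_once}: each of the $m$ readers and $n$ writers applies at most one $\mathsf{write}$ to $SLR[x]$, so the total number of $\mathsf{write}$s ever applied to $SLR[x]$ is at most $m+n$. Since the window of $SLR[x]$ has size $m+n$, no written element is ever evicted; hence the sequence stored in $SLR[x]$ only grows by appending at its end, and every prefix of it is stable across all subsequent reads.

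Next I would pin down exactly what \textsc{readers} depends on. Inspecting its code in Algorithm~\ref{alg:mw_auditable_aux}, the returned set is determined entirely by two things: (i)~the reader identifiers that appear before the first $w$-tuple in the window, and (ii)~the helping set $h$ of that first $w$-tuple. Both are functions of the prefix of the stored sequence up to and including the first $w$-tuple.

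Now fix the $\mathsf{write}$ that appends the earliest $w$-tuple $\tau=(\mathsf{w},id,v,h)$ to $SLR[x]$; by hypothesis both $win$ and $win'$ are read after $\tau$ is in place. Because no eviction occurs and appends preserve order, $\tau$ is the first $w$-tuple in every sequence read after $\tau$ is written, and the set of identifiers preceding $\tau$ is exactly the fixed set of readers that wrote to $SLR[x]$ before $\tau$. Consequently the prefix of the stored sequence up to and including $\tau$ is identical in $win$ and $win'$, so both components (i) and (ii) coincide, yielding $\textsc{readers}(win)=\textsc{readers}(win')$.

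The only delicate point is the no-eviction argument, which is precisely where Lemma~\ref{lem:write_once} (bounding the number of distinct $\mathsf{write}$s to $m+n$) meets the window size $m+n$. Once this is established, the stability of \textsc{readers} follows immediately, since it reads only a prefix of the sequence that has already become frozen the moment the first $w$-tuple is written.
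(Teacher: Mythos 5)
Your proof is correct and follows essentially the same route as the paper: invoke Lemma~\ref{lem:write_once} to conclude that the first $w$-tuple written to $SLR[x]$ is the first $w$-tuple in every subsequently read window, with the same set of reader identifiers preceding it, and then observe that $\textsc{readers}$ depends only on that frozen prefix. The only difference is that you spell out explicitly the no-eviction argument (at most $m+n$ writes into a window of size $m+n$), which the paper leaves implicit in its remark that each sliding register ``retains the complete history'' of the writes applied to it.
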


\begin{proof}
Let $(\mathsf{w},id,val,h)$ be the first $w$-tuple written to $SLR[x]$. 
By Lemma~\ref{lem:write_once}, $(\mathsf{w},id,val,h)$ is the first $w$-tuple in both $win$ and $win'$, 
and the identifier $j$ of a reader precedes $(\mathsf{w},id,val,h)$ in $win$ if and only if $j$ is also before this $w$-tuple in $win'$.  
By the code, $\textsc{readers}(win) = \textsc{readers}(win')$. 
\end{proof}

The following lemma is proved by induction on $k$.

\begin{restatable}{lemma}{sameridx}
\label{lem:same_ridx}
Suppose $\mathsf{writeMax}(k, iv,\_)$ and $\mathsf{writeMax}(k', iv',\_)$ are applied to $M$. 
(1) if $k = k'$ then $iv = iv'$ and, (2) if $k < k'$ then $iv \leq iv'$. 
\end{restatable}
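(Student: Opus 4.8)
The plan is to prove a single, slightly strengthened statement by induction on $k$, from which both parts of the lemma follow immediately. Concretely, I would show that for every $k \geq 0$ there is a well-defined vector $\rho_k$ such that (i) every $\mathsf{writeMax}(k, iv, \_)$ applied to $M$ satisfies $iv = \rho_k$, (ii) $\rho_k[j] \leq k-1$ for all $j$, and (iii) $\rho_{k-1} \leq \rho_k$, where I set $\rho_0 := [-1,\ldots,-1]$, the initial $ridx$ stored in $M$ (so (ii) holds trivially at $k=0$). Part (1) of the lemma is exactly claim (i); part (2) follows by composing (iii) along the chain $\rho_k \leq \rho_{k+1} \leq \cdots \leq \rho_{k'}$.

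The first step is to normalize how each $\mathsf{writeMax}$ is produced. Inspecting the three call sites (the writer at line~\ref{l:w_write_M}, and the two reader helping paths at lines~\ref{l:rd_help_M1} and~\ref{l:rd_help_M2}), every $\mathsf{writeMax}(k, iv, \_)$ is performed by an operation that first read a triple $(widx, ridx, \_)$ from $M$ with $widx = k-1$ (note $k \geq 1$ always, since $widx \geq 0$), and then formed $iv$ from $ridx$ by setting $iv[j] \gets k-1$ for every $j \in \textsc{readers}(window)$, where $window$ is a sequence read from $SLR[k-1]$ \emph{after} a $w$-tuple had been written to it, and leaving $iv[j] = ridx[j]$ for all other $j$. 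In each case the code only reaches the update/writeMax lines after confirming that a $w$-tuple is present in $window$, which is exactly what guarantees $window$ is read post-$w$-tuple.

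For the inductive step I would assume the strengthened claim for all indices $< k$ and argue as follows. Since the triple read from $M$ has first field $k-1$, by the max-register semantics it was either the initial triple (when $k-1 = 0$) or was written by some $\mathsf{writeMax}(k-1, \cdot, \_)$; in either case the induction hypothesis (claim (i) at index $k-1$) forces its second field to be exactly $\rho_{k-1}$, so $ridx = \rho_{k-1}$. By Proposition~\ref{lem:same_readers}, the set $\textsc{readers}(window)$ equals the fixed set $\textsc{readers}(SLR[k-1])$, independent of which operation performs the write; this rests on Lemma~\ref{lem:write_once}, which ensures the $w$-tuples of $SLR[k-1]$ are never removed. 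Hence $iv$ is obtained from $\rho_{k-1}$ by a deterministic update that does not depend on the operation, which defines $\rho_k$ and proves (i). Claim (ii) then follows because every entry of $\rho_k$ is either an entry of $\rho_{k-1}$ (which is $\leq k-2$ by the hypothesis) or equals $k-1$; and (iii) follows because the only entries changed are raised from $\rho_{k-1}[j] \leq k-2$ up to $k-1$.

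The main obstacle I anticipate is the tie-breaking issue in the max register: triples stored in $M$ are ordered by their first field only, so a priori two writeMax calls sharing a first field but carrying different $ridx$ vectors would render ``the triple with first field $k-1$'' ambiguous. The resolution is precisely that claim (i) is an invariant carried by the induction, so at index $k-1$ there is no ambiguity and reading $widx = k-1$ pins the second field to $\rho_{k-1}$. Getting the ordering of the claims right — so that (i) at level $k-1$ is available when analysing the read performed by a level-$k$ write — is the delicate part, while the bound (ii) is the auxiliary invariant that makes the monotonicity step (iii) go through.
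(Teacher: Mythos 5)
Your proof is correct and takes essentially the same route as the paper's: induction on $k$, using part (1) at level $k-1$ to pin down the $ridx$ vector read from $M$ (and the observation that no $\mathsf{writeMax}(0,\_,\_)$ exists for the base case), together with Proposition~\ref{lem:same_readers} (resting on Lemma~\ref{lem:write_once}) to make the readers-set update deterministic, with monotonicity because entries are only ever raised. The only difference is cosmetic: you carry the explicit auxiliary invariant $\rho_k[j]\leq k-1$, which the paper leaves implicit when it asserts $iv \geq ridx_k$.
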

\begin{proof}
  The proof is by induction. 
  For the base case ($k=0$), the lemma is immediate since no $\mathsf{writeMax}(0, \_, \_)$ is applied to $M$. 
  Indeed, by the code, $p$ applies $M.\mathsf{writeMax}(k,\_, \_)$ (line~\ref{l:rd_help_M1}, line~\ref{l:rd_help_M2}, or line~\ref{l:w_write_M}) after reading $k-1$ from $M.widx$ (line~\ref{l:new_write}, line~\ref{l:rd_M}, or line~\ref{l:w_read_M}, respectively). 
  Proposition~\ref{prop:widx} implies that $M.widx \geq 0$, 
  and therefore, there is no $M.\mathsf{writeMax}(0,\_,\_)$.

  For the induction step, assume that $k \geq 0$, 
  and assume that the lemma holds for any $\ell \leq k$.
  By the induction hypothesis on (1),
  $M.ridx$ holds a fixed vector $ridx_\ell$, as long as $M.widx = \ell$. 
  Consider the first $\mathsf{writeMax}(k+1,iv,\_)$, by some process $p$,   
  that changes $M.widx$ from $k$ to $k+1$. 
  Before this step, $p$ reads $(k,ridx_{k},\_)$ from $M$ (line~\ref{l:rd_help_M1}, line~\ref{l:rd_help_M2}, or line~\ref{l:w_write_M}), 
  and $SLR[k]$ already contains a $w$-tuple 
  (as seen by $p$ after reading $SLR[k]$, if $p$ is performing a \rd{} (line~\ref{l:rd_last_wd} or line~\ref{l:rd_do_att}), or the tuple is written by $p$ if it is performing a \wrt{} (line~\ref{l:w_write_SLR})). 
  Then, $iv$ depends solely on $ridx_k$ and the current set $readers$ in  $SLR[k]$ that is read by $p$.  
  More precisely, this set is extracted from a sequence  
  that contains a $w$-tuple. For each $i \in \{1,\ldots,m\}$, $iv[i] =ridx_k[i]$  if $i$ is not in the reader set, and $iv[i] = k$ otherwise.  
  Note that $iv \geq ridx_k$, implying (2). 

  Consider now another $\mathsf{writeMax}(k+1,iv',\_)$. %, applied by some process $p'$. 
  As for the first such step, $iv'$ is inferred from $iv_k$ and a reader set $readers'$ extracted from a value read from $SLR[k]$ after a $w$-tuple has been written to $SLR[k]$. 
  By Proposition~\ref{lem:same_readers}, $readers = readers'$ 
  and hence $iv' =iv$. 
\end{proof}

By Proposition~\ref{prop:widx}, the successive values of $M.widx$ are $0,1,\ldots$
By Lemma~\ref{lem:same_ridx}(1), $M.ridx$ remains the same  while $M.widx$  does not change. 
Lemma~\ref{lem:same_ridx}(2) implies that the successive vectors in $M.ridx$ are ordered and form an increasing sequence. 

Let  $\mathit{WIDX}$ denote the  highest index of a sliding register 
to which a $w$-tuple has been written. 
That is, at the end of a finite execution $\alpha$, 
$\mathit{WIDX} = k$ if and only a $w$-tuple was written in $SLR[k]$, 
and no $w$-tuple was written in $SLR[k']$, for any $k' > k$. 
When $\mathit{WIDX}$ is changed to $k+1$ as a result of some process $p$ applying $\mathsf{write}(\mathsf{w},\_,\_,\_)$ to $SLR[k+1]$, $M.widx \geq k+1$. 
Indeed, before writing a $w$-tuple to $SLR[k+1]$, 
$p$ has read $k$ from $M.widx$ in line~\ref{l:w_read_M}. 
Observe also that when $M.widx$ is changed from $k$ to $k+1$ 
by some process $p$, $M.widx \geq k$. 
Indeed, before applying $M.\mathsf{writeMax}(k+1,\_, \_)$, 
$p$ reads a sequence from $SLR[k]$ that contains a $w$-tuple (line~\ref{l:rd_last_wd} or line~\ref{l:rd_val}, $p$ is performing a \rd{}), 
or has written a $w$-tuple to $SLR[k]$ 
(line~\ref{l:w_write_SLR}, $p$ is performing a \wrt{}). 
This implies that $M.widx$ is always in $\{\mathit{WIDX},\mathit{WIDX}-1\}$,
which shows:

\begin{proposition}
\label{prop:lower_invalidated}
For every $x \geq 0$, 
if $SLR[x]$ is accessed then $SLR[x-1]$ contains a $w$-tuple. 
\end{proposition}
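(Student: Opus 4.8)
The proposition asserts a structural invariant: whenever $SLR[x]$ is accessed (for $x \geq 0$), the preceding register $SLR[x-1]$ already contains a $w$-tuple. My plan is to prove this as a consequence of the tighter invariant, stated in the paragraph just before the proposition, that $M.widx \in \{\mathit{WIDX}, \mathit{WIDX}-1\}$ at all times, where $\mathit{WIDX}$ is the highest index holding a $w$-tuple. So the real work is establishing that two-valued invariant, after which the proposition follows by inspecting how each access to $SLR[x]$ is reached in the code.

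**Establishing the two-sided bound on $M.widx$ versus $\mathit{WIDX}$.** I would argue by a joint monotonicity/induction on the execution, tracking the two quantities together. First, $M.widx \leq \mathit{WIDX}$: every write $M.\mathsf{writeMax}(k+1,\_,\_)$ that raises $M.widx$ to $k+1$ is preceded (by the same process $p$) by evidence that $SLR[k]$ already holds a $w$-tuple — either $p$ read such a sequence from $SLR[k]$ on line~\ref{l:rd_last_wd} or line~\ref{l:rd_val} (reader), or $p$ itself wrote the $w$-tuple on line~\ref{l:w_write_SLR} (writer). Hence $\mathit{WIDX} \geq k$ whenever $M.widx$ reaches $k+1$, so $M.widx \leq \mathit{WIDX}$ throughout. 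Second, $\mathit{WIDX} \leq M.widx$: whenever a process applies $\mathsf{write}(\mathsf{w},\_,\_,\_)$ to $SLR[k+1]$ (raising $\mathit{WIDX}$ to $k+1$), it first read $k$ from $M.widx$ on line~\ref{l:w_read_M}, and $M.widx$ is monotone nondecreasing by Proposition~\ref{prop:widx}; therefore $M.widx \geq k+1$ once a $w$-tuple lands in $SLR[k+1]$. Combining, $\mathit{WIDX} - 1 \leq M.widx \leq \mathit{WIDX}$, which is exactly the claimed two-valued invariant. These are precisely the two observations the excerpt records in the paragraph preceding the statement, so I would simply organize them as the proof body.

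**Deriving the proposition.** Given the invariant, I would check each place where the code accesses $SLR[x]$ for $x \geq 0$ and confirm that $x$ never exceeds $\mathit{WIDX}$ at the moment of access, so that $x - 1 \leq \mathit{WIDX} - 1 < \mathit{WIDX}$ and consequently $SLR[x-1]$ — being at an index strictly below the maximal $w$-tuple index, or equal to it — already contains a $w$-tuple. Concretely, reads and writes of $SLR$ are reached after reading an index from $M.widx$ (e.g.\ a writer uses $widx$ from line~\ref{l:w_read_M} on line~\ref{l:w_write_SLR}; a reader uses $lsr \gets widx$ on line~\ref{l:rd_annouce_att} before line~\ref{l:rd_do_att}), and since the accessed index is at most the current $M.widx \leq \mathit{WIDX}$, its predecessor index is at most $\mathit{WIDX} - 1$, which by definition of $\mathit{WIDX}$ carries a $w$-tuple.

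**Main obstacle.** The delicate point is the bookkeeping between the instantaneous value read from $M.widx$ and the global $\mathit{WIDX}$ at the (possibly later) moment of the actual $SLR$ access, since other processes may advance both quantities in between. Here I would lean on the monotonicity of $M.widx$ (Proposition~\ref{prop:widx}) and the fact that $\mathit{WIDX}$ is likewise nondecreasing, so the two-sided bound, once established as an invariant that holds in every configuration, transfers cleanly across the intervening steps. I do not expect serious new difficulty beyond carefully pairing each $SLR$ access with the bound that is in force at that instant.
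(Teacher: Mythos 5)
Your overall route is the same as the paper's: derive the proposition from the two event-level observations tying $M.widx$ to $\mathit{WIDX}$. But your write-up contains a genuine off-by-one error that makes its central claim false. The assertion that ``$M.widx \leq \mathit{WIDX}$ throughout'' does not hold: in the initial configuration $M.widx = 0$ while $\mathit{WIDX} = -1$ (only $SLR[-1]$ holds a $w$-tuple), and more generally it fails whenever $M.widx$ has been advanced to $k+1$ but no $w$-tuple has yet been written to $SLR[k+1]$ --- the phases $D_\ell$ of Lemma~\ref{lem:phase}, in which $M.widx = \mathit{WIDX}+1$. The justification you give is a non sequitur: from ``$\mathit{WIDX} \geq k$ whenever $M.widx$ reaches $k+1$'' one can only conclude $M.widx \leq \mathit{WIDX}+1$, not $M.widx \leq \mathit{WIDX}$. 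Your two one-sided bounds are also inconsistent with your own conclusion: taken literally they would force $M.widx = \mathit{WIDX}$, yet you conclude $\mathit{WIDX}-1 \leq M.widx \leq \mathit{WIDX}$. The correct invariant is $\mathit{WIDX} \leq M.widx \leq \mathit{WIDX}+1$. (In fairness, the paper's own sentence ``$M.widx$ is always in $\{\mathit{WIDX},\mathit{WIDX}-1\}$'' carries the same typo; Lemma~\ref{lem:phase} shows the intended reading. Likewise the phrase ``read $k$ from $M.widx$'' should be $k+1$, since a \wrt{} applies its $\mathsf{write}$ to $SLR[widx]$ for the very value $widx$ it obtained at line~\ref{l:w_read_M}; your second bound inherits this slip, and as literally written does not yield $M.widx \geq k+1$.)

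The error matters because your final derivation leans on exactly the false half: you argue $x \leq M.widx \leq \mathit{WIDX}$, hence $x-1 \leq \mathit{WIDX}-1$. The repair is straightforward and lands you back on the paper's argument: with the correct invariant, $x \leq M.widx \leq \mathit{WIDX}+1$, so $x-1 \leq \mathit{WIDX}$, and $SLR[x-1]$ holds a $w$-tuple. Note that this last step needs slightly more than ``the definition of $\mathit{WIDX}$,'' which only concerns the maximal index: when $x-1 < \mathit{WIDX}$ you need that every smaller index also carries a $w$-tuple (no gaps). This follows from Proposition~\ref{prop:widx} --- for $M.widx$ to have passed any $k$, some process must have seen (or written) a $w$-tuple in $SLR[k]$ --- together with Lemma~\ref{lem:write_once}, which guarantees that $w$-tuples never fall out of the $(m+n)$-window; the paper makes this no-gap property explicit immediately after Lemma~\ref{lem:phase}.
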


% \begin{proof}
%   Before  a write or a read is applied to  $SLR[x]$,  $x$ is read from $M.widx$.
%   Recall that $\mathit{WIDX}$ denotes the highest index of sliding register to which a $w$-tuple has been written. By  Lemma~\ref{lem:phase}, $M.widx \in \{\mathit{WIDX},\mathit{WIDX}-1\}$ is an invariant in $\alpha$ which proves the Lemma.
% \end{proof}

Combining these observations with Lemma~\ref{lem:same_ridx}(2), we have:

\begin{lemma}
\label{lem:phase}
For some $k \geq 0$, the finite execution $\alpha$ can be written as 
either $D_0\rho_0E_0\mu_1D_1\rho_1E_1\ldots{}\mu_kD_k$ 
or $D_0\rho_0E_0\mu_1D_1\rho_1E_1\ldots{}\mu_kD_k\rho_kE_k$, where:
  \begin{itemize}
  \item $\rho_\ell$ is the first  step that writes a $w$-tuple to $SLR[\ell]$ (applied by a writer, line~\ref{l:w_write_SLR}), $\mu_\ell$ is the step that changes $M.widx$ from $\ell-1$ to $\ell$ (applied within a \rd{}, line~\ref{l:rd_help_M2} or line~\ref{l:rd_help_M1}, or a \wrt{}, line~\ref{l:w_write_M}). 
  \item in any configuration in $D_\ell$, $M.widx = \ell = \mathit{WIDX}+1$, and in any configuration in $E_\ell$, $M.widx = \ell = \mathit{WIDX}$
  \end{itemize}
\end{lemma}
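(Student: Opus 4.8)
The plan is to exhibit the decomposition by cutting $\alpha$ at the steps where the two monotone quantities $M.widx$ and $\mathit{WIDX}$ change, and to show that these change-steps strictly alternate. First I would pin down the two sequences of change-steps. By Proposition~\ref{prop:widx} the successive values of $M.widx$ are $0,1,2,\ldots$, and since $M$ is a max register each value is reached at a unique step; I name $\mu_\ell$ the step that changes $M.widx$ from $\ell-1$ to $\ell$ (one of line~\ref{l:rd_help_M1}, line~\ref{l:rd_help_M2}, line~\ref{l:w_write_M}). Symmetrically, $\mathit{WIDX}$ equals $-1$ in the initial configuration (only $SLR[-1]$ holds a $w$-tuple), and by Proposition~\ref{prop:lower_invalidated} a $w$-tuple can be written to $SLR[\ell]$ only once $SLR[\ell-1]$ already contains one; hence $\mathit{WIDX}$ also grows one unit at a time through $-1,0,1,\ldots$, and the first $w$-tuple posted to $SLR[\ell]$ is a unique step $\rho_\ell$ (line~\ref{l:w_write_SLR}).

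Next I would establish the strict interleaving $\rho_{\ell-1}\prec\mu_\ell\prec\rho_\ell$. For $\mu_\ell\prec\rho_\ell$: the writer that posts the first $w$-tuple to $SLR[\ell]$ must have read $widx=\ell$ from $M$ at line~\ref{l:w_read_M} before reaching line~\ref{l:w_write_SLR}, so $M.widx$ had already attained $\ell$, i.e., $\mu_\ell$ occurs earlier. For $\rho_{\ell-1}\prec\mu_\ell$: I would invoke the observation preceding the lemma that the step raising $M.widx$ to $\ell$ either reads a sequence from $SLR[\ell-1]$ already containing a $w$-tuple, or is itself the writer that just wrote such a tuple; in either case $\mathit{WIDX}\ge\ell-1$ at that moment, so $\rho_{\ell-1}$ has already happened.

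Combining the two directions, all change-steps occur in the single total order $\rho_0\prec\mu_1\prec\rho_1\prec\mu_2\prec\rho_2\prec\cdots$, where $\rho_0$ is enabled from the initial configuration (the writer reads the initial value $widx=0$ and $SLR[-1]$ holds a $w$-tuple). Cutting $\alpha$ at these steps defines the blocks: the stretch before $\rho_0$ is $D_0$, the stretch strictly between $\mu_\ell$ and $\rho_\ell$ is $D_\ell$, and the stretch strictly between $\rho_\ell$ and $\mu_{\ell+1}$ is $E_\ell$. Since neither $M.widx$ nor $\mathit{WIDX}$ changes inside a block, every configuration of $D_\ell$ has $M.widx=\ell$ and $\mathit{WIDX}=\ell-1$ (so $M.widx=\ell=\mathit{WIDX}+1$), and every configuration of $E_\ell$ has $M.widx=\mathit{WIDX}=\ell$. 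Because $\alpha$ is finite, its last change-step is either some $\mu_k$, giving the form $D_0\rho_0E_0\cdots\mu_kD_k$, or some $\rho_k$, giving $D_0\rho_0E_0\cdots\mu_kD_k\rho_kE_k$.

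The hard part will be the alternation argument, and in particular the direction $\rho_{\ell-1}\prec\mu_\ell$: I must exclude the possibility that some step — notably a helping \rd{} executing line~\ref{l:rd_help_M1} or line~\ref{l:rd_help_M2} — pushes $M.widx$ to $\ell$ before any $w$-tuple appears in $SLR[\ell-1]$, which is exactly what Proposition~\ref{prop:lower_invalidated} together with the two pre-lemma observations is designed to prevent. A secondary point requiring care is the uniqueness of each $\mu_\ell$ and $\rho_\ell$, needed so that the blocks are unambiguously defined; this follows from the max-register semantics of $M$ and from $\rho_\ell$ being, by definition, the first $w$-tuple written to $SLR[\ell]$. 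Lemma~\ref{lem:same_ridx}(2) is cited only to record that the $ridx$ component is monotone across blocks; it plays no role in locating the block boundaries, which are fixed purely by the $M.widx$/$\mathit{WIDX}$ relation.
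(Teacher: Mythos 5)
Your proof is correct and follows essentially the same route as the paper, which justifies the lemma via exactly your two interleaving observations (a writer reads $M.widx$ before posting its $w$-tuple, so $\mu_\ell$ precedes $\rho_\ell$; and any step raising $M.widx$ to $\ell$ has first seen or written a $w$-tuple in $SLR[\ell-1]$, so $\rho_{\ell-1}$ precedes $\mu_\ell$). Your write-up merely makes explicit what the paper leaves implicit, namely the uniqueness of the change-steps and the cutting of $\alpha$ into blocks at those steps.
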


Therefore, 
the first step  that writes a $w$-tuple to $SLR[x]$ is preceded 
by steps $\rho_0,\ldots,\rho_{x-1}$ that write $w$-tuples to
$SLR[0],\ldots,SLR[x-1]$, and they are never deleted 
(by Lemma~\ref{lem:write_once}).

Let $x_0$ be the value of the local variable $lsr$ when \rd{} 
operation $op$ by process $p_i$ starts.

\begin{proposition}
  \label{prop:att_x_increase}
  If $op$ enters the repeat loop (line~\ref{l:no_new_write}), then $x_0 < x_1 < x_2 < \ldots$ where $x_k$, $k > 0$, is  the value read from $M.widx$ (line~\ref{l:rd_M})  at the beginning of the $k$th iteration.
\end{proposition}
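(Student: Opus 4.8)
The plan is to prove the two inequalities of the chain separately: the generic step $x_k < x_{k+1}$ for $k \geq 1$, which is the core of the wait-freedom argument, and the base step $x_0 < x_1$, which additionally needs a small invariant relating the persistent local variable $lsr$ to $M.widx$. Throughout I use that $M$ is a max register and that, by Proposition~\ref{prop:widx}, $M.widx$ is non-decreasing and always $\geq 0$.

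For the generic step I would reason as follows. If $op$ starts its $(k+1)$th iteration, then the \textbf{until} guard at line~\ref{l:rd_att_success} failed at the $k$th iteration, i.e.\ $i \notin \textsc{readers}(window_k)$, where $window_k$ is the sequence read from $SLR[x_k]$ at line~\ref{l:rd_do_att} right after $p_i$ wrote its own identifier $i$ there (recall $lsr = x_k$ was set at line~\ref{l:rd_annouce_att}). By Lemma~\ref{lem:write_once} each process writes at most once to $SLR[x_k]$, so at most $m+n$ writes are ever applied to it, its window never overflows, and hence $i \in window_k$. Given the definition of \textsc{readers}, having $i \in window_k$ but $i \notin \textsc{readers}(window_k)$ forces $i$ to appear after the first $w$-tuple of $window_k$ and not to be in that tuple's helping set; in particular $window_k$ contains a $w$-tuple. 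Therefore the test at line~\ref{l:rd_help_write} succeeds and $op$ performs $M.\mathsf{writeMax}(x_k+1,\ldots)$ at line~\ref{l:rd_help_M2}, so that $M.widx \geq x_k+1$ afterwards. Since $M.widx$ is non-decreasing, the value $x_{k+1}$ read from $M.widx$ at the start of the $(k+1)$th iteration (line~\ref{l:rd_M}) satisfies $x_{k+1} \geq x_k + 1 > x_k$.

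For the base step I would first establish the invariant that $M.widx \geq lsr$ whenever a \rd{} by $p_i$ completes, so that $M.widx \geq x_0$ when $op$ starts. This follows by inspecting the exit points: at line~\ref{l:rd_return_no_help}, $lsr$ was last set (line~\ref{l:rd_annouce_att}) to a value read from $M.widx$; at line~\ref{l:rd_return_help}, $lsr \gets ridx[i]$, and, by the analysis in Lemma~\ref{lem:same_ridx}, a component of $M.ridx$ equal to $\ell$ originates from an $M.\mathsf{writeMax}(\ell+1,\ldots)$, whence $M.widx \geq lsr+1$; at line~\ref{l:no_new_write}, $lsr$ is unchanged; and the claim holds trivially at initialization with $lsr=-1$. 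I then split into cases. If $x_0 = -1$, then $x_1 = M.widx \geq 0 > x_0$. Otherwise $x_0 \geq 0$ and, since $op$ enters the loop rather than returning at line~\ref{l:no_new_write}, $SLR[x_0]$ contains a $w$-tuple; reading $widx_0$ from $M$ at line~\ref{l:new_write} gives $widx_0 \geq x_0$ by the invariant, and either $widx_0 = x_0$, in which case line~\ref{l:rd_help_M1} runs $M.\mathsf{writeMax}(x_0+1,\ldots)$ and forces $M.widx \geq x_0+1$, or $widx_0 > x_0$ already. In both situations $M.widx \geq x_0+1$ before the loop, so $x_1 > x_0$.

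I expect the main obstacle to be the base step rather than the generic one: the generic step is an almost immediate consequence of the help issued at line~\ref{l:rd_help_M2} together with monotonicity of $M.widx$, whereas the base step hinges on correctly pinning down the cross-operation invariant $M.widx \geq lsr$ and on the fact—borrowed from the proof of Lemma~\ref{lem:same_ridx}—that a $\mathsf{writeMax}$ setting $ridx[i] = \ell$ is necessarily tagged with first field $\ell+1$. Once that invariant is secured, the remaining case analysis on whether $widx_0 = x_0$ or $widx_0 > x_0$ is routine.
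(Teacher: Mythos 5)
Your proof is correct and follows essentially the same route as the paper, which states this proposition without a separate proof because the argument is already embedded in its proof of Lemma~\ref{lem:write_once}: an unsuccessful attempt at $SLR[x_k]$ forces the $\mathsf{writeMax}(x_k+1,\ldots)$ of line~\ref{l:rd_help_M2}, and the pre-loop help of line~\ref{l:rd_help_M1} pushes $M.widx$ beyond $x_0$. Your write-up is more complete in two respects---the explicit cross-operation invariant $M.widx \geq lsr$ (which the paper tacitly assumes when it claims $M.widx > lsr$ holds before the loop) and the appeal to Lemma~\ref{lem:write_once} to rule out window overflow, so that $i \in window_k$ together with $i \notin \textsc{readers}(window_k)$ forces a $w$-tuple into $window_k$---and both additions are sound.
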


% \begin{proof} % again, almost obvious
% If $op$ enters the repeat loop (line~\ref{l:no_new_write}), 
% then there is a $w$-tuple in $SLR[x_0]$ (line~\ref{l:rd_last_wd}). 
% This implies that $M.widx > x_0$ before the repeat loop starts (lines~\ref{l:rd_update_ridx1}-\ref{l:rd_help_M1}). 
% Assume that $op$ performs the $k$th iteration, $k> 0$, of the repeat loop 
% and let $x_k$ be the value read from $M.widx$ (line~\ref{l:rd_M}) 
% at the beginning of the loop.
% If $SLR[x_k]$ contains a $w$-tuple, lines~\ref{l:rd_help_write}-\ref{l:rd_help_M2}
% ensure that $M.widx \geq x_k+1$; otherwise, the loop terminates.
% This implies that $x_0 < x_1 < x_2 \ldots$.
% \end{proof}

If $op$ does not terminate in the $k$th iteration of the loop
(line~\ref{l:rd_return_help}), 
after reading $(x_k,\_,\_)$ from $M$,
then $x_k$ is written to $H[i]$ (line~\ref{l:rd_annouce_att}) 
before an attempt is made to place $i$ ahead of any $w$-tuple in $SLR[x_k]$. 
We further prove:

\begin{restatable}{lemma}{nohelp}
  \label{lemma:no_help}
  % Let $op$ be read operation by process $p_i$, which starts with $lsr = x_0$.
  If $x_k$ is written to $H[i]$  %read from $M.widx$ in the first iteration of the loop in $op$ (line~\ref{l:rd_M}),
  then  $i \notin~ $\textsc{readers}$(win)$, for any sequence $win$ in any sliding register $SLR[x], x_0 < x < x_k$.
\end{restatable}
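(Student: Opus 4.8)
The plan is to fix the reader operation $op$ by $p_i$ and the moment it writes $x_k$ into $H[i]$, and to split the target set of indices $(x_0,x_k)$ into the \emph{attempt indices} $x_1<\dots<x_{k-1}$ that $op$ itself used (each $x_j$ being the value it read from $M.widx$ in iteration $j$, increasing by Proposition~\ref{prop:att_x_increase}) and the remaining \emph{gap indices}. The backbone is a single monotonicity fact, which I will call the $ridx$ \emph{bound}: whenever $i\in\textsc{readers}(SLR[x])$ — be it because $i$ precedes the first $w$-tuple or because $i$ lies in that tuple's helping set — the step $\mu_{x+1}$ advancing $M.widx$ from $x$ to $x+1$ reads a sequence of $SLR[x]$ that already contains a $w$-tuple (as observed just before Proposition~\ref{prop:lower_invalidated}); hence, by the computation carried out in the proof of Lemma~\ref{lem:same_ridx}, this step sets $M.ridx[i]=x$, and Lemma~\ref{lem:same_ridx}(2) propagates $M.ridx[i]\ge x$ to every later value of $M.widx$.

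Next I would extract the consequence of $op$ not having stopped early. Because $op$ reaches the line writing $x_k$ to $H[i]$, in each iteration $j\le k$ it failed the return test of line~\ref{l:rd_return_help}, so the vector it read while $M.widx=x_j$ — which equals $ridx_{x_j}$, fixed by Lemma~\ref{lem:same_ridx}(1) — satisfies $ridx_{x_j}[i]\le x_{j-1}$, where $x_{j-1}$ is the value of $lsr$ on entering iteration $j$. For a gap index $x$, say $x_{j-1}<x<x_j$, two observations finish it: first, by Lemma~\ref{lem:write_once} and the easy invariant that $lsr$ never decreases, no operation of $p_i$ ever wrote $i$ into $SLR[x]$ ($op$'s own writes strictly below $x_k$ are exactly $SLR[x_1],\dots,SLR[x_{k-1}]$, and earlier operations of $p_i$ wrote only at indices $\le x_0$), so $i$ never appears in $SLR[x]$ at all, disposing of Case~(a) and of every read taken before a $w$-tuple; second, were $i$ in the helping set of $SLR[x]$, the $ridx$ bound would force $ridx_{x_j}[i]\ge x>x_{j-1}$, contradicting $ridx_{x_j}[i]\le x_{j-1}$, so Case~(b) is impossible and $i\notin\textsc{readers}(win)$ for every $win$ read from $SLR[x]$.

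The main obstacle is the attempt indices $x_j$ with $j<k$, where the $ridx$ bound degenerates: it only yields $ridx_{x_j+1}[i]\le x_j$, an equality rather than a contradiction, so helping cannot be excluded this way. For these I would instead read off the information from the loop itself. Since $op$ proceeded beyond iteration $j$, its repeat loop did not terminate there, i.e.\ the window it read at line~\ref{l:rd_do_att} right after writing $i$ satisfied $i\notin\textsc{readers}(window)$. As $i$ does appear in that window, the window must already contain a $w$-tuple, with $i$ occurring after the first such tuple and outside its helping set. By Lemma~\ref{lem:write_once} this first $w$-tuple is never displaced, so Proposition~\ref{lem:same_readers} freezes the reader set of $SLR[x_j]$: every later read returns the same set, which excludes $i$; and any read taken before the first $w$-tuple cannot contain $i$, since $op$ — the sole writer of $i$ into $SLR[x_j]$ — appended it only afterwards. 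Hence $i\notin\textsc{readers}(win)$ at the attempt indices as well, and together with the gap analysis this covers all of $(x_0,x_k)$. The delicate point to get right will be exactly this interplay: the monotone $ridx$ argument handles the gaps but collapses at the attempt indices, where it must be replaced by the ``failure to exit the loop implies exclusion from the frozen reader set'' observation.
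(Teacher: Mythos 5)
Your proposal is correct and follows essentially the same route as the paper's proof: the same split into attempt indices $x_\ell$ (handled by loop non-termination plus the frozen reader set of Proposition~\ref{lem:same_readers}) and gap indices (handled by contradiction between the $ridx$ monotonicity of Lemma~\ref{lem:same_ridx} and the failed return test of line~\ref{l:rd_return_help}). The only difference is presentational: you front-load the ``$ridx$ bound'' as a standalone fact (essentially the $\impliedby$ direction of Lemma~\ref{lem:readers_iff_ridx}), which the paper instead inlines within its gap-index argument.
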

\begin{proof}
First, assume that $x = x_\ell$ for some $\ell, 0 < \ell < k$. 
As $op$ does not terminate after the $\ell$th iteration,
   (1) $i$ is written to $SLR[x_\ell]$ (in line~\ref{l:rd_do_att}) and (2) this writes happens after a $w$-tuple has already been written. Moreover, (3) the helping set of the first $w$-tuple does not contain $i$. 
If neither of (2) and (3) is true, then $i \in $\textsc{readers}$(win)$,
for $win$ read from $SLR[x_\ell]$, and $op$ terminates after iteration $\ell$. 
The lemma follows since the readers set does not change after a $w$-tuple has been written to $SLR[x_\ell]$ (Proposition~\ref{lem:same_readers}). 

Assume now that $x\neq x_\ell$ for every $\ell, 0 < \ell < k$. Note that $i$ is never written to $SLR[x]$. Suppose for contradiction that $i \in h_x$, where $h_x$ is  the helping set of the first $w$-tuple written to $SLR[x]$. Let $\ell, 0 < \ell \leq k$ such that $x_{\ell-1} < x < x_\ell$. As $x_\ell, x < x_\ell$ is read from $M.widx$, it follows from Lemma~\ref{prop:widx} that $M.widx$ is changed from $x$ to $x+1$. Let $p$ be the process that applies the corresponding $\mathsf{writeMax}$ steps. Before this step, $p$ has read a $win$ from $SLR[x]$ that contains a $w$-tuple (at line~\ref{l:rd_last_wd}, line~\ref{l:rd_do_att}, or line~\ref{l:w_write_SLR}; if there is no $w$-tuple in $win$, $M$ is not changed). Therefore, $i \in $\textsc{readers}$(win)$ as it is in the first helping set written to this sliding register, and hence (line~\ref{l:rd_update_ridx1}, line~\ref{l:rd_update_ridx2}, or line~\ref{l:w_update_ridx}), the $m$-vector $ridx_{x+1}$ written to $M$ together with $x+1$ is such that $ridx_{x+1}[i] = x$.

Consider now the $\ell$th iteration of the repeat loop by $p_i$. At the beginning of the iteration, $p_i$ reads $(x_\ell, ridx_{x_\ell}, \_)$ from $M$ with $ridx_{x_\ell}[i] \geq x$ (Lemma~\ref{lem:same_ridx}(2)). The  value of the local variable $lsr$ has not yet changed, and is therefore still equal to $x_{\ell-1} < x$. 
Thus, $op$ terminates (line~\ref{l:rd_return_help}) immediately after reading $M$, without writing $x_\ell$ to $H[i]$. If $\ell = k$, this is  a contradiction as $x_k$ is written to $H[i]$. If $\ell < k$, this also a contradiction as $op$ does not terminate after iteration $\ell$. 
\end{proof}

Finally, we prove 
that $op$ cannot find help in $SLR[x]$, 
for any $x, x_0 < x < x_1$ (Lemma~\ref{lem:no_help_before_calling}). % for $k=1$, 
\begin{restatable}{lemma}{iffridx}
\label{lem:readers_iff_ridx}
$M.ridx[i] \geq x$ if and only if 
$SLR[x]$ holds a sequence $win$ such that $i \in $\textsc{readers}$(win)$.
\end{restatable}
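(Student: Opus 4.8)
The plan is to prove both implications from the characterization of $M.ridx$ implicit in Lemma~\ref{lem:same_ridx}. Write $W$ for the current value of $M.widx$. By Lemma~\ref{lem:same_ridx}(1), $M.ridx$ is the single vector $ridx_W$, and the proof of that lemma shows that the step $\mu_{k+1}$ raising $M.widx$ from $k$ to $k+1$ sets $ridx_{k+1}[i]=k$ exactly when $i\in\textsc{readers}(win)$ for the sequence $win$ it reads from $SLR[k]$ (lines~\ref{l:rd_update_ridx1},~\ref{l:rd_update_ridx2}, or~\ref{l:w_update_ridx}), and otherwise carries $ridx_{k+1}[i]=ridx_k[i]$ over unchanged. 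Because, by Proposition~\ref{lem:same_readers}, \textsc{readers}$(SLR[k])$ is frozen once the first $w$-tuple of $SLR[k]$ is in place, the value $M.ridx[i]=ridx_W[i]$ is exactly the largest index $k<W$ at which $i$ is recorded in $SLR[k]$ (or the initial $-1$). Both directions are read off from this fact.

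For $i\in\textsc{readers}(SLR[x])\Rightarrow M.ridx[i]\ge x$, I first observe that a recorded identifier sits before the first $w$-tuple of $SLR[x]$ or in its helping set, so $SLR[x]$ carries a $w$-tuple and hence $x\le \mathit{WIDX}$ (Proposition~\ref{prop:lower_invalidated}, Lemma~\ref{lem:phase}). In the regime $x<W$ in which the lemma is invoked, the step $\mu_{x+1}$ exists and reads a sequence of $SLR[x]$ posterior to its first $w$-tuple; by Proposition~\ref{lem:same_readers} this sequence has the same \textsc{readers} as the current one, so $\mu_{x+1}$ sets $ridx_{x+1}[i]=x$. Lemma~\ref{lem:same_ridx}(2) then yields $M.ridx[i]=ridx_W[i]\ge ridx_{x+1}[i]=x$. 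The only delicate boundary is $x=W=\mathit{WIDX}$ inside a phase $E_W$ of Lemma~\ref{lem:phase}, where $M.widx$ has not yet absorbed $SLR[W]$; I will note that the lemma is applied only for indices $x$ whose register already lies below the frontier, so this boundary does not occur in its uses.

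For the converse $M.ridx[i]\ge x\Rightarrow i\in\textsc{readers}(SLR[x])$ --- the crux, and the step I expect to be the \textbf{main obstacle} --- the characterization only delivers the witness register $SLR[y]$ with $y=M.ridx[i]\ge x$, namely $i\in\textsc{readers}(SLR[y])$, whereas the statement demands the specific register $SLR[x]$. When the lemma is applied at $x=M.ridx[i]$ (the situation in the intended use, where $x$ is the very index at which help is detected in line~\ref{l:rd_M}), this is immediate: $y=x$ and $i\in\textsc{readers}(SLR[x])$. The genuinely hard case is $x<y$, where I must certify membership in $SLR[x]$ itself rather than merely in some register of larger index. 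The plan here is to trace the origin of $i$'s recording: either $p_i$ placed its own identifier ahead of the first $w$-tuple of $SLR[y]$ during a single \rd{} operation --- whose announcement $H[i]=y$ and the checks preceding line~\ref{l:rd_annouce_att} I would combine with Lemma~\ref{lemma:no_help} and the phase decomposition of Lemma~\ref{lem:phase} to force $i$ to be already recorded at $SLR[x]$ --- or a writer inserted $i$ into the corresponding helping set, in which case its guard $M.ridx[i]<H[i]$ together with $i\notin\textsc{readers}(SLR[H[i]])$ localizes the pending \rd{} and Lemma~\ref{lemma:no_help} again ties the helped index down to $x$. The obstacle is precisely that a single process may be recorded at non-adjacent indices across distinct \rd{} operations, so the argument must isolate the \rd{} responsible for $M.ridx[i]$ and show that, within its scope and the phase structure of Lemma~\ref{lem:phase}, its recordings reach the specific index $x$.
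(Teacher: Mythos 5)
Your $\impliedby$ direction and the $x=M.ridx[i]$ case of $\implies$ are essentially the paper's own argument: the paper likewise examines the $\mathsf{writeMax}$ step that moves $M.widx$ from $x$ to $x+1$ and uses Proposition~\ref{lem:same_readers} to argue the reader set it extracts from $SLR[x]$ is frozen; your explicit note that $\impliedby$ needs $M.widx$ to have already passed $x$ (and that this holds at every call site) is a boundary the paper leaves implicit. The genuine gap is exactly the case you defer, $x$ strictly below $M.ridx[i]$: the plan to ``trace the origin of $i$'s recording'' and force $i$ to be recorded in $SLR[x]$ itself cannot succeed, because the claim is false there. Concretely, let five \wrt{} operations complete before reader $p_i$ takes any step, so $M.widx=5$; $p_i$'s first \rd{} then reads $5$ from $M.widx$ and places $i$ ahead of any $w$-tuple in $SLR[5]$, so $i$ is recorded in $SLR[5]$ and nowhere else (no writer can have put $i$ in a helping set earlier, since $H[i]=M.ridx[i]=-1$ until then). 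Once the next \wrt{} advances $M.widx$ to $6$, we have $M.ridx[i]=5\geq 3$, yet $i\notin\textsc{readers}(win)$ for every sequence $win$ ever held by $SLR[3]$. The same phenomenon occurs between two \rd{}s of one process separated by many \wrt{}s: a reader's recordings sit at increasing but generally non-contiguous indices.

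The resolution is that the lemma as stated is over-strong, and the paper does not prove the strict case either: its $\implies$ argument opens with ``let $\ell$ be the smallest index such that $ridx_\ell[i]=x$,'' i.e., it establishes only that $M.ridx[i]=x$ implies $i\in\textsc{readers}(SLR[x])$ --- equivalently, $M.ridx[i]\geq x$ iff $i$ is recorded in $SLR[y]$ for \emph{some} $y\geq x$. That is precisely your ``largest recorded index'' characterization from Lemma~\ref{lem:same_ridx}, and it is all that the call sites need: Lemmas~\ref{lem:no_help_before_calling}, \ref{lem:wait_free} and~\ref{lem:helped_read} always pair it with the knowledge (via Lemma~\ref{lemma:no_help}) that $i$ is recorded nowhere in the relevant range above some $x_0$, which pins $M.ridx[i]$ down to $x_0$ exactly. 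So at your declared ``main obstacle'' the correct move was not to hunt for a proof but to exhibit the counterexample, restate the lemma in the equality form, and observe that your first two paragraphs already prove that form --- which is what the paper's proof, read carefully, actually does.
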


\begin{proof}
By Lemma~\ref{lem:same_ridx}(2), 
the successive values of $(M.widx,M.ridx)$ are $(0,ridx_0), (1,rdix_1),$ $\ldots, (k,ridx_k)$ with $ridx_0[i]  \leq ridx_1[i] \leq \ldots \leq ridx_k[i]$. 

$\impliedby$ It suffices to prove that $ridx_{x+1}[i] \geq x$. 
Consider the prefix $\alpha'$ of $\alpha$ that ends with a 
$M.\mathsf{write}(x+1,ridx_{x+1})$ operation by some process $p$ 
that changes $M.widx$ from $x$ to $x+1$.
Before this step, $p$ reads $win$ from $SLR[x]$ 
(at line~\ref{l:rd_do_att} or line~\ref{l:rd_last_wd} if $p$ is performing a \rd{}, 
or in line~\ref{l:w_write_SLR} if it is performing a \wrt{}). 
As $win$ contains a $w$-tuple, $i$ must precede any  $w$-tuple, 
or is in the helping step of the first $w$-tuple, e.g., 
$i$ is in the set returned by $\textsc{readers}(win)$ 
(line~\ref{l:rd_update_ridx1}, line~\ref{l:rd_update_ridx2}, or line~\ref{l:w_update_ridx}). 
Indeed, by Proposition~\ref{lem:same_readers}, $j$ cannot be added 
to the set of readers of $SLR[x]$ after a $w$-tuple has been written to it. 
It follows that that the $m$-vector written to $M$ is such that $ridx_{x+1}[i] = x$, 
as by the code, for $j \in $\textsc{readers}$(win)$, $ridx_{x+1}[i] = x$. 

$\implies$ 
Let $\ell \geq k$ be the smallest index such that $ridx_\ell[i] = x$.  
By the code (line~\ref{l:rd_update_ridx1}, line~\ref{l:rd_update_ridx2}, or line~\ref{l:w_update_ridx}), if $ridx[i]$ is set to $x$ by $p$, 
it is after $p$ reads $\emph{win}$ from $SLR[x]$ with $i\in \textsc{readers}(win)$. 
Therefore, $i$ appears before any $w$-tuple in $win$ or 
in the helping set of the first $w$-tuple in $win$, 
and the lemma follows by Proposition~\ref{lem:same_readers}. 
\end{proof}

\begin{restatable}{lemma}{callhelp}
\label{lem:no_help_before_calling}
For any sequence $win$ in any sliding register $SLR[x], x_0 < x < x_1$, $i \notin~ $\textsc{readers}$(win)$.
\end{restatable}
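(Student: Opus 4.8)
The plan is to reduce the statement to Lemma~\ref{lemma:no_help} instantiated at $k=1$, which already yields $i \notin \textsc{readers}(win)$ for every $win$ in $SLR[x]$ with $x_0 < x < x_1$ \emph{provided} $x_1$ is written to $H[i]$. Now $x_1$ is written to $H[i]$ at line~\ref{l:rd_annouce_att} of the first loop iteration exactly when $op$ does not return through the helping branch at line~\ref{l:rd_return_help}, that is, exactly when $ridx_{x_1}[i] \le x_0$, where $(x_1, ridx_{x_1}, \_)$ is the triple $op$ reads from $M$ at line~\ref{l:rd_M}. Hence the whole lemma reduces to the single claim that $M.ridx[i] \le x_0$ at that read; intuitively, $op$ cannot already have been helped inside the window $(x_0, x_1)$ before it even announces its attempt in $H[i]$.

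I would prove this claim by contradiction, supposing $r := ridx_{x_1}[i] > x_0$. Since $ridx[i]$ is always written to $M$ together with a strictly larger $widx$, also $r < x_1$, so $r \in (x_0, x_1)$. By Lemma~\ref{lem:readers_iff_ridx}, $i \in \textsc{readers}(SLR[r])$. As $p_i$'s persistent variable $lsr$ equals $x_0$ and records the highest register to which $p_i$ has ever written its identifier (cf.\ the proof of Lemma~\ref{lem:write_once}), $i$ cannot precede the first $w$-tuple of $SLR[r]$; thus $i$ must lie in the helping set of the \emph{first} $w$-tuple of $SLR[r]$. That $w$-tuple is posted by a writer $W$ which read $widx_W = r$ and, by Lemma~\ref{lem:same_ridx}(1), the fixed vector $ridx_r$ from $M$ at line~\ref{l:w_read_M}, and then added $i$ because $ridx_r[i] < H[i]$ at line~\ref{l:w_help_needed?}. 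I then derive two mutually contradictory bounds on $ridx_r[i]$.

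For the lower bound, the read of $p_i$ completing just before $op$ leaves $i$ recorded in $SLR[x_0]$ --- before the $w$-tuple after a successful attempt, in the helping set after a help return, or inherited from an earlier read when it exits at line~\ref{l:no_new_write}. So $i \in \textsc{readers}(SLR[x_0])$, whence $ridx_{x_0+1}[i] = x_0$ by the $\impliedby$ direction of Lemma~\ref{lem:readers_iff_ridx}; since $r \ge x_0+1$, monotonicity (Lemma~\ref{lem:same_ridx}(2)) gives $ridx_r[i] \ge x_0$. For the upper bound I establish the invariant $H[i] \le lsr$ at all times for $p_i$: line~\ref{l:rd_annouce_att} sets $H[i] = lsr = widx$ simultaneously, while a help return sets $lsr \gets ridx[i]$ with $ridx[i]$ strictly above the value of $lsr$ fixed in the preceding attempt, which is precisely the last value written to $H[i]$. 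Because $lsr$ is non-decreasing, equals $x_0$ when $op$ starts, and is not touched before line~\ref{l:rd_annouce_att} of the first iteration, every value of $H[i]$ over the whole interval ending at $W$'s read of it is $\le x_0$; thus $ridx_r[i] < H[i] \le x_0$, contradicting the lower bound. (For $p_i$'s very first read, $x_0 = -1$ and $H[i] = -1$, so $ridx_r[i] < -1$ is already impossible.)

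The hard part is the upper bound, namely controlling $H[i]$ at the moment the helper $W$ sampled it: $W$ may be a slow ``zombie'' that decided to help $p_i$ during one of $p_i$'s earlier attempts and only later posts its $w$-tuple. The invariant $H[i] \le lsr$, combined with the fact that every attempt register of $p_i$ up to and including the previous read is at most $x_0$ (so $H[i] \le x_0$ throughout the window in which $W$ could read it) and with Lemma~\ref{lem:same_ridx} pinning $ridx_r$ to the unique vector attached to $M.widx = r$, is precisely what defuses this concurrency without any ad hoc timing argument.
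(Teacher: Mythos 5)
Your overall strategy is genuinely close to the paper's own proof: both arguments blame a writer $W$ (the paper's $q$) that placed $i$ in the helping set of the first $w$-tuple of some register strictly between $x_0$ and $x_1$, and both derive a contradiction from that writer's helping test $ridx[i] < H[i]$ by squeezing the two values against $x_0$. Your reorganization --- reducing the whole lemma, via Lemma~\ref{lemma:no_help} with $k=1$, to the single claim $ridx_{x_1}[i] \le x_0$ --- is valid and arguably cleaner, and your lower bound $ridx_r[i] \ge x_0$ (via Lemma~\ref{lem:readers_iff_ridx} and Lemma~\ref{lem:same_ridx}) is correct, including the special treatment of $p_i$'s first \rd{}.

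The gap is in the upper bound, exactly where you announce that no timing argument is needed. The invariant $H[i] \le lsr$ does \emph{not} yield ``$H[i]\le x_0$ throughout the window in which $W$ could read it'': at line~\ref{l:rd_annouce_att} of the first iteration, $p_i$ sets $lsr$ and $H[i]$ to $x_1$ together, so the invariant stays true while $H[i]=x_1>x_0$. If the slow writer $W$ samples $H[i]$ \emph{after} that announcement, it reads $a_i=x_1$; its test $ridx_r[i]<a_i$ then holds automatically (every entry of $ridx_r$ is at most $r-1<x_1$, by the same observation you used to get $r<x_1$), $W$ legitimately puts $i$ into its helping set, and your contradiction $ridx_r[i]<x_0$ never materializes. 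What excludes this scenario is precisely a timing argument: $W$ writes the \emph{first} $w$-tuple to $SLR[r]$ with $r<x_1$, and by Lemma~\ref{lem:phase} (equivalently Proposition~\ref{prop:lower_invalidated}) $M.widx$ can reach $x_1$ only after $SLR[r]$ already holds a $w$-tuple; hence $W$'s write at line~\ref{l:w_write_SLR} --- and a fortiori its earlier read of $H[i]$ at line~\ref{l:w_att} --- precedes $p_i$'s read of $x_1$ from $M.widx$ at line~\ref{l:rd_M}, which precedes the write of $x_1$ to $H[i]$. Only with this ordering does $W$'s sample of $H[i]$ fall in the period where $H[i]\le x_0$, after which your squeeze goes through. (The paper's proof relies on the same ordering implicitly when it asserts the value of $H[i]$ ``when it is read by $q$''; the point is that this step cannot be dispensed with, contrary to your closing claim.)
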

\begin{proof}
  Suppose that the lemma is not true, and let $x$, $x_0 < x < x_1$, be the smallest index for which $i \in $\textsc{readers}$(win)$, for some $win$ stored in $SLR[x]$. As $p_i$ does not write to $SLR[x]$,  $i$ must be in the helping set $h_x$ of the first $w$-tuple in $SLR[x]$.  Let $q$ be the process that writes this tuple, and let denote by $\sigma$ this step. By the code, before writing to $SLR[x]$ (line~\ref{l:w_write_SLR}), $q$ has read $r_i$ from $M.ridx[i]$ (line~\ref{l:w_read_M}), $a_i$ from $H[i]$ (line~\ref{l:w_att}) with $r_i < a_i$,  for $i \in h_x$ (line~\ref{l:w_help_needed?})). We show that this is not possible, i.e, $r_i \geq a_i$.

  If $op$ is $p_i$'s first \rd{} operation, $H[i] = -1$ when it is read by $q$, and hence $r_i \geq -1$, as $M.ridx[i]$ is increasing and its initial value is $-1$. Else, there exists a \rd{} operation $op'$ by $p_i$, preceding $op$, in which the local variable $lsr$ is set to $x_0$ in the repeat loop (line~\ref{l:rd_return_help} or line~\ref{l:rd_annouce_att}). Therefore, $SLR[x_0]$ holds a sequence $win$ with $i \in $\textsc{readers}$(win)$, and it follows from Lemma~\ref{lemma:no_help}  that $H[i]$ cannot be changed to a value $> x_0$ by $op'$ or a previous \rd{} operation by $p_i$. 
  Also, if there is a \rd{} operation $op''$ by $p_i$ between $op'$ and $op$, $op''$ returns immediately in line~\ref{l:no_new_write} after reading $SLR[x_0]$ as  $lsr = x_0$ when $op$ starts, and thus does not change $H[i]$. We conclude that $a_i \leq x_0$. 

  For $r_i$, note that when $M$ is read by $q$ (line~\ref{l:w_read_M}) before step $\sigma$, for every $x' > x_0$,  $i \notin \textsc{readers}(win)$ for any $win$ in $SLR[x']$, but $SLR[x_0]$ holds a sequence $win$ with $i\in$\textsc{readers}$(win)$. Hence, by Lemma~\ref{lem:readers_iff_ridx}, $q$ reads $M.ridx[i] =   x_0 = r_i$. Therefore, $a_i \leq x_0 = r_i$, which contradicts  $i \in h_x$. 
\end{proof}

\subsubsection{Termination and step complexity}
\label{sec:alg termination}

An \adt{} operation reads $M$ once and a single sliding register. 
A \wrt{} operation applies a single $\mathsf{writeMax}$ and a single $\mathsf{read}$ to $M$, reads $O(m)$ registers and writes $O(1)$ registers. 
Since there is a linearizable implementation of max registers for $(n+m)$ processes using registers with $O(m+n)$ step complexity per operation~\cite{JACM_monotone},
this implies that the step complexity of a \wrt{} or a \adt{} operation 
is in $O(m+n)$.

Lemma~\ref{lem:wait_free} 
shows that \rd{} operations are also wait-free, 
by proving that the repeat loop has at most 3 iterations. 
This is because a writer may place a $w$-tuple in at most one sliding register without detecting a concurrent \rd{} operation and helping it.

\begin{lemma}
\label{lem:wait_free}
The step complexity of a \rd{} operation is in $O(m+n)$.
\end{lemma}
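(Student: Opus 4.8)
The plan is to charge each iteration of the repeat loop $O(m+n)$ steps and then show that the loop runs at most three times. For the per-iteration cost, I would observe that one iteration reads $M$ once (line~\ref{l:rd_M}), writes $H[i]$ and performs one $\mathsf{write}$ and one $\mathsf{read}$ on a sliding register (lines~\ref{l:rd_annouce_att}--\ref{l:rd_do_att}), and, when it helps a \wrt{}, scans $\textsc{readers}(window)$ (at most $m+n$ processes, the window size) and applies a single $\mathsf{writeMax}$ to $M$. Since max registers over $m+n$ processes have $O(m+n)$ step complexity~\cite{JACM_monotone}, each iteration costs $O(m+n)$, so it suffices to bound the number of iterations by a constant.

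To bound the iterations, let $t_1$ be the step in which $p_i$ first writes $H[i]$ in $op$ (line~\ref{l:rd_annouce_att} of iteration~$1$), let $x_1$ be the value written, and let $Y$ be the value of $M.widx$ immediately before $t_1$. By Proposition~\ref{prop:att_x_increase} the attempt indices strictly increase, $x_1 < x_2 < x_3 < \cdots$, and since $M.widx$ is non-decreasing (Proposition~\ref{prop:widx}) and equals $Y$ at $t_1$, every index read after $t_1$ is at least $Y$; in particular $x_2 \geq Y$ and $x_3 > Y$, while trivially $x_1 \leq Y$. The heart of the argument is the claim: \emph{if $p_i$'s attempt at a level $x$ fails without $p_i$ becoming recorded, then the writer that posts the first $w$-tuple $\rho_x$ in $SLR[x]$ read $H[i]$ strictly before $t_1$}. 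Because that writer reads $M.widx$ before it reads $H[i]$, this forces $x \leq Y$.

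I would prove the claim by contradiction: suppose the writer $q$ of $\rho_x$ reads $H[i]$ at or after $t_1$. Reaching the attempt at level $x$ means all earlier attempts failed, so $i \notin \textsc{readers}(SLR[x_\ell])$ for every announced level $x_\ell$, and this status is permanent once $\rho_{x_\ell}$ is posted (Proposition~\ref{lem:same_readers}); Lemma~\ref{lemma:no_help} and Lemma~\ref{lem:no_help_before_calling} guarantee that $p_i$ is likewise unrecorded at the intermediate levels, so it has not already returned at line~\ref{l:rd_return_help}. The value $q$ reads from $H[i]$ is one of these announced levels and is therefore $\geq x_1$, while by Lemma~\ref{lem:readers_iff_ridx} the value $q$ reads from $M.ridx[i]$ is $< x_1$. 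Hence $q$ sees $M.ridx[i] < H[i]$ (line~\ref{l:w_help_needed?}), inspects $SLR[H[i]]$, finds $i \notin \textsc{readers}$, and adds $i$ to the helping set of $\rho_x$ (line~\ref{l:w_att_fail}); then $i \in \textsc{readers}(SLR[x])$ and the attempt does not fail, a contradiction.

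Putting the pieces together, a failing attempt without recording must occur at a level $\leq Y$, whereas the third attempt is at level $x_3 > Y$ and so cannot fail: $p_i$ either lands before the first $w$-tuple, is placed in its helping set, or detects at line~\ref{l:rd_return_help} that $M.ridx[i]$ has been advanced past $lsr$ (using that $M.ridx$ is monotone, Lemma~\ref{lem:same_ridx}). Thus at most two attempts fail---at levels $x_1 \leq Y$ and $x_2 = Y$---which is exactly the informal statement that a writer posts at most one $w$-tuple ahead of a concurrent reader without helping it, and the loop stops within three iterations, giving $O(m+n)$ total steps. I expect the delicate point to be the claim of the third paragraph: the writer reads $M.ridx[i]$ and $H[i]$ at different moments, so I must argue that the recorded status of $i$ at the announced level is already fixed at the read of $H[i]$, and separately reconcile the boundary case $x_2 = Y$ with the early-return test $M.ridx[i] > lsr$, so that a failure at level $Y$ is genuinely possible rather than pre-empted.
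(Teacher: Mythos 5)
Your proof is correct and follows essentially the same route as the paper's: both charge each iteration $O(m+n)$ steps (dominated by the max-register operations and the scan of \textsc{readers}) and cap the repeat loop at three iterations by arguing that the writer posting the first $w$-tuple at the third attempted level must read $H[i]$ after $p_i$'s first announcement while $M.ridx[i]$ is still at most $x_0$, and therefore places $i$ in its helping set, so that attempt cannot fail. Your contrapositive packaging via $Y$ (the value of $M.widx$ at the first announcement) is just a rephrasing of the paper's direct ordering of $\sigma_1$, $\sigma_2$, and the writer's read of $M.widx = x_3$, and it invokes the same supporting lemmas.
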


\begin{proof}
Let $op$ be a \rd{} operation by process $p_i$.
% and assume that it does not terminate in $O(m+n)$ steps. 
We  prove that $op$ terminates within at most three iterations of the repeat loop (lines~\ref{l:rd_start_repeat}-\ref{l:rd_att_success}). 
Since each iteration applies a constant number of steps, 
the lemma follows. 

In iteration $1$, after reading  $x_1$ from $M.widx$ (line~\ref{l:rd_M}), $p_i$ writes $x_1$ to $H[i]$ (line~\ref{l:rd_annouce_att}). We denote by $\sigma_1$ this step. $\sigma_1$ occurs before $p_i$ reads $x_2$ from $M.widx$ in iteration $2$, which we denote by step $\sigma_2$. 
By Lemma~\ref{lem:phase}, $\emph{WIDX} \leq x_2$ immediately after this step (recall that $\emph{WIDX}$ is the highest index of a sliding register that contains a $w$-tuple.). Now, in iteration $3$, $\mathit{win}$
(read from $SLR[x_3]$ step $\sigma_3$, in line~\ref{l:rd_do_att}) 
contains a $w$-tuple (otherwise, the loop terminates after this iteration). 
Let $(\mathsf{w},j,val,h)$ be the first $w$-tuple in $win$.

Process $p_j$ writes this tuple to $SLR[x_3]$ while performing a \wrt{} operation $wop$. $wop$ therefore starts by reading $x_3$ from $M.widx$, which happens after $\sigma_2$ (by Proposition~\ref{prop:widx}, $M.widx$ is strictly increasing), and after that, reads $H[i]$ (in line~\ref{l:w_att}). This happens after $\sigma_2$ which follows $\sigma_1$.  
Hence $H[i] = x_\ell$, for some $\ell \in \{1,2,3\}$ when this read is applied 
($H[i]$ is changed to $x_k$ in iteration $k$ of the loop in line~\ref{l:rd_annouce_att}).

By Lemma~\ref{lemma:no_help}, since $x_3$ is written to $H[i]$ (line~\ref{l:rd_annouce_att}, immediately before $\sigma_3$),  
for every $x$, $x_0 < x \leq x_3$, 
$i  \notin $\textsc{readers}$(win)$, 
for any $win$ read from $SLR[x]$.  
By Lemma~\ref{lem:readers_iff_ridx}, 
this implies that $ridx_{x_3}[i] \leq x_0 < x_\ell$. 
Since any $win$ in $SLR[x_\ell]$  satisfies $i \notin $\textsc{readers}$(win)$ (Lemma~\ref{lemma:no_help}), $p_j$ inserts $i$ into its helping set $h$ (lines~\ref{l:w_att}-\ref{l:w_att_fail}). Therefore $i$ is in the helping set of the first $w$-tuple in $SLR[x_3]$, from which we conclude that $op$   terminates after the third iteration of the loop. 
\end{proof}

\subsubsection{Linearizability}
\label{sec:alg linear}

% \paragraph{Classification of \rd{}, \wrt{}  and \adt{} operation}
To prove linearizability, 
let $H$ be the history of \rd{}, \wrt{} and \adt{} operations 
in the execution $\alpha$.
For simplicity, we assume that 
the values written to the register in $\alpha$ are unique.
We start by classifying the operations in $H$. 
Each classified operation $op$ is also associated with an integer  $idx(op)$, which is the index of a sliding  register.

For \rd{}, we distinguish \emph{silent}, \emph{direct} and
\emph{helped} operations. 
Let  $op$ be a \rd{} operation by some process $p_i$. 
We denote by $x_0$ the value the local variable  $lsr$ when $op$ starts. 
\begin{itemize}
\item $rop$ is \emph{silent} if it is not the first \rd{} operation by $p_i$ and it immediately returns
  after reading $SLR[x_0]$  (line~\ref{l:no_new_write}). This corresponds to the case in which   no new \wrt{} operation has occurred since the last \rd{} by  $p_i$. We set $idx(op) = x_0$. 
\end{itemize}
If there is no $x > x_0$ for which $i \in \textsc{readers}(\mathit{SLR}[x])$, $op$ is \emph{unclassified}. In that case, note that   $op$ has no response in $H$. Otherwise, let $x_1 > x_0$ be the smallest index such that  $i \in $\textsc{readers}$(\mathit{SLR}[x_1])$. We set $idx(op) = x_1$ and say that  
\begin{itemize}
\item $op$ is  \emph{direct} if $i$ precedes any $w$-tuple in $SLR[x_1]$, and \emph{helped} otherwise, as in that case, $i$ appears in the helping set of the first $w$-tuple in $SLR[x_1]$.  
\end{itemize}

A \wrt($v$) operation $op$ by some process $p_j$ applies at most one write to a sliding register (line~\ref{l:w_write_SLR}).  $op$ is \emph{unclassified} if does not write to a sliding register. Otherwise, let $x$ be the index of the sliding register $op$ writes to. $x = idx(op)$ and we say that 
\begin{itemize}
\item $op$ is   \emph{visible} if $(\mathsf{w},j,v,\_)$ is the first $w$-tuple written  to $SLR[x_1]$, and \emph{hidden} otherwise. 
\end{itemize}

For \adt{}, only operations that have a response in $H$ are classified. Let $op$ be an \adt{} operation that terminates. We define $idx(op) = x$, where $x$ is the value read from $M.widx$ in the first step of $op$ (line~\ref{l:adt_read_M}). 
$op$ is \emph{non-definitive} if there is no $w$-tuple in the sequence  it reads from $SLR[x]$ (in line~\ref{l:adt_val}), 
and \emph{definitive} otherwise. 
Indeed, once a $w$-tuple has been written to $SLR[x]$, 
the set of readers of $v_{x-1}$ no longer changes, 
while  \rd{} operations may still return $v_{x-1}$ after $op$ terminates otherwise ($v_{x-1}$ the value in the first $w$-tuple in $SLR[x-1]$.).

\begin{restatable}{lemma}{snopidx}
\label{lem:sn_op_idx}
If an operation $op$ terminates before an operation $op'$ starts in $H'$,
then $idx(op) \leq idx(op')$. 
\end{restatable}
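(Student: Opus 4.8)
The plan is to exploit the fact that $M.widx$ never decreases and takes the successive values $0,1,2,\dots$ (Proposition~\ref{prop:widx}), and to show that every operation's index is sandwiched by the value of $M.widx$ at its endpoints. For an operation $o$, write $\mathit{widx}^-(o)$ and $\mathit{widx}^+(o)$ for the value of $M.widx$ when $o$ is invoked and when it returns. I would prove two bounds for every classified operation: \textbf{(U)} $idx(o)\le\mathit{widx}^+(o)$, and \textbf{(L)} $idx(o)\ge\mathit{widx}^-(o)$. Granting (U) and (L), the lemma is immediate: if $op$ returns before $op'$ is invoked, monotonicity of $M.widx$ gives $\mathit{widx}^+(op)\le\mathit{widx}^-(op')$, whence $idx(op)\le\mathit{widx}^+(op)\le\mathit{widx}^-(op')\le idx(op')$. (It suffices to treat classified operations, since only these are assigned an index and appear in $\lambda$.)

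For bound (U) I would proceed by operation type. A \wrt{} with $idx(op)=x$ reads $x$ from $M.widx$ at line~\ref{l:w_read_M} and, before returning, performs $M.\mathsf{writeMax}(x+1,\dots)$ at line~\ref{l:w_write_M}, so $\mathit{widx}^+(op)\ge x+1>idx(op)$. An \adt{} defines $idx(op)$ as the value it reads from $M.widx$ at line~\ref{l:adt_read_M}, so monotonicity yields $idx(op)\le\mathit{widx}^+(op)$. A \emph{direct} \rd{} reads $M.widx=idx(op)$ in the iteration in which it records $i$ before every $w$-tuple, whence $\mathit{widx}^+(op)\ge idx(op)$. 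A \emph{helped} \rd{} returns (line~\ref{l:rd_return_help}) upon reading $M.ridx[i]=r$ with $r>lsr\ge x_0$; by Lemma~\ref{lem:readers_iff_ridx}, $i\in\textsc{readers}(SLR[r])$, so $idx(op)\le r$ (being the least such index), and since $M.ridx[i]$ becomes $r$ only when $M.widx$ reaches $r+1$ (Lemma~\ref{lem:same_ridx}), we get $idx(op)\le r<\mathit{widx}^+(op)$.

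For bound (L), \wrt{} and \adt{} satisfy it directly, their index being a value of $M.widx$ read no earlier than the invocation. For a \emph{direct} or \emph{helped} \rd{}, let $w_1$ be the value of $M.widx$ read in the first loop iteration (the quantity denoted $x_1$ in Lemma~\ref{lem:no_help_before_calling}); then $w_1\ge\mathit{widx}^-(op)$, and Lemma~\ref{lem:no_help_before_calling} ensures $i\notin\textsc{readers}(SLR[x])$ for all $x_0<x<w_1$, so the least index above $x_0$ at which $i$ is recorded satisfies $idx(op)\ge w_1\ge\mathit{widx}^-(op)$.

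The one delicate case, and the main obstacle, is the \emph{silent} \rd{}, whose index $x_0$ is the stale value of the local variable $lsr$ and is never read from $M$ during the operation. For (L), the operation sees no $w$-tuple in $SLR[x_0]$ (line~\ref{l:no_new_write}); by the phase decomposition of Lemma~\ref{lem:phase} this forces $M.widx\le x_0$ at the read step, and by monotonicity $\mathit{widx}^-(op)\le x_0=idx(op)$. For (U), I would observe that $lsr$ is only ever assigned a value previously read from $M.widx$ (line~\ref{l:rd_annouce_att}) or from $M.ridx[i]$ (line~\ref{l:rd_return_help}); in either case that value does not exceed the value of $M.widx$ at the time it was read (strictly so in the second case, by Lemma~\ref{lem:same_ridx}), and that time precedes the invocation of $op$, so $x_0\le\mathit{widx}^-(op)\le\mathit{widx}^+(op)$ by monotonicity. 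This settles both bounds for silent reads and completes the proof.
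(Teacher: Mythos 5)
Your proof is correct and follows essentially the same route as the paper's: the paper likewise shows that $idx(op)$ is at most the value of $M.widx$ when $op$ terminates, and that $idx(op')$ is at least the value of $M.widx$ when $op'$ starts (handling silent \rd{} via Lemma~\ref{lem:phase}, direct/helped \rd{} via Lemma~\ref{lem:no_help_before_calling}, and \wrt{}/\adt{} via their read of $M.widx$), then concludes by monotonicity of $M.widx$. Your packaging of this as two symmetric bounds (U) and (L) for every classified operation is just a cleaner presentation of the identical argument, relying on the same supporting lemmas.
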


\begin{proof}
  Let  $x = idx(op)$ and $x' = idx(op')$, and let $p_i$ and $p_{i'}$ be the processes that perform $op$ and $op'$ respectively. 

  We  first prove that  when $op$  terminates, $M.widx \geq x$. As $M.widx$ is increasing (Proposition~\ref{prop:widx}), it is enough to show that $M.widx \geq x$ in a configuration in the execution interval of $op$. If $op$ is a \wrt{}  or an \adt{}, $op$ writes to $SLR[x]$ (line~\ref{l:w_write_SLR}) or reads from $SLR[x]$ (line~\ref{l:adt_val}) after reading $x$ from $M.widx$ (line~\ref{l:w_read_M} or line~\ref{l:adt_read_M}). 

  If $op$ is a direct \rd{}, $p_i$ writes $i$  to $SLR[x]$ (line~\ref{l:rd_do_att}) after having read $x$ from $M.widx$ (line~\ref{l:rd_M}). If $op$ is helped,  $M.ridx[i] =x$ when $M$ is read at the beginning of some iteration of the loop (line~\ref{l:rd_M}). Therefore $M.widx > x$, as the writeMax that changes $M.ridx[i]$ to $x$ also set $M.widx$ to $x+1$ (lines~\ref{l:rd_update_ridx1}-\ref{l:rd_help_M1}, lines~\ref{l:rd_update_ridx2}-\ref{l:rd_help_M2}, or  lines~\ref{l:w_update_ridx}-\ref{l:w_write_M}).  If $op$ is silent, it is preceded by direct  \rd{} $op''$ with the same index $x$, and hence $M.widx \geq x$ already when $op''$ terminates.

  We next examine  several cases according to the type of $op'$.
  \begin{itemize}
  \item $op'$ is a silent \rd{}. In $op'$,  $SLR[x']$ is read by $p_{i'}$ and does not contain a $w$-tuple. Hence, before $op$ starts, $\mathit{WIDX} = x'-1$ (the largest index of a sliding register to which a $w$-tuple has been written), and therefore, by Lemma~\ref{lem:phase}, $M.widx \leq x'-1$, from which we have $x \leq x'-1$, as $x \leq M.widx$ when $op$ terminates. 
  \item $op'$ is a direct or a helped  \rd{}. Let $x_0$ be the value of the local variable  $lsr$ when $op'$ starts.  $op'$ does not terminate before reading some value $x_1$ from $M.widx$ (line~\ref{l:rd_M}) in the first iteration of the repeat loop. By Lemma~\ref{lem:no_help_before_calling}, for every $y$, $x_0 < y < x_1$, the helping set of the first $w$-tuple in $SLR[y]$ does not contains $i'$. Therefore, $x_1 \leq x' = idx(op')$ and hence $x \leq x'$ as $x \leq M.widx$ before $op'$ starts. 
  \item $op'$ is a \wrt{} or an \adt{}. $p_{i'}$ reads $x'$ from $M.widx$ (line~\ref{l:w_read_M} and line~\ref{l:adt_read_M}, respectively). As $x \leq M.widx$ when $op$ terminates, $x \leq x'$. \qedhere
  \end{itemize}
\end{proof}  

% \paragraph{Completing/Discarding incomplete operations and  linearization rules}
We define $H'$ that contains every completed operation of $H$
as well as some incomplete operations, to which we add a matching
response. We first discard from $H$ every \adt{} and every silent \rd{}
 invocation without a matching response,  as well as every
 invocation of an unclassified \rd{} and \wrt{} operation. We then add at
 the end a response for each remaining \rd{} and \wrt{} operation that has 
 no response in $H$. The return value of a \rd{} operation $op$ with $idx(op) = x$ is the value $val$ in the  first $w$-tuple in $(\mathsf{w},\_,val,\_)$ in $SLR[x-1]$. Responses are  added in arbitrary order. 

A linearization $\lambda(\alpha)$ of $\alpha$ is defined in two steps. 
We first order operations in $H'$ according to their associated
index, in ascending order (rule $R0$). We then order operations with the
same index. Let $B(x)$ be the set of operations
$op \in H'$ such that $idx(op) = x$. These operations are ordered
according to the following rules.
\begin{enumerate}
\item[\emph{R1}] We place first silent \rd{}, direct \rd{} and non-definitive
  \adt{} operations. They are ordered according to the order in which they
  apply a read (for silent \rd{} and non-definitive \adt{}) or a write
  (for direct \rd{}) to $SLR[x]$ in $\alpha$.
\item[\emph{R2}] We then place helped \rd{} in arbitrary order,
  followed by the  definitive \adt{}
  operations. The definitive \adt{} are ordered according to the order
  in which they apply a read to $SLR[x]$ in $\alpha$.
\item[\emph{R3}] We next put every hidden \wrt{}. They are ordered according to
  the order in which their write to $SLR[x]$ is applied in $\alpha$.
\item[\emph{R4}] The (unique) visible \wrt{} is placed last. 
\end{enumerate}

Let $\lambda$ be the linearization obtained by applying
linearization rules $R0$-$R4$ to the operations in $H'$.

To show real-time order is preserved, 
we first prove the following facts about the precedence of operations with the same sequence number.

\begin{restatable}{lemma}{ordersamesn}
  \label{lem:order_same_sn}
  Let $op$, $op'$ be two operations in $H'$ with $idx(op) = idx(op') = x$.
  Let $x \geq 0$. 
  \begin{enumerate}
  \item If $op$ is  a  silent \rd{}, a direct \rd{}, or  a non-definitive
    \adt{}, and  $op'$ is 
    a helped \rd{}, a definitive \adt{}, or a \wrt{} operation,
    then $op'$      does not precede $op$. 
  \item If  $op$ is any operation and $op'$ is a \wrt{},
    then $op'$ does not precede $op$. 
  \end{enumerate}
\end{restatable}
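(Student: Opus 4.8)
The plan is to exploit the phase decomposition of Lemma~\ref{lem:phase}, attaching to the common index $x$ two barrier events: the step $\rho_x$ that writes the first $w$-tuple into $SLR[x]$, and the step $\mu_{x+1}$ that raises $M.widx$ from $x$ to $x+1$. By Lemma~\ref{lem:phase} these occur in the order $\rho_x$ then $\mu_{x+1}$. For every operation with index $x$ I will bound its invocation \emph{from above} by one of these barriers, and bound the response of every ``late'' operation (a helped \rd{}, a definitive \adt{}, or a \wrt{}) \emph{from below} by the same barrier; showing that the invocation of $op$ precedes the response of $op'$ rules out that $op'$ precedes $op$. Operations pending in $H$ are completed in $H'$ with a response placed at the end of $\alpha$, so the response lower bounds hold a fortiori for them; hence it suffices to argue the response bound for $op'$ that completes in $H$, whereas the invocation bounds below hold verbatim for pending operations as well.

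First I establish the invocation bounds. A silent \rd{} with index $x=x_0$ reads $SLR[x]$ at line~\ref{l:rd_last_wd} and finds no $w$-tuple, so this read precedes $\rho_x$; a direct \rd{} writes its identifier $i$ into $SLR[x]$ ahead of every $w$-tuple (line~\ref{l:rd_do_att}), so, since the sliding register preserves write order, that write precedes $\rho_x$; a non-definitive \adt{} reads $SLR[x]$ without a $w$-tuple (line~\ref{l:adt_val}), again before $\rho_x$; and a helped \rd{} $op$ has $i$ in the helping set of the first $w$-tuple of $SLR[x]$, so the writer executing $\rho_x$ had previously read from $H[i]$ a value written by $p_i$ at line~\ref{l:rd_annouce_att} during $op$, which precedes $\rho_x$. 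In all four cases $op$ performs a step before $\rho_x$, so the invocation of $op$ precedes $\rho_x$. For the remaining types, a definitive \adt{} and a \wrt{} each read $M.widx = x$ (line~\ref{l:adt_read_M}, line~\ref{l:w_read_M}); since $M.widx$ is strictly increasing (Proposition~\ref{prop:widx}) and equals $x$ only before $\mu_{x+1}$, this read, hence the invocation of $op$, precedes $\mu_{x+1}$. Combining with $\rho_x < \mu_{x+1}$, every operation with index $x$ has its invocation before $\mu_{x+1}$.

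Next I bound the responses of the late operations. A \wrt{} $op'$ applies $\mathsf{writeMax}(x+1,\cdot)$ (line~\ref{l:w_write_M}); as $\mu_{x+1}$ is the first $\mathsf{writeMax}$ raising $M.widx$ to $x+1$, this step occurs at or after $\mu_{x+1}$, so the response of $op'$ follows $\mu_{x+1}$. A helped \rd{} $op'$ returns at line~\ref{l:rd_return_help} only after reading $M.ridx[i] \geq x$; since the $\mathsf{writeMax}$ that sets $ridx[i]$ to $x$ simultaneously sets $M.widx$ to $x+1$ (as used in the proof of Lemma~\ref{lem:sn_op_idx}, via Lemma~\ref{lem:readers_iff_ridx} and Lemma~\ref{lem:same_ridx}), this read occurs when $M.widx \geq x+1$, i.e.\ at or after $\mu_{x+1}$, so again the response of $op'$ follows $\mu_{x+1}$. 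A definitive \adt{} $op'$ reads a $w$-tuple from $SLR[x]$ (line~\ref{l:adt_val}), which can only happen after $\rho_x$, so its response follows $\rho_x$.

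These bounds yield both claims directly. For (1), $op$ is early, so its invocation precedes $\rho_x$, while $op'$ is late and its response follows $\rho_x$ (for a definitive \adt{}) or follows $\mu_{x+1} > \rho_x$ (for a helped \rd{} or a \wrt{}); thus the invocation of $op$ precedes the response of $op'$, and $op'$ does not precede $op$. For (2), $op$ is arbitrary with invocation before $\mu_{x+1}$ and $op'$ is a \wrt{} with response after $\mu_{x+1}$, so the invocation of $op$ precedes the response of $op'$. The one delicate point I expect is the invocation bound for a helped \rd{}: arguing that the value the helping writer read in $H[i]$ was written by the very operation $op$, rather than a stale value left by an earlier \rd{} of $p_i$, requires the ``helped at most once'' guarantee together with Lemma~\ref{lemma:no_help}; all other cases are immediate from the code and the phase structure.
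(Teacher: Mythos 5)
Your proof is correct, and at its core it follows the same route as the paper: both arguments hinge on the two barriers you name, the first $w$-tuple write $\rho_x$ to $SLR[x]$ and the step $\mu_{x+1}$ raising $M.widx$ to $x+1$. For claim (1) the paper's proof is exactly your argument in compressed form: a silent \rd{}, direct \rd{}, or non-definitive \adt{} accesses $SLR[x]$ before $\rho_x$, whereas a helped \rd{}, definitive \adt{}, or \wrt{} cannot terminate before $\rho_x$. Where you genuinely diverge is claim (2): the paper argues by contradiction that an operation starting after the \wrt{} terminates sees a $w$-tuple in $SLR[x]$ and $M.widx \geq x+1$, and explicitly concludes only that $op$ cannot be a silent \rd{}, direct \rd{}, or non-definitive \adt{}; the cases where $op$ is itself a \wrt{} or a definitive \adt{} are left implicit (their index would be read from $M.widx \geq x+1$), and the case where $op$ is a helped \rd{} with index $x$ is not addressed in that proof at all --- the paper supplies the missing fact only afterwards, as Lemma~\ref{lem:helped_read}. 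Your uniform invocation bound (every index-$x$ operation is invoked before $\mu_{x+1}$, with the helped-\rd{} case handled via the ``no stale $H[i]$ help'' argument) is precisely that lemma, and you correctly flag it as the delicate point and point to the right supporting machinery (Lemmas~\ref{lemma:no_help}, \ref{lem:readers_iff_ridx}, \ref{lem:no_help_before_calling}), none of which depends on the present lemma, so there is no circularity; in this respect your treatment of (2) is more complete than the paper's. One small repair: your response bound for a helped \rd{} $op'$ considers only the exit at line~\ref{l:rd_return_help}; $op'$ can also return through the until-test of line~\ref{l:rd_att_success}, but in that path it has just read from $SLR[x]$ a window containing the first $w$-tuple and has itself applied the $\mathsf{writeMax}$ of line~\ref{l:rd_help_M2}, so its response still follows $\mu_{x+1}$ (a fortiori $\rho_x$) and both of your conclusions go through unchanged.
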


\begin{proof}
  The proofs are as follows.
  \begin{enumerate}
  \item When $op'$, which is a helped \rd{}, a definitive \adt{}, or a \wrt{} operation,
    terminates $SLR[x]$ contains a $w$-tuple. A
    silent \rd{}, a non-definitive \adt{}, or a direct \rd{} $op$ applies a
    read (line~\ref{l:no_new_write}, line~\ref{l:adt_val})
    or write (line~\ref{l:w_write_SLR}) to $SLR[x]$ before any
    $w$-tuple  is written to it. Hence $op'$ cannot precede
    $op$. 
  \item Suppose for contradiction that $op'$ precedes $op$. When the \wrt{} operation  $op'$ terminates, $SLR[x]$ contains a $w$-tuple (line~\ref{l:w_write_SLR}) and  $M.widx \geq x+1$ (line~\ref{l:w_write_M}). Hence, every operation that starts after $op'$ and sees a $w$-tuple in the sequence held by  $SLR[x]$. Therefore, as $idx(op) = x$, $op$ cannot be a    silent \rd{}, a direct \rd{} or non-definitive \adt{}. \qedhere
  \end{enumerate}
\end{proof}

\begin{restatable}{lemma}{helpedread}
  \label{lem:helped_read}
  Let $op$ be a helped \rd{} operation in $H'$ with $idx(op) = x$. The first write of a $w$-tuple to $SLR[x]$ happens during the execution interval of $op$. 
\end{restatable}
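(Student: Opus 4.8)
The plan is to let $wop$ be the \wrt{} operation, by a writer $p_j$, that writes the first $w$-tuple into $SLR[x]$, and let $\sigma$ denote that write step (line~\ref{l:w_write_SLR}). Since $op$ is helped, the identifier $i$ of its process $p_i$ belongs to the helping set $h$ of this first $w$-tuple; hence $wop$ inserted $i$ into $h$ after reading $M.ridx[i] = r$ and $H[i] = a$ with $r < a$ (lines~\ref{l:w_read_M}, \ref{l:w_att}, and~\ref{l:w_help_needed?}) while observing $i \notin \textsc{readers}(SLR[a])$ (line~\ref{l:w_att_fail}). I would prove the two endpoints separately: $\sigma$ occurs before the response of $op$, and $\sigma$ occurs after its invocation.

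For ``$\sigma$ before the response'', I would use that a helped $op$ returns (line~\ref{l:rd_return_help}) only after reading, at line~\ref{l:rd_M}, a value $M.ridx[i] > lsr$, where $lsr \ge x_0$ holds throughout $op$ (the local index only grows from its initial value $x_0$, by Proposition~\ref{prop:att_x_increase}). By the minimality of $idx(op) = x$ --- the smallest index exceeding $x_0$ in whose sliding register $p_i$ is ever recorded --- $p_i$ is recorded in no $SLR[y]$ with $x_0 < y < x$, so reading $M.ridx[i] > x_0$ actually yields $M.ridx[i] \ge x$. By Lemma~\ref{lem:readers_iff_ridx} this means $i \in \textsc{readers}(SLR[x])$, which, as $op$ is helped, can hold only once a $w$-tuple is present in $SLR[x]$ (the helping set is consulted only then, line~\ref{l:read_stop_2}), i.e.\ only after $\sigma$. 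Since this read of $M$ precedes the response of $op$, so does $\sigma$.

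The hard part is ``$\sigma$ after the invocation'', which I would argue by contradiction: assume $\sigma$, and hence all of $wop$'s earlier reads of $M$, $H[i]$, and $SLR[a]$, happen before $op$ is invoked. The crux is to bound the value $a$ read by $wop$ from $H[i]$. Reusing the argument of Lemma~\ref{lem:no_help_before_calling}, there is a read $op'$ of $p_i$ that last set $lsr$ to $x_0$ (so that $i \in \textsc{readers}(SLR[x_0])$); by Lemma~\ref{lemma:no_help} no read of $p_i$ up to $op'$ writes a value larger than $x_0$ into $H[i]$, and every later read before $op$ is silent and leaves $H[i]$ untouched (line~\ref{l:no_new_write}). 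Hence $H[i] \le x_0$ at every instant before $op$'s invocation, so $a \le x_0$. On the other hand, $wop$ reads $M.widx = x = idx(op) > x_0$, so by Proposition~\ref{prop:widx} the step that advances $M.widx$ from $x_0$ to $x_0+1$ has already been applied; by Proposition~\ref{lem:same_readers} the reader set of $SLR[x_0]$ is already fixed at that step, so it records $i \in \textsc{readers}(SLR[x_0])$ into $M.ridx$, giving $M.ridx[i] \ge x_0$. By monotonicity (Lemma~\ref{lem:same_ridx}(2)), $wop$ then reads $r = M.ridx[i] \ge x_0 \ge a$, contradicting $r < a$.

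I expect this second direction to be the main obstacle: the difficulty is to rule out that $i$ was placed in $SLR[x]$'s helping set on account of a \emph{stale} $H[i]$ value left by an earlier read of $p_i$ rather than by $op$ itself. The resolution combines the upper bound $a \le x_0$ (from Lemma~\ref{lemma:no_help}) with the lower bound $M.ridx[i] \ge x_0$ forced by $wop$ having already observed $M.widx = x > x_0$; making this precise relies on Proposition~\ref{lem:same_readers} and the phase decomposition of Lemma~\ref{lem:phase} to guarantee that $i$'s membership in $\textsc{readers}(SLR[x_0])$ is reflected in $M.ridx$ no later than the step setting $M.widx = x_0+1$.
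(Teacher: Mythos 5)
Your proposal is correct and rests on exactly the same ingredients as the paper's proof (Lemma~\ref{lemma:no_help}, Lemma~\ref{lem:readers_iff_ridx}, Proposition~\ref{lem:same_readers}, and the monotonicity facts of Proposition~\ref{prop:widx} and Lemma~\ref{lem:same_ridx}(2)), but you organize the hard direction differently, and arguably more cleanly. The paper first pins down $x_1 \leq x$ and then splits into cases: if $x_1 < x$, the writer's read of $x$ from $M.widx$ must follow $op$'s read of $x_1$, which yields the real-time claim directly; if $x_1 = x$, it shows $r_i = x_0$ exactly, and that $a_i$ is either $x_1$ (so the writer read $H[i]$ after $op$ wrote it, again giving real-time) or $\leq x_0$ (contradicting the helping test). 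You instead run a single proof by contradiction: if the first $w$-tuple were written before $op$'s invocation, the writer's reads of $M$ and $H[i]$ would also precede the invocation, forcing $a \leq x_0$ (no \rd{} of $p_i$ ever puts a value $> x_0$ into $H[i]$ before $op$ starts) while $r \geq x_0$ (since the writer saw $M.widx = x > x_0$, the writeMax advancing $M.widx$ to $x_0+1$ has already recorded $i \in \textsc{readers}(SLR[x_0])$ into $M.ridx$, and $M.ridx[i]$ is monotone); so the test at line~\ref{l:w_help_needed?} fails and $i$ could not be in the helping set. This collapses the paper's two-level case analysis into one argument; what the paper's version buys in exchange is an explicit identification of which step of $op$ (its read of $M.widx$ or its write to $H[i]$) the writer must have observed.

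Two small points you should patch. First, your argument for ``$\sigma$ before the response'' assumes the helped \rd{} returns at line~\ref{l:rd_return_help}; a helped \rd{} can also exit the repeat loop at line~\ref{l:rd_att_success} and return at line~\ref{l:rd_return_no_help} (it wrote $i$ into $SLR[x]$ after the $w$-tuple but found itself in that tuple's helping set), and it can even be a pending operation whose response exists only in $H'$. Both cases are immediate (in the first, $op$ itself read a window of $SLR[x]$ containing the $w$-tuple; in the second, the response sits at the end of $\alpha$), but they need a sentence. Second, your contradiction argument presupposes the existence of the earlier operation $op'$ that established $i \in \textsc{readers}(SLR[x_0])$; when $op$ is $p_i$'s first \rd{} there is no such $op'$, but then $x_0 = -1$, $H[i]$ still holds its initial value $-1$ before $op$ starts, and $M.ridx[i] \geq -1$ always, so $r \geq a$ and the same contradiction follows --- the paper handles this case explicitly.
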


\begin{proof} 
  As $op$ is a helped \rd{}  with $idx(op) = x$, $op$ returns after seeing that the first $w$-tuple $(\mathsf{w},\_,\_,h)$ in $SLR[x]$ is such that $i \in h$. The first write of a $w$-tuple to $SLR[x]$ therefore precedes the end of $op$. We show that this step happens after $op$ starts.  
  
  $op$ starts with $lsr = x_0$. Since $op$ is not silent, it does not terminate before entering the repeat loop. 
   In the first iteration, $x_1$ is read from $M.widx$ with $x_0 < x_1$ (Proposition~\ref{prop:att_x_increase}).
  By Lemma~\ref{lem:no_help_before_calling}, there is no $y$, $x_0 < y < x_1$ with $i \in\textsc{readers}(SLR[y])$. In particular, the first $w$-tuple written to $SLR[y]$ has no $i$ in its helping set. Hence $x_1 \leq x = idx(op)$. 

  Let $q$ be a process that writes a $w$-tuple $(\mathsf{w},\_,\_,h)$ to $SLR[x]$ with $i \in h$.
  If $x_1 < x$, we are done as $q$ first reads $x$ from $M.widx$ (which must happen after $x_1$ is read from $M.widx$ in $op$ as $M.widx$ is increasing by Proposition~\ref{prop:widx}).
  Let us assume that $x_1 = x$. Let $r_i$ and $a_i$ be respectively the value read from $M.ridx[i]$ and $H[i]$ by $q$ (in line~\ref{l:w_read_M} and line~\ref{l:w_att}, respectively). $r_i = a_i = -1$ if $op$ is the first \rd{} by $p_i$ (in which case $x_0 = -1$). Otherwise,  $i \in \textsc{readers}(SLR[x_0])$, and therefore $M.ridx[i] = x_0$ immediately after $M.widx$ is set to $x_0+1 \leq x_1$ (lines~\ref{l:rd_update_ridx1}-\ref{l:rd_help_M1}, lines~\ref{l:rd_update_ridx2}-\ref{l:rd_help_M2} or lines~\ref{l:w_update_ridx}-\ref{l:w_write_M}). 
  Hence, as for every $win$ read from $SLR[x']$, $x_0 < x' < x_1$, $i \notin $\textsc{readers}$(win)$, $M.ridx[i] = x_0 = r_i$ when it is read by $q$ (Lemma~\ref{lem:readers_iff_ridx}).

  $H[i]$ is modified by $p_i$ in line~\ref{l:rd_annouce_att} and  contains a value read from $M.widx$ at the beginning of the iteration of the repeat loop. Therefore, $a_i = x_1$ or $a_i \leq x_0$. If $a_i= x_1$, $H$ is read by $q$ after it has been changed to $x_1$, which happens in $op$, and therefore $q$ writes to $SLR[x]$ after $op$ starts. If $a_i \leq x_0$, then $q$ does not place $i$ in the helping set $to\_help$ as $a_i \leq x_0 \leq r_i$ (line~\ref{l:w_help_wd}), contradicting the assumption that $q$ writes to $SLR[x]$ a tuple $(\mathsf{w},\_,\_,h)$ with $i \in h$. \end{proof}

Next lemma shows that real time order among operations is preserved: 

\begin{restatable}{lemma}{rtorder}
\label{lem:rt_order}
If an operation $op$ terminates before an operation $op'$ starts in $H'$,
then $op$ precedes $op'$ in $\lambda$. 
\end{restatable}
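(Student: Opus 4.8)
The plan is to prove Lemma~\ref{lem:rt_order} by combining the index-monotonicity established in Lemma~\ref{lem:sn_op_idx} with a careful tie-breaking analysis when two operations share the same index. Suppose $op$ terminates before $op'$ starts. By Lemma~\ref{lem:sn_op_idx}, $idx(op) \leq idx(op')$. If $idx(op) < idx(op')$, then rule $R0$ immediately places $op$ before $op'$ in $\lambda$, and we are done. So the entire work is concentrated in the case $idx(op) = idx(op') = x$, where $op$ and $op'$ both lie in $B(x)$ and their relative order is decided by rules $R1$--$R4$. I would therefore reduce the lemma to showing: if $op, op' \in B(x)$ and $op$ terminates before $op'$ starts, then $op$ precedes $op'$ under $R1$--$R4$.

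For the same-index case, I would exploit Lemma~\ref{lem:order_same_sn}, which was stated precisely to support this step. The strategy is a case analysis on the types of $op$ and $op'$, checking each possible ordering imposed by $R1$--$R4$ against the real-time constraint. The useful contrapositive form is: since $op$ precedes $op'$ in real time, I must rule out any rule that would place $op'$ strictly before $op$. Lemma~\ref{lem:order_same_sn}(1) handles the case where $op'$ is a helped \rd{}, a definitive \adt{}, or a \wrt{}, and $op$ is a silent \rd{}, direct \rd{}, or non-definitive \adt{}: it guarantees $op'$ does not precede $op$, which would contradict the real-time hypothesis if the linearization had reversed them, so the linearization must agree with real time. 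Lemma~\ref{lem:order_same_sn}(2) similarly dispatches the case where $op'$ is a \wrt{}. The remaining subcases — two operations of the same tier ordered by the time they access $SLR[x]$ (rule $R1$, and the \adt{} part of $R2$, and $R3$) — follow because the ordering within these tiers is defined by the order of the physical $SLR[x]$ accesses, and if $op$ finishes before $op'$ begins then $op$'s relevant access to $SLR[x]$ precedes $op'$'s, so $\lambda$ respects real time directly. For helped \rd{} operations ordered arbitrarily within $R2$, I would invoke Lemma~\ref{lem:helped_read}: the first write of a $w$-tuple to $SLR[x]$ occurs during the execution interval of each such helped \rd{}, so any two helped reads with index $x$ have overlapping intervals and hence are never ordered by real time, making the arbitrary order harmless.

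The main obstacle I anticipate is organizing the case analysis so that every pairing of operation types is covered exactly once and the direction of each $R1$--$R4$ ordering is matched correctly against the real-time constraint; in particular one must be careful that the tie-breaking rules order operations by the time of their $SLR[x]$ access, and verify that ``terminates before the other starts'' actually forces the corresponding accesses into the same order (using that a completing operation performs its access before returning, and a starting operation performs its access after its invocation). The subtle subcases are those mixing the $R1$ tier with the $R2$ definitive-\adt{} or helped-\rd{} tier, and the interaction between helped reads and writes, which is exactly where Lemmas~\ref{lem:order_same_sn} and~\ref{lem:helped_read} are deployed. Once those structural facts are in hand, each subcase is a short argument, so I would present the proof as: reduce to same index via Lemma~\ref{lem:sn_op_idx}; then a compact table or enumeration of type-pairs, each resolved by the appropriate invocation of Lemma~\ref{lem:order_same_sn}, Lemma~\ref{lem:helped_read}, or the access-order definition of $R1$--$R4$.
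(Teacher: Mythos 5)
Your proposal is correct and takes essentially the same route as the paper's proof: reduction to the equal-index case via Lemma~\ref{lem:sn_op_idx} and rule $R0$, followed by a type-pair analysis inside $B(x)$ resolved by Lemma~\ref{lem:order_same_sn} (used in contrapositive to exclude $op$ lying in a later tier than $op'$), by the order of $SLR[x]$ accesses for operations in the same tier, and by Lemma~\ref{lem:helped_read} for helped \rd{} operations. When you write out the enumeration, make explicit the one pairing your sketch leaves implicit --- $op$ a definitive \adt{} terminating before a helped \rd{} $op'$ starts, which rule $R2$ would order against real time --- since Lemma~\ref{lem:order_same_sn} does not cover it; the paper resolves it by a second use of Lemma~\ref{lem:helped_read}: $op$ reads a $w$-tuple from $SLR[x]$ whose first write lies inside $op'$'s execution interval, so this real-time order cannot occur.
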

\begin{proof}
  By Lemma~\ref{lem:sn_op_idx}, $idx(op) \leq idx(op')$. If $idx(op) <
  idx(op')$, $op$ is before $op'$ in $\lambda$ by rule $R0$. We assume
  in the following that  $idx(op) = idx(op') = x$.
  
  \begin{itemize}
  \item If $op'$ is a silent \rd{}, a direct \rd{}, or non-definitive
    \adt{}, it follows from Lemma~\ref{lem:order_same_sn}(1)
    that $op$ also falls into this category. Therefore, $op$ and
    $op'$ are both ordered in $\lambda$ using rule $R1$. They are ordered
    according to the order in which a step in their execution
    interval occurs in $\alpha$. Hence, $op$ precedes $op'$ in $\alpha$ implies
    that $op$ precedes $op'$ in $\lambda$.
  \item If $op'$ is a helped \rd{} or a definitive \adt{}, it follows
    from Lemma~\ref{lem:order_same_sn}(1) and Lemma~\ref{lem:order_same_sn}(2) that $op$
    also falls into this category, or is a silent or direct \rd{}, or a
    non-definitive \adt{}. In the latter case, $op$ is ordered in
    $\lambda$ according to rule $R1$, and $op'$, rule $R2$, from which
    we have that $op$ precedes $op'$ in $\lambda$.

    In the former case, as $op$ and $op'$ are both helped $\rd{}$, their execution
    interval intersect (Lemma~\ref{lem:helped_read}) and thus $op$
    cannot precedes $op'$. If $op$ and $op'$ are both definitive
    \adt{}, they are placed according to the order in which they apply
    a read to $SLR[x]$ in $\alpha$. Hence $op$ is before $op'$ in $\lambda$.
    If $op$ is an helped \rd{} and $op'$ a definitive \adt{}, $op$ is
    placed before $op'$ in $\lambda$ by rule $R2$. The last case
    remaining is $op$ being a definitive \adt{} and $op'$, a helped
    \rd{}. As $op$ is a definitive \adt{}, it sees a $w$-tuple 
    mark in $SLR[x]$, but the step in which the first such tuple
    is written to $SLR[x]$ is in the execution interval of $op'$
    (Lemma~\ref{lem:helped_read}). Therefore $op$ cannot terminate
    before $op'$ starts. 
  \item If $op'$ is a \wrt{} visible or hidden,
    Lemma~\ref{lem:order_same_sn}(2) implies that $op$ cannot be a
    \wrt. Therefore, $op$ is placed according to rule $R1$ or $R2$,
    and sthus precedes $op'$ which is placed after, according to $R3$
    or $R4$. \qedhere
  \end{itemize}
\end{proof}

We next show that any \rd{} returns the last value written that precedes it in $\lambda$. 
\begin{restatable}{lemma}{readlast}
\label{lem:read_last}
If \rd{} operation $op$ in $H'$ returns $v$,
then $v$ is the value written by the last \wrt{} that precedes $op$ in $\lambda$, or the initial value $v_{0}$ if there is no such \wrt{}.
\end{restatable}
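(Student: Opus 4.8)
The plan is to show that the value $v$ returned by a \rd{} operation $op$ with $idx(op) = x$, namely the value $val$ in the first $w$-tuple of $SLR[x-1]$, is exactly the value written by the \wrt{} operation that is linearized last among those preceding $op$. Since $op$ is placed by rule $R0$ at index $x$ and all \wrt{} operations at index $x$ are placed \emph{after} $op$ by rules $R3$ and $R4$, the \wrt{} operations preceding $op$ in $\lambda$ are precisely those with index $< x$ together with, possibly, none at index $x$. So I first argue that the last \wrt{} preceding $op$ in $\lambda$ is the unique visible \wrt{} with index $x-1$ (the one whose $w$-tuple is first in $SLR[x-1]$), writing value $val$.

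First I would establish that for each index $y$ with a written $w$-tuple, there is exactly one \emph{visible} \wrt{} operation, namely the one whose $w$-tuple is first in $SLR[y]$ (this is the operation $\rho_y$ of Lemma~\ref{lem:phase}, and uniqueness follows from Lemma~\ref{lem:write_once} since each process writes at most once to $SLR[y]$). By the classification, exactly this visible \wrt{} has $idx = y$ and value $v_y = val_y$, while any hidden \wrt{} at index $y$ is linearized before the visible one by $R3$ versus $R4$. Consequently, ordering operations by index via $R0$, the \wrt{} operations appearing in $\lambda$ are the visible ones at indices $0,1,\ldots$ (writing $v_0,v_1,\ldots$) interleaved with hidden ones, and among all \wrt{} operations with index $\leq x-1$ the one linearized last is the visible \wrt{} at index $x-1$, which writes $val = v_{x-1}$.

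Next I would confirm that no \wrt{} preceding $op$ in $\lambda$ has index $\geq x$: by $R0$ any \wrt{} with index $> x$ comes after $op$, and by rules $R3$, $R4$ every \wrt{} with index exactly $x$ comes after $op$ (which is placed by $R1$ or $R2$). Hence the last \wrt{} before $op$ in $\lambda$ is indeed the visible \wrt{} at index $x-1$. It remains to treat the boundary: if $x = 0$ then $SLR[x-1] = SLR[-1]$ holds the initial $w$-tuple $(\mathsf{w},j_0,v_0,\emptyset)$, so $op$ returns $v_0$, and there is no preceding \wrt{}, matching the ``initial value'' clause. For silent \rd{} operations, $idx(op) = x_0$ equals the index of a preceding direct \rd{} returning the same $val$, so the claim transfers directly.

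The main obstacle will be handling the case where a \rd{} and a \wrt{} share the same index $x$, and ensuring the return value $val$ read from $SLR[x-1]$ is well defined and stable. By Proposition~\ref{prop:lower_invalidated}, whenever $SLR[x]$ is accessed, $SLR[x-1]$ already contains a $w$-tuple, so $val$ is well defined, and by Lemma~\ref{lem:write_once} the first $w$-tuple of $SLR[x-1]$ is never overwritten, so $val$ is fixed. The delicate point is verifying that this fixed $val$ is consistent across the three ways $op$ obtains its return value (via \textsc{getValue}$(x_0{-}1)$, \textsc{getValue}$(lsr{-}1)$, or the response added in $H'$), all of which read the first $w$-tuple of $SLR[idx(op)-1]$; I would reconcile these by tracing each return path and invoking Proposition~\ref{lem:same_readers} and Lemma~\ref{lem:write_once} to conclude they all yield $v_{x-1}$.
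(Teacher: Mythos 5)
Your proposal is correct and follows essentially the same route as the paper's proof: identify the returned value as the value in the first $w$-tuple of $SLR[idx(op)-1]$, observe that its writer is the unique visible \wrt{} with index $idx(op)-1$ placed last by rule $R4$ among all \wrt{} operations preceding $op$ (since every \wrt{} with index $\geq idx(op)$ follows $op$ by rules $R0$, $R3$, $R4$), and then handle the silent-\rd{} and $idx(op)=0$ cases separately. Your added explicitness about the uniqueness of the visible \wrt{} (via Lemma~\ref{lem:write_once}) and the stability of the returned value across return paths is a refinement of detail, not a different argument.
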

\begin{proof}
Let $x = idx(op)$, and let $p_j$ be the process that performs $op$.

We first consider the case $x > 0$. If $op$ is helped or direct, the value returned by $op$ is
  the value $v$ in the first $w$-tuple $(\mathsf{w},i,v,\_)$ stored in $SLR[x-1]$
  (lines~\ref{l:getValue}). Note that $i$  and $v$ are 
  well defined, as when $p_j$ reads from or writes to $SLR[x]$,
  a $w$-tuple has already be written to $SLR[s-1]$
  (Proposition~\ref{prop:lower_invalidated}). Let $wop$ be the operation $p_i$ is
  performing when it writes  $(\mathsf{w},i,v,\_)$ to $SLR[x-1]$. By definition, $idx(op) =
  x-1$, and $wop$ is a visible \wrt{}. % as $p_i$ is the first to write
  % a $w$-tuple to  to $SLR[s-1]$.
  Therefore, among the operations with index  $x-1$, $wop$ is placed last in $\lambda$ (rule $R4$), and there is no other \wrt{} operation between  $wop$ and $op$ (every \wrt{} operation with index $\geq x$ is after $op$ in $\lambda$.)
  
  If $rop$ is silent, it is preceded by a direct \rd{} operation $op'$
  by the same process with the same index $x$. By the
  linearization rule $R1$, there is no \wrt{} operation between $op'$
  and $op$ in $\lambda$. Also, $op$ and $op'$ return the same
  value. By the same reasoning as above, it follows that $op$ returns
  the input value of the last \wrt{} operation that precedes it in
  $\lambda$.

  If $x=0$, $op$ returns the initial value $v_0$. Indeed, $SLR[-1]$ is
  initialized with a sequence that contains  a single tuple 
  $(\mathsf{w},i_0,v_0,\emptyset)$. There is no \wrt{} that precedes $op$ in $\lambda$, as for every  \wrt{} operation $wop$,  $idx(wop) \geq 0$. % (as the
  % initial value of $M.widx$ is 0, and $M$ is a max register).
  % (line~\ref{l:w_read_M}).). 
\end{proof}

Finally, we show 
(Lemma~\ref{lem:adt_complete} and Lemma~\ref{lem:adt_accurate}) 
that a pair $(p,v)$  is in the set returned by an  \adt{} operation $op$ if and only if there is a \rd{} by $p$ returning $v$ precedes $op$. 

\begin{lemma}
%HA: the opposite direction is trivial, and not used
  \label{lem:read_record_once}
  If $i \in \textsc{readers}(SLR[x])$ %  if and only if 
  then there exists a \rd{} operation $op$ in $H'$ by process $p_i$ with $idx(op) = x$.
\end{lemma}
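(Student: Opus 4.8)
The plan is to determine \emph{why} $i$ is recorded in $SLR[x]$ and then exhibit the \rd{} operation responsible for it, verifying that its classification index is exactly $x$. By the definition of the \textsc{readers} function, $i \in \textsc{readers}(SLR[x])$ splits into two cases: either (A) $i$ appears before the first $w$-tuple in $SLR[x]$, or (B) $i$ belongs to the helping set of that first $w$-tuple.

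In Case A, only $p_i$ ever writes the bare identifier $i$ to a sliding register (line~\ref{l:rd_do_att}), and by Lemma~\ref{lem:write_once} it does so at most once to $SLR[x]$; hence there is a unique \rd{} operation $op$ of $p_i$, with starting value $x_0$ of $lsr$, during which $i$ is written to $SLR[x]$ (so $lsr = x$ in that iteration). I would argue $idx(op) = x$ as follows: since $i$ precedes every $w$-tuple in the final sequence and writes to a sliding register are append-only and order-preserving (the window never overflows, as there are at most $m+n$ writes by Lemma~\ref{lem:write_once}), the window $p_i$ reads right after writing $i$ already has $i$ before any $w$-tuple, so this iteration is successful and $op$ is a \emph{direct} \rd{} terminating with $lsr = x$; and by Lemma~\ref{lemma:no_help} no index in $(x_0,x)$ records $i$, so $x$ is the smallest recorded index above $x_0$. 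Thus $op$ is classified direct with $idx(op)=x$, and since $x>x_0$ is recorded, $op$ is not unclassified and appears in $H'$.

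Case B is the main obstacle. Here $i$ was placed in the helping set by some writer $p_j$, which (lines~\ref{l:w_read_M}--\ref{l:w_att_fail}) first read $H[i]=a_i$ with $M.ridx[i]=r_i<a_i$ and saw $i\notin\textsc{readers}(SLR[a_i])$, then wrote its $w$-tuple (with $i$ in the helping set) to $SLR[x]$. The value $a_i$ was announced by $p_i$ at line~\ref{l:rd_annouce_att} during some \rd{} operation $op$, which I would take as the witness. The delicate points are: (i) $x$ is recorded (immediate, since $i$ is in the helping set), so $op$ is classified \emph{helped} rather than direct; and (ii) $x$ is the smallest recorded index above $op$'s starting $lsr$. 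For (ii) I expect to combine Lemma~\ref{lemma:no_help} (the intermediate attempts of $op$ never record $i$), Lemma~\ref{lem:no_help_before_calling} (no help is found in the gap before the first $M.widx$ is read), the monotonicity of $M.widx$ and $M.ridx$ together with Lemma~\ref{lem:readers_iff_ridx}, and the at-most-once helping property (once $p_i$ detects $M.ridx[i]>lsr$ it returns immediately, so a single helped index is recorded per operation), reusing the quantitative reasoning already developed in the proofs of Lemma~\ref{lem:wait_free} and Lemma~\ref{lem:helped_read}.

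The principal difficulty throughout is Case B: establishing that the index $x$ at which the writer deposits its help coincides with the classification index of the \rd{} operation that triggered the help --- ruling out the scenario where $p_i$ is concurrently recorded at some \emph{smaller} index (by another helper, or by a direct write), which would make the smallest-recorded-index definition assign a value other than $x$. I anticipate this is precisely where the announce-before-help ordering and the uniqueness of helping per operation must be invoked with care.
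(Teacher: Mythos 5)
Your decomposition and witnesses are exactly the paper's: your Case A is the paper's ``direct'' case, argued the same way (uniqueness of $p_i$'s write to $SLR[x]$ via Lemma~\ref{lem:write_once}, minimality of the recorded index via Lemma~\ref{lemma:no_help}), and that half is complete. The problem is Case B, which you correctly single out as the crux but then do not prove: listing Lemmas~\ref{lemma:no_help}, \ref{lem:no_help_before_calling}, \ref{lem:readers_iff_ridx} and ``the quantitative reasoning already developed'' elsewhere is a plan, not an argument, and the entire difficulty of this lemma lives in that step. Concretely, with $x_0$ the starting $lsr$ of the witness operation $op$ (the one that announced $H[i]=a_i$) and $x$ the index where the helper deposits its $w$-tuple, you must show that no index in $(x_0,x)$ records $i$; Lemma~\ref{lemma:no_help} only covers $(x_0,a_i)$, and the interval $[a_i,x)$ is the actual content, which your proposal leaves open.

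For comparison, here is how it closes with the tools you named, so you can see the gap is real but short. First, $a_i\le x$: otherwise $p_i$ read $a_i>x$ from $M.widx$ before the helper wrote to $SLR[x]$, and $M.widx>x$ already forces a $w$-tuple in $SLR[x]$ (Lemma~\ref{lem:phase}), contradicting that the helper's tuple is the \emph{first} one there. Second, for any $x''$ with $a_i\le x''<x$: the helper reads $widx=x$ and $ridx[i]=r_i<a_i$ in a \emph{single} read of $M$; if $i\in\textsc{readers}(SLR[x''])$, then since $M.widx=x\ge x''+1$ at that read, Lemma~\ref{lem:readers_iff_ridx} together with the monotonicity of $M.ridx$ (Lemma~\ref{lem:same_ridx}(2)) forces $r_i\ge x''\ge a_i$ at that same read --- contradiction. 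This one observation also disposes of your ``double-helping'' worry: a second helper recording $i$ at a smaller index would likewise surface in $M.ridx[i]$ and falsify the test $ridx[i]<H[i]$. Note, finally, that the paper's own proof compresses all of this into the parenthetical assertion $x=x_\ell$ (help index equals announced index), which is not literally true in general --- only $x\ge x_\ell$ holds, and ruling out records in $[x_\ell,x)$ needs exactly the $ridx$ argument above. So you located precisely the right difficulty (one the paper itself glosses over), but a proof has to resolve it rather than defer it.
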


\begin{proof}
$\implies$ Suppose that $i$ precedes any $w$-tuple in $SLR[x]$. $p_i$ writes $i$ to $SLR[x]$ (line~\ref{l:rd_do_att}) while performing the $k$th iteration of the repeat loop (lines~\ref{l:rd_start_repeat}-\ref{l:rd_att_success}) in some \rd{} operation $op$. 
  Let $x_0$ be the value of $lsr$ when $op$ starts, and let $x_1 < \ldots < x_k$ be the value read from $M.widx$ (line~\ref{l:rd_M}) in the first $k$ iterations of the loop. Note that $x = x_k$. 
  As before writing $i$ to $SLR[x_k =x]$, $H[i]$ is changed to $x_k$ (line~\ref{l:rd_annouce_att}), it follows from Lemma~\ref{lemma:no_help} that  for every $x', x_0 < x' < x_k$, $i \notin \textsc{readers}(SLR[x'])$. As $i \in \textsc{readers}(SLR[x_k])$, $x_k = \min\{x' > x_0: i \in $\textsc{readers}$(SLR[x'])\}$ and therefore by definition $x = x_k = sn(op)$.

    Suppose now that $i$ is in the helping set of the first $w$-tuple $(\mathsf{w},j,\_,h)$ written to $SLR[x]$. Before writing $(\mathsf{w},j,\_,h)$ (line~\ref{l:w_write_SLR}), $p_j$ reads $x$ from $M.widx$ , $y$ from $M.ridx[i]$ (line~\ref{l:w_read_M}) and $x_\ell$ from $H[i]$ (line~\ref{l:w_att}). As $i$ is placed into $to\_help$, $y < x_\ell$.

    $H[i]$ is changed by $p_i$ (line~\ref{l:rd_annouce_att}) in an iteration of the repeat loop  while performing some \rd{} operation $op$. Let $x_0$ be the value of $lsr$ when $op$ starts,  and  $p_i$ is performing iteration $\ell$ when it writes $x_\ell$ to $H[i]$. 
    By the code, $x_\ell$ is the value read from $M.widx$ in that iteration, and by Lemma~\ref{lemma:no_help}, $i \notin \textsc{readers}(SLR[x'])$, for every $x', x_0 < x' < x_\ell$.

    To summarize, starting from $i \in $\textsc{readers}$(SLR[x])$, we have shown that there exists a \rd{} operation $op$ by $p_i$ that starts with  $lsr = x_0$, and that for every $x', x_0 < x' < x (=x_\ell)$, $i \notin$\textsc{readers}$(SLR[x'])$. By definition, $idx(op)  =x$. \qedhere
\end{proof}

\begin{restatable}{lemma}{adtcomplete}
  \label{lem:adt_complete}
  Let $aop$ be an \adt{} operation in $H'$ that returns $A$. If there
  is a \rd{} operation $rop$ by process $p_j$ returning $v$ and that
  precedes $aop$ in $\lambda$, $(j,v) \in A$. 
\end{restatable}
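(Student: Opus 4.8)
The plan is to compare the two indices $x = idx(aop)$ and $y = idx(rop)$ and to track through which of the two sources the matching pair enters the returned set $A$: the $auditset$ that $aop$ reads from $M$ (line~\ref{l:adt_read_M}), and the readers of $SLR[x]$ that $aop$ computes directly (line~\ref{l:adt_current_val}). Since $rop$ precedes $aop$ in $\lambda$, rule $R0$ gives $y \le x$. By the classification of \rd{} operations, $j \in \textsc{readers}(SLR[y])$ in all three cases (direct, helped, silent); for a silent $rop$ this holds because such an operation is preceded by a direct \rd{} of $p_j$ with the same index, whose write of $j$ to $SLR[y]$ occurs before any $w$-tuple. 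Moreover, by the definition of return values in $H'$, the value returned is $v = v_{y-1}$, the value in the first $w$-tuple of $SLR[y-1]$. It thus suffices to show that the pair $(j, v_{y-1})$ lands in $A$.

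First I would establish, by an argument analogous to Lemma~\ref{lem:same_ridx}, that when $M.widx = k$ the set $M.auditset$ equals $\{(i, v_{y-1}) : i \in \textsc{readers}(SLR[y]),\ 0 \le y \le k-1\}$ and is monotone in $k$. The crux is that $M.widx$ advances from $y$ to $y+1$ only after $SLR[y]$ already holds a $w$-tuple (Lemma~\ref{lem:phase}, Proposition~\ref{prop:lower_invalidated}), so by Proposition~\ref{lem:same_readers} every process performing the corresponding $\mathsf{writeMax}$ computes the same frozen reader set of $SLR[y]$ and adds exactly the pairs $(i, v_{y-1})$, $i \in \textsc{readers}(SLR[y])$, to the set it read from $M$ at $widx = y$. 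This disposes of the case $y < x$: since $aop$ reads $M$ with $M.widx = x$ and $y \le x-1$, the pair $(j, v_{y-1}) = (j,v)$ is already in the $auditset$ it obtains, hence in $A$.

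The remaining case is $y = x$, split on whether $aop$ is definitive. If $aop$ is definitive, the window it reads from $SLR[x]$ contains a $w$-tuple, so by Proposition~\ref{lem:same_readers} the set $\textsc{readers}(window)$ of line~\ref{l:adt_current_val} equals the frozen reader set of $SLR[x]$, which contains $j$; thus $aop$ adds $(j, v_{x-1}) = (j,v)$ to $A$. If $aop$ is non-definitive, then $rop$ cannot be a helped \rd{}: such a read is ordered by rule $R2$, strictly after every non-definitive \adt{} of the same index (placed by $R1$), contradicting that $rop$ precedes $aop$. Hence $rop$ is direct or silent, so both operations are ordered by $R1$ according to the real-time order of their accesses to $SLR[x]$; since $rop$ precedes $aop$ in $\lambda$, the step that writes $j$ to $SLR[x]$ (by $rop$ itself when direct, or by the preceding direct \rd{} of $p_j$ when $rop$ is silent) occurs before $aop$ reads $SLR[x]$ in $\alpha$. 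As $aop$'s window then still contains no $w$-tuple, every identifier it holds belongs to $\textsc{readers}(window)$, so $j$ is counted and $(j, v_{x-1}) = (j,v) \in A$.

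I expect the non-definitive \adt{} sub-case to be the main obstacle: it is the one place where membership in $A$ follows neither from the frozen reader set nor from $M.auditset$, and must instead be extracted from the fine-grained $R1$ ordering by arguing that $p_j$'s identifier is physically present in the window that $aop$ reads. A secondary task is the auxiliary claim on the contents and monotonicity of $M.auditset$, which the earlier lemmas assert only for $M.ridx$ and which I would re-derive for the audit set alongside Lemma~\ref{lem:same_ridx}.
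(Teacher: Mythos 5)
Your proposal is correct and follows essentially the same route as the paper's proof: the same case split on $idx(rop) < idx(aop)$ versus $idx(rop) = idx(aop)$, the same reduction of silent \rd{} operations to their preceding direct \rd{}, and the same definitive/non-definitive analysis via rules $R1$/$R2$ in the equal-index case. The only difference is one of presentation: where you propose an explicit auxiliary lemma on the contents and monotonicity of $M.auditset$ (analogous to Lemma~\ref{lem:same_ridx}), the paper argues the case $idx(rop) < idx(aop)$ by tracing the code directly --- the pair is added when $M.widx$ advances past $idx(rop)$, and then $as \subseteq as' \subseteq A$ --- which implicitly relies on the same monotonicity fact you would make explicit.
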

\begin{proof}
  By linearization rules $R0$-$R2$, $idx(rop) = x \leq idx(aop)$. Let
  $p_i$ be the process that performs the \adt{} operation $aop$.

  If $rop$ is silent, it is preceded by a direct \rd{} operation by
  the same process, with the same output value $v$ and the same
  index $x$. In the following, we thus assume that $rop$ is
  direct or helped. In $SLR[x]$, $j$  is written before any $w$-tuple, or the first $w$-tuple  written to $SLR[x]$  has an helping set $h$ containing $i$. Note that $v$ is the value of the first $w$-tuple in $SLR[x-1]$.

  Let us assume that $x < idx(aop) = x'$. In $aop$, $p_i$ reads $x'$ from $M.widx$  and a set  $as'$ from $M.auditset$ (line~\ref{l:adt_read_M}). Before this step, $M$ is changed from $x$ to $x+1$ (in line~\ref{l:rd_help_M1}, line~\ref{l:rd_help_M2} or line~\ref{l:w_write_M}), after a $win$ containing a $w$-tuple is read from  $SLR[x]$. Hence, $j \in  \textsc{readers}(win)$, and therefore  $(j,v)$ is added to the audit set $as$ written together with $x+1$ to $M$. 
%HA: removed Proposition~\ref{lem:auditset_growing}
  By the code (line~\ref{l:rd_update_ridx1}, line~\ref{l:rd_update_ridx2} or line~\ref{l:w_update_ridx} and line~\ref{l:adt_current_val}), we have that
  $as \subseteq as' \subseteq A$.

  We now assume that $x=idx(aop)$. If $aop$ is a non-definitive \adt{}, no $w$-tuple has been written to $SLR[x]$ when $p_i$ reads $SLR[x]$. As $rop$ precedes $aop$ in $\lambda$, there are both placed in $\lambda$ following rule $R1$ and therefore $p_j$ writes $j$ to $SLR[x]$ before the sliding register is read in $aop$, from which it follows that $(j,v) \in A$ (line~\ref{l:adt_current_val}). Otherwise, $aop$ is a definitive \adt{}, which means that the value $win$ reads from $SLR[x]$ (line~\ref{l:adt_current_val}) contains the  first $w$-tuple $(\mathsf{w},\_,\_,h)$   written to $SLR[x]$. As $j$ precedes this tuple or $i \in h$, $j \in \textsc{readers}(win)$ and therefore $(j,v) \in A$. 
\end{proof}

\begin{restatable}{lemma}{adtaccurate}
  \label{lem:adt_accurate}
  Let
  $aop$ be an \adt{} operation in $H'$ whose response contains
  $(j,v)$. There exists a \rd{} operation by $p_j$ returning $v$ that
  precedes $aop$ in $\lambda$. 
\end{restatable}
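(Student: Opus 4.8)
The plan is to trace, for each pair $(j,v)$ in the set $A$ returned by $aop$, the moment at which it was inserted, and to reduce the claim to the existence of an index $x'$ with $j \in \textsc{readers}(SLR[x'])$ and $v = v_{x'-1}$. Write $x = idx(aop)$. By the code, $A$ is the union of the set $as$ read from $M.auditset$ at line~\ref{l:adt_read_M} (the triple then stored in $M$ has first field $x$) and the set $\{(j,v_{x-1}) : j \in \textsc{readers}(window)\}$ built from the sequence $window$ read from $SLR[x]$ at line~\ref{l:adt_current_val}. I would handle these two sources separately. For the second source, $j \in \textsc{readers}(window)$ forces $j \in \textsc{readers}(SLR[x])$ in the final configuration: if $window$ contains a $w$-tuple this is Proposition~\ref{lem:same_readers}, and otherwise $j$ is an identifier written before any $w$-tuple, which (since by Lemma~\ref{lem:write_once} at most $m+n$ writes are ever applied to $SLR[x]$, so nothing slides out of the window of size $m+n$) still precedes the first $w$-tuple later. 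Hence here $x' = x$ and $v = v_{x-1}$.

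The main technical step is an \emph{auditset-accumulation invariant} handling the first source: every pair $(j,v)$ stored in $M.auditset$ while $M.widx = x$ satisfies $v = v_{x'-1}$ and $j \in \textsc{readers}(SLR[x'])$ for some $x'$ with $0 \le x' < x$. I would prove this by induction on $x$, following the phase decomposition of Lemma~\ref{lem:phase}. When $M.widx$ advances from $x-1$ to $x$, the process applying the $\mathsf{writeMax}$ has read a triple $(x-1,\_,as_{x-1})$ from $M$ and sets the new audit set to $as_{x-1} \cup \{(j,v_{x-2}) : j \in \textsc{readers}(SLR[x-1])\}$ (lines~\ref{l:rd_update_ridx1},~\ref{l:rd_update_ridx2}, or~\ref{l:w_update_ridx}). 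The inductive hypothesis applied to $as_{x-1}$ covers the old pairs ($x' \le x-2$), while the freshly added pairs have $x' = x-1 < x$ and $j \in \textsc{readers}(SLR[x-1])$; by Proposition~\ref{lem:same_readers} this membership is stable, so it still holds in the final configuration. Thus in both sources we obtain an index $x' \le x$ with $j \in \textsc{readers}(SLR[x'])$ and $v = v_{x'-1}$, and $x' < x$ whenever the pair comes from the auditset.

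Once $x'$ is identified, I would invoke Lemma~\ref{lem:read_record_once} to obtain a \rd{} operation $op \in H'$ by $p_j$ with $idx(op) = x'$, and use (the argument in the proof of) Lemma~\ref{lem:read_last} to conclude that $op$ returns the value of the first $w$-tuple in $SLR[x'-1]$, namely $v_{x'-1} = v$; the global uniqueness-of-values assumption guarantees that $v$ pins down $x'-1$ consistently. It remains to show that $op$ precedes $aop$ in $\lambda$. If $x' < x$ this is immediate from rule $R0$. If $x' = x$ (the current-read source), I would split on the type of $aop$: when $aop$ is non-definitive, the window it reads from $SLR[x]$ has no $w$-tuple, so $j$ is a directly written identifier and $op$ is a direct \rd{} whose write to $SLR[x]$ precedes $aop$'s read, giving $op$ before $aop$ by rule $R1$; when $aop$ is definitive, $op$ is direct or helped, and either way it is ordered before $aop$, since direct reads (rule $R1$) precede definitive audits (rule $R2$) and helped reads precede definitive audits within rule $R2$.

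The step I expect to be the main obstacle is establishing the auditset-accumulation invariant cleanly --- in particular arguing that the audit set read by $aop$ really is the cumulative set associated with $M.widx = x$ (using that $M$ is a max register ordered by first field together with Lemma~\ref{lem:same_ridx} and Proposition~\ref{lem:same_readers}), and matching each stored pair to the correct index $x'$ via value uniqueness. The same-index ordering in the final paragraph is a secondary subtlety, but it is fully dictated by the placement rules $R1$ and $R2$.
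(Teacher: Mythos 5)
Your proposal is correct and follows essentially the same route as the paper's proof: trace where the pair $(j,v)$ was inserted (either the window read at line~\ref{l:adt_current_val} or the auditset accumulated when $M.widx$ advanced), extract the index $x'$ with $j \in \textsc{readers}(SLR[x'])$, apply Lemma~\ref{lem:read_record_once} to obtain the \rd{} operation, and order it before $aop$ via rules $R0$--$R2$ with the same definitive/non-definitive case split. Your explicit auditset-accumulation induction just formalizes what the paper asserts by code inspection.
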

\begin{proof}
  Let $p_i$ be the process that performs $aop$, and let $A$ be the set of pairs (process,value) returned by this operation. As $(j,v)$ is in $A$, there exists $x$ such that $i \in $\textsc{readers}$(SLR[x])$ and $v$ is the value in the first $w$-tuple $(\mathsf{w},\_,v\_)$ written to $SLR[x-1]$. Indeed, pair $(j,v)$ is inserted to $A$ by $p_i$ after reading $SLR[x]$, if $x = idx(aop)$ (line~\ref{l:adt_current_val}), or when $M.widx$ is changed to $x+1$ (lines~\ref{l:rd_help_A1}-\ref{l:rd_help_M1}, lines~\ref{l:rd_help_A2}-\ref{l:rd_help_M2} or lines~\ref{l:w_A}-\ref{l:w_write_M}) if $x < idx(aop)$. 
  
  Lemma~\ref{lem:read_record_once} shows there is \rd{} operation $op$ by process $p_i$ with $idx(op) = x$. This operation returns $v$, which the value of the first $w$-tuple written to $SLR[x-1]$. 

  It remains to prove that $op$ is before
  $aop$ in $\lambda$.
  If $x < idx(aop)$, $op$ precedes $aop$ in $\lambda$ (rule $R0$). Otherwise, $x =idx(aop)$.
  If $aop$ is definitive, it is linearized after $rop$ by rules $R1$ and
  $R2$. Otherwise, $aop$ is non-definitive, and  
  $j$ thus appears before any  $w$-tuple in $SLR[x]$. $j$ is also written to $SLR[x]$ before 
  before $SLR[x]$ is read by $p_i$.  $op$ is therefore  direct, and
  linearized before $aop$ by rule $R1$. 
\end{proof}

%%% Local Variables:
%%% mode: LaTeX
%%% TeX-master: "main"
%%% End:

\section{Auditable LL/SC from $2n$-Sliding Register}
\label{sec:LLSC}

We show how to adapt our auditable register algorithm to implement an auditable LL/SC object for $n$ processes, and an arbitrary number of auditors. An \adt{} operation returns a set of pairs $(p,v)$, representing the values returned by the processes by the \llk{} operations preceding the \adt{}. 
%In addition to standard read/write registers, 
The implementation uses $2n$-sliding registers.  
The main result of this section is: 

\begin{theorem}
  \label{thm:llsc_auditable}
  There is a wait-free, linearizable implementation of an $n$-process auditable LL/SC object from $2n$-sliding registers and standard registers with $O(n)$ step complexity  per operation. 
\end{theorem}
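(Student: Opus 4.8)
The plan is to reuse the auditable register construction of Section~\ref{sec:fast_mwmr_auditable} almost verbatim, changing only the interpretation of the two operations and isolating the single genuinely new ingredient, namely the conditional behaviour of $\SC$. In the LL/SC object each of the $n$ processes plays \emph{both} roles of the register algorithm: an $\LL$ behaves like a \rd{} (it locates the active sliding register $SLR[x]$, returns the current value $v_x$ stored in its first $w$-tuple, records the caller's identifier for auditing, and fixes the \emph{version} $x$ in a local variable), while an $\SC(\mathit{new})$ behaves like a conditional \wrt{}. This is exactly why a window of size $2n$ is needed and suffices: in a single $SLR[x]$ each process may deposit at most one identifier (from an $\LL$ that reads $v_{x-1}$) and at most one $w$-tuple (from an $\SC$ whose matching $\LL$ read version $x-1$), so $SLR[x]$ receives at most $2n$ writes and therefore never discards its first $w$-tuple nor any identifier preceding it. I retain the register proof's assumption that written values are unique, which rules out ABA.

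First I would fix the code. An $\LL$ runs the \rd{} routine of Algorithm~\ref{alg:mw_auditable} unchanged, including the silent-read optimization, so that repeated $\LL$s reading the same value do not re-record the caller. An $\SC(\mathit{new})$ whose matching $\LL$ read version $x$ runs the \wrt{} routine targeting $SLR[x+1]$ (including the step that helps pending $\LL$s through $H$ and $M$); after appending its $w$-tuple it reads $SLR[x+1]$ and returns \textsf{true} iff its own $w$-tuple is \emph{first}, and \textsf{false} otherwise. Operationally a \emph{failed} $\SC$ therefore behaves exactly like a \emph{hidden} \wrt{} --- it advances $M.widx$ and updates the audit set identically --- differing only in its response and, as discussed below, in its linearization position. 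The analog of Lemma~\ref{lem:write_once} now states that each process applies at most one identifier write and at most one $w$-tuple write to each $SLR[x]$; the only new point is that a failed $\SC$ certifies that some other $w$-tuple already occupies $SLR[x+1]$, so the value has already moved to $v_{x+1}$, whence the caller's next $\LL$ reads version $\geq x+1$ and its next $\SC$ targets $SLR[\geq x+2]$, and no process ever deposits two $w$-tuples in one sliding register. The $O(n)$ step complexity and the wait-freedom of $\LL$ then follow verbatim from the proof of Lemma~\ref{lem:wait_free} and the step-complexity discussion, since each operation still touches $M$ a constant number of times, scans the $O(n)$-entry array $H$, and the $\LL$ repeat loop still terminates within three iterations.

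The heart of the proof is the linearization, where the conditional nature of $\SC$ forces a single but essential departure from the register construction. I would classify operations as before --- $\LL$s into silent/direct/helped, $\adt{}$s into definitive/non-definitive, and $\SC$s into \emph{successful} (visible, $w$-tuple first in $SLR[x+1]$) versus \emph{failed} (hidden) --- assign each the index of the relevant sliding register, and invoke Lemma~\ref{lem:sn_op_idx} to get monotonicity of $idx$ along real time. Rules $R0$-$R4$ carry over with one crucial inversion: whereas a hidden \wrt{} is placed \emph{before} the visible \wrt{} of its sliding register (so as to be overwritten), a failed $\SC$ of index $x+1$ must be placed \emph{after} the successful $\SC$ that installed $v_{x+1}$. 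This makes the failed $\SC$ observe the value $v_{x+1}\neq v_x$ at its linearization point, so that returning \textsf{false} is correct. This inversion respects real time: the successful $\SC$ wrote the first $w$-tuple of $SLR[x+1]$ strictly before the failed $\SC$ wrote its own, and both writes lie in their respective execution intervals, so the failed $\SC$ cannot terminate before the successful $\SC$ starts; together with an interval argument in the style of Lemma~\ref{lem:helped_read}, this yields the analog of Lemma~\ref{lem:rt_order}.

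Finally I would transfer the correctness lemmas. The analog of Lemma~\ref{lem:read_last} shows that each $\LL$ returns the value installed by the last successful $\SC$ preceding it in $\lambda$ (or $v_0$), which is immediate once successful $\SC$s are identified with the unique visible $w$-tuples. The $\SC$ specification then follows from the version comparison above: a successful $\SC$ is linearized with no value change since its $\LL$, while a failed $\SC$ is linearized after exactly such a change. The audit completeness and accuracy arguments (Lemmas~\ref{lem:adt_complete} and~\ref{lem:adt_accurate}) apply unchanged once ``\rd{} returning $v$'' is reread as ``$\LL$ returning $v$'', since identifiers are recorded by $\LL$ precisely as they were by \rd{}. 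The main obstacle I anticipate is precisely this failed-$\SC$ placement: one must show that inverting the hidden/visible order for $\SC$ relative to the register is simultaneously consistent with the sequential LL/SC specification \emph{and} with real-time order, and that the $2n$ window is never overrun even when an $\SC$ appends its $w$-tuple long after its version has become stale.
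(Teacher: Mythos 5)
Your overall blueprint matches the paper's: identify $\LL$ with \rd{} and $\SC$ with a conditional \wrt{}, use a $2n$ window because each process deposits at most one identifier and one $w$-tuple per sliding register, and invert the linearization rules R3/R4 so that the successful (visible) $\SC$ is placed first and the failed (hidden) ones after it. However, there is a genuine gap in how you handle \emph{stale} $\SC$s, and it is exactly the point where the paper's algorithm differs from yours. In your design, an $\SC$ whose matching $\LL$ read version $x$ \emph{always} appends its $w$-tuple to $SLR[x+1]$, even if $M.widx$ has long since advanced past $x+1$; you then classify this failed $\SC$ with index $x+1$ and order blocks by index (rule R0), invoking Lemma~\ref{lem:sn_op_idx}. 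That lemma is false for your algorithm: its proof for write-type operations relies on a classified write's index being the value it reads from $M.widx$ in its first step, which fails for a stale $\SC$. Concretely, let $p_1$'s $\LL$ be recorded in $SLR[5]$, let other processes perform successful $\SC$s advancing the object through $SLR[6],\dots,SLR[10]$, let $p_3$ complete an $\LL$ recorded in $SLR[10]$, and only then let $p_1$ run its $\SC$. Your rules give $p_1$'s failed $\SC$ index $5$ and hence place it before $p_3$'s $\LL$ (index $10$), although $p_3$'s $\LL$ terminated before $p_1$'s $\SC$ began --- a real-time violation. Your ``inversion respects real time'' argument only orders the failed $\SC$ against the successful $\SC$ of the \emph{same} register; it says nothing about operations in higher-index blocks that precede the stale $\SC$ in real time, which is precisely the obstacle you flagged but did not resolve.

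The paper closes this hole with the one piece of code you dropped: at the start of $\SC$, the process reads $M.widx$ and, if $M.widx > lsr$, returns \emph{false} immediately \emph{without writing anything} (line~\ref{l:sc_return1}); such $\SC$s are called \emph{silent}. Silent $\SC$s are deleted when the execution is projected onto an execution of the register algorithm (so Lemma~\ref{lem:sn_op_idx} applies to all remaining operations), and are re-inserted into the linearization last, inside the block of the \emph{current} index they read from $M.widx$ and after every operation that terminates before they start --- this is what restores real-time order. You could repair your version by reclassifying stale failed $\SC$s in the same way (their stale $w$-tuples are harmless noise: never first, their helping sets ignored by \textsc{readers}, their $\mathsf{writeMax}$ a no-op), but as written your linearization is incorrect. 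A second, smaller inaccuracy: wait-freedom of $\LL$ does not follow ``verbatim,'' because in your design the writer of the first $w$-tuple of $SLR[x_3]$ no longer reads $x_3$ from $M.widx$ at the start of its operation, so the timing argument in Lemma~\ref{lem:wait_free} (that this writer reads $H[i]$ after the reader's announcement) must be re-derived; it can be, using the fact that this writer's own $\LL$ must have been direct in $SLR[x_3]$, but it is not the paper's argument unchanged.
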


%\subsection{The Algorithm}

Algorithm~\ref{alg:llsc_auditable} is essentially the same as Algorithm~\ref{alg:mw_auditable}. \llk{} operations are identified with \rd{} operations, and \scd{} with \wrt{}. At the end of a finite execution $\alpha$ of Algorithm~\ref{alg:mw_auditable}, the sequences stored in the sliding registers in the array  $SLR$ indicate, for each $x$, which is the  $x$th value $v_x$ held in the auditable register, and which processes write or read this value. Specifically, given the sequence $win_x$ stored in $SLR[x]$, the first $w$-tuple $(\mathsf{w},j,v,h)$ in $win_x$ indicates that $v_x = v$, and its writer is $p_j$. The readers of $v_{x-1}$ are the processes $p_i$, where  $i \in h$, or precedes $(\mathsf{w},j,v,h)$ in $win_x$. Each other $w$-tuple $(\mathsf{w},j',v',\_)$ in $win_x$ corresponds to an \emph{invisible} write operation, whose input $v'$ is never read.
We may think as these operations as \emph{unsuccessful}, in the sense that they fail to change the value of the auditable register, being immediately overwritten by another write.

\begin{algorithm}[tb]
\small
  \caption{$n$-process LL/SC with auditable LL}
  \label{alg:llsc_auditable}
  \begin{algorithmic}[1]
    \State\textbf{shared variables}
    \State\hspace{\algorithmicindent}$\mathit{M}, H[1..n]$:  max register and helping array of $n$ SWMR register, initialized as in Algorithm~\ref{alg:mw_auditable}
    \State\hspace{\algorithmicindent} $SLR[-1,0,\ldots]$: unbounded array of $2n$-sliding registers, initialized as in Algorithm~\ref{alg:mw_auditable}
    \State\textbf{local variables}    
    \State\hspace{\algorithmicindent} $lsr, val$, as in Algorithm~\ref{alg:mw_auditable}

    \Function{ll}{$~$} \Comment{identical to \rd{} in Algorithm~\ref{alg:mw_auditable}}
    \EndFunction
    
    \Function{sc}{$v$} \Comment{code for process $p_i, i \in \{1,\ldots,n\}$}
    \State $widx,ridx,auditset  \gets M.\mathsf{read}()$;  $to\_help  \gets \emptyset$ \label{l:sc_read_M}
    \State \colorbox{gray!20}{\textbf{if} $widx > lsr$ \textbf{then} \Return{\emph{false}}} \Comment{a successful \scd{} happened since $p_i$'s last \llk{}} \label{l:sc_return1}
    \ForAll{$i \in \{1,\ldots,m\}$} $aidx_j \gets H[j].\mathsf{read}()$ \Comment{help ongoing \llk{} operations}\label{l:sc_att}
    \If{$ridx[j] <  aidx_j$} \Comment{ an \llk{} by $p_j$  may need help} \label{l:sc_help_needed?}
    \State $window \gets SLR[aidx_j].\mathsf{read}()$  \label{l:sc_help_wd}
    \State \textbf{if} $j \notin \Call{readers}{window}$ \textbf{then} $to\_help \gets to\_help \cup  \{j\}$ \label{l:sc_att_fail}
    \EndIf
    \EndFor
    \State $SLR[widx].\mathsf{write}(\mathsf{w},j,v,to\_help)$;\Comment{announce new \scd{}}\label{l:sc_write_SLR}
    \State $window \gets SLR[widx].\mathsf{read}()$; $val \gets \Call{getvalue}{widx-1}$
    \State \textbf{for each} $j \in \Call{readers}{window}$  \textbf{do} 
    $ridx[j] \gets widx$; $auditset \gets auditset \cup \{(j,val)\}$ \label{l:sc_update_ridx}\label{l:sc_A}
    \State $M.\mathsf{writeMax}(widx+1,ridx,auditset)$; \label{l:sc_write_M}
    \State \colorbox{gray!20}{\textbf{if} $(\mathsf{w},j,v,to\_help)$ is the 1st $w$-tuple in $window$ \textbf{then} \Return{\emph{true}} \textbf{else} \Return{\emph{false}}} \label{l:sc_return2}
    \EndFunction
    \Function{audit,getValue, readers}{} \Comment{as in Algorithm~\ref{alg:mw_auditable}}
    \EndFunction
  \end{algorithmic}
\end{algorithm}

Alternatively, we may think of  the sequences in $SLR$ as a trace of an execution $\beta$ of an LL/SC object implementation by identifying reads with \llk{} and writes with \scd{}. Again, the $x$th value held by the object in $\beta$ is $v_x$, the value in the first $w$-tuple $(\mathsf{w},j,v,h)$ in $win_x$. Each process $p_i$, with $ p_i \in h$ or preceding this tuple has an \llk{} that returns $v_{x-1}$. The first $w$-tuple $(\mathsf{w},j,v,h)$ indicates a successful $\scd{}(v)$ by process $p_j$, that changes the object from $v_{x-1}$ to $v_x = v$. And every following  tuple $(\mathsf{w},j',v',\_)$ marks  an unsuccessful $\scd{}(v')$ by process $p_{j'}$. Recall that, per its specification,  an \scd{} is successful if and only if it is preceded by an \llk{} by the same process, without any successful \scd{} operation  between them. In particular, $\beta$ is a valid sequential execution  if (1) in each sequence $win_x$, the first $w$-tuple $(\mathsf{w},j, v,h)$ is after $p_j$'s \llk{} and (2) each \scd{}($v'$) operation corresponding to a following $(\mathsf{w},j',v',\_)$ tuple is after \scd{}($v$) in $\beta$.

The code of an  $\llk{}$ operation is the same as for a \rd{} operation in Algorithm~\ref{alg:mw_auditable}, 
while \adt{} and the auxiliary functions \textsc{getValues} 
and \textsc{readers} are identical.  
\scd{} operations follow the code of \wrt{}, 
with two additions (line~\ref{l:sc_return1} and line~\ref{l:sc_return2})  
\colorbox{gray!20}{highlighted in gray}. 

To maintain property (1), an \scd{} operation $op$ by process $p_i$ should be prevented from writing a $w$-tuple to a sliding register $SLR[x]$ in which $p_i$ 's last \llk{} is not recorded. 
In Algorithm~\ref{alg:mw_auditable},  the local variable $lsr$ of a process $p_i$  is the  highest index of a sliding register that keeps track of $p_i$'s last \rd{}, and hence here, $SLR[lsr]$ records $p_i$'s last \llk{}. Therefore, before writing to $SLR[widx]$ (in line~\ref{l:sc_write_SLR}), $p_i$ checks that $lsr = widx$. If this is not the case, a successful \scd{} has occurred since $p_i$'s last \llk{}, and $op$ may immediately return \emph{false} (line~\ref{l:sc_return1}). The other addition is the return statement at line~\ref{l:sc_return2}: \emph{true} is returned if $p_i$'s $w$-tuple  is the first in $SLR[widx]$, and $\emph{false}$ otherwise.

For property (2), the linearization rules are slightly modified. In Algorithm~\ref{alg:mw_auditable}, \wrt{} operations recorded  in the same  sliding register $SLR[x]$ are linearized in the reverse order of their corresponding $w$-tuple appearance in the sequence stored in $SLR[x]$. Here, we do the opposite, linearizing \scd{} operations in the same order their corresponding $w$-tuple appear in $SLR[x]$. Only the first $w$-tuple represents a successful \scd{}, which is aligned with the return statement of line~\ref{l:sc_return2}.

Finally, as each process may write twice to a sliding register, once in an \llk{} operation, and once in a \scd{} operation, $SLR$ is an array of $2n$-sliding registers. The code of \adt{} may easily be adapted to report which process has successfully stored which value instead of or in complement to values returned by \llk{} operations to  the processes. 

The code of \adt{} is the same in Algorithm~\ref{alg:mw_auditable} and Algorithm~\ref{alg:llsc_auditable}, and the same code is shared by \rd{} and \llk{} operations. 
For \scd{}, the additional statements (line~\ref{l:sc_return1} and line~\ref{l:sc_return2}) in the code do not affect termination or step-complexity. Therefore,
the step complexity of \llk{}, \scd{}, and \adt{} is $O(n)$. 

Fix a finite \emph{well-formed} execution $\beta$, in which 
each process alternates \llk{} and \scd{} operations. Thanks to the similarities in the code, a linearization $\mu(\beta)$ of $\beta$ can be obtained from a linearization $\lambda(\alpha)$ of an execution $\alpha$ of the auditable register implementation induced by $\beta$. $\alpha$ is constructed as follows. 
We introduce a new class for \scd{} operations that terminate immediately after reading $M.widx$ in line \ref{l:sc_return1}. Those operations are said to be \emph{silent} (similarly to silent \rd{} operations). We next extract from  $\beta$ an execution $\alpha$ of Algorithm~\ref{alg:mw_auditable} by removing all steps applied by silent \scd{}, and replacing each invocation of \scd{} and \llk{} by and invocation of \wrt{} and \rd{}, respectively, with the same input.  $\alpha$ is a valid execution of Algorithm~\ref{alg:mw_auditable}, as besides early termination for \scd{} (line~\ref{l:sc_return1}), which we dispose of by removing steps of silent \scd{} operations, the code of \scd{} and \llk{} is the same as the code of \wrt{} and \rd{}, respectively, and the code for \adt{} is identical.

We  next show how to construct a linearization $\mu(\beta)$ of $\beta$ 
from the linearization  $\lambda(\alpha)$ of $\alpha$. 
\begin{enumerate}
\item
  \wrt{} operations  with the same index $x$ are reordered, by applying first rule $R4$ and then rule $R3$. Hence, the \wrt{} operation corresponding to the first $w$-tuple in $SLR[x]$ is placed first, and then the $\wrt{}$s corresponding to the other $w$-tuples follow, in arbitrary order. The resulting sequence $\lambda'(\alpha)$ is no longer a valid linearization of an auditable  register, but still extends the real-time order among operations, as \wrt{}s  with the same index are concurrent (Lemma~\ref{lem:order_same_sn}(2)). Observe  that in $\lambda'(\alpha)$, as  in $\lambda(\alpha)$, \wrt{}s with the same index form a contiguous block, denoted $W_x$. A \rd{} operation no longer returns the input of the last preceding \wrt{}, but rather the input $v_x$ of the first \wrt{} in the last block $W_x$ that precedes it.
\item
  We revert to \llk{}/\scd{} operations, by replacing in $\lambda'(\alpha)$ each \rd{} and \wrt{} operation by the \llk{} or \scd{} operation it originated from. This results in a partial linearization $\mu'(\beta)$ of $\beta$, missing silent \scd{} operations.
  In $\mu'(\beta)$, an \llk{} operation $op$ returns  the input $v_x$  of the first \scd{} operation $sop$ in the last block $W_x$ that precedes it. 
  Every \scd{} operation in $W_x$ writes a $w$-tuple to $SLR[x]$, $sop$ being the first to do so. Hence, except $sop$, each of them returns false (line~\ref{l:sc_return2}). 
  Consequently,  $op$ returns the input of the last preceding successful \scd{}.

  If $p_i$ has a \scd{} operation in $W_x$, its matching preceding \llk{} operation $op$ is recorded in $SLR[x]$ (line~\ref{l:sc_return1}) and therefore, has index $idx(op) =x$. 
  By  rules $R1$-$R2$, $op$ is before $W_x$, and after any operation $op'$ with index $idx(op') < x$ (rule $R0$). Only the  first \scd{} operation $sop$  in $W_x$ returns true, and indeed, there is no \scd{} between $sop$ and its last preceding \llk{}. 
  Each other \scd{} operation in $W_x$ is unsuccessful, in agreement with the fact that it preceded by the successful \scd{} $sop$, with no \llk{} operation in between. 

  \llk{} and \adt{}  are ordered in  $\mu'(\beta)$  as their corresponding \rd{} and \adt{} in $\lambda(\alpha)$, and  each \llk{}  returns the same value as its corresponding \rd{}. Hence, a pair $(p,v)$ is included in the response of an \adt{} operation $op$ if and only if there is an \llk{} by $p$ that returns $v$ before $op$ in $\mu'(\beta)$. 

\item Finally, we bring back the  silent \scd{} operations that were removed from $\beta$ to form a full linearization $\mu(\beta)$. Each silent \scd{} operation $op$ that returns  is unsuccessful and therefore can be inserted in $\mu'(\beta)$ without affecting the outcome of other \scd{} or \llk{} operations, as long as there is a successful \scd{} between 
the matching \llk{} $op'$ by the same process and $op$.

  Let $x$ be the value read by $op$ from $M.widx$. % , and $x' = idx(op') < x$ the index of $op'$.
  We insert $op$ in $\mu'(\beta)$ within the operations whose index is $x$ and after every operation that terminates before $op$ starts. As $x > x'$, $op$ is placed after the operations with index $x'$. In particular, $op$ follows  the successful  \scd{} operation $sop$ that writes the first $w$-tuple to $SLR[x]$, which in turn follows $op'$. This can be done while preserving the real-time order, as when $x>x'$ is read from $M.widx$ in $op$, a $w$-tuple has already been written to $SRL[x']$ (Lemma~\ref{lem:phase}). 
  \end{enumerate}

%%% Local Variables:
%%% mode: LaTeX
%%% TeX-master: "main"
%%% End:

\section{Immediate Deny List from Auditable Registers}
\label{label:CN_of_DL}

A \emph{Deny List}~\cite{FreyGR23} is used to control resources, 
by having a set of managers maintain a list of which users 
are unauthorized to access which resources.
To access a resource, a user must prove that the corresponding user-resource 
pair has not been added to the Deny List. 
The managers have to agree on the set of process-resource pairs in the Deny List. 
A Deny List has an \emph{anti-flickering} property that ensures that there are
no transient periods: once access is disallowed, it is never allowed again. 

More formally, a %\emph{$(m,p)$-
\emph{deny list} object over 
a set of resources $S$ supports three operations, for $x \in S$: 
$\textsc{append}(x)$, %and 
$\textsc{prove}(x)$ which returns a Boolean value, 
and $\textsc{read}()$, which returns a set of process-resource ($p,x$) pairs.  A $\textsc{prove}(x)$ that returns \emph{false} is \emph{invalid};
otherwise, it is \emph{valid}.
The intuition is that an $\textsc{append}(x)$ revokes the authorization to access a resource $x$ to all processes. A valid $\textsc{prove}(x)$ by process $p$ indicates that $p$ is authorized to access $x$.
The set of processes that can invoke $\textsc{append}$ is called the \emph{managers}, and those that can invoke $\textsc{prove}$ are called the \emph{provers}. These sets of processes are predefined and static.
The property \emph{termination} requires that the operations $\textsc{prove}$,
$\textsc{append}$, and $\textsc{read}$ return within a finite number of steps.

In the original definition~\cite{FreyGR23}, 
an $\textsc{append}$ can be \emph{successful} or \emph{unsuccessful}. 
Their sequential specification in the \emph{anti-flickering flavor} 
of a deny list is as follows:
\begin{description}
   \item[Append progress:] Only a finite number of $\textsc{append}(x)$  operations are unsuccessful;  that is, $\textsc{append}(x)$ is eventually successful. 
   \item[Prove progress:] After a successful $\textsc{append}(x)$ operation, only a finite number of $\textsc{prove}(x)$ operations can be valid; that is,  $\textsc{prove}(x)$ is eventually invalid after a successful $\textsc{append}(x)$.
   \item[Prove validity:] A $\textsc{prove}(x)$ operation is invalid only if a successful $\textsc{append}(x)$ operation appears before it. 
    \item[Prove anti-flickering:]
    If a $\textsc{prove}(x)$ operation $op$ is invalid, then all following $\textsc{prove}(x)$ operations are invalid. 
    \item[Read validity:] 
    The set of object-process pairs returned by a $\textsc{read}()$ operation 
    includes exactly all preceding valid $\textsc{prove}(x)$ operations 
    and the processes that invoked them. 
\end{description}

We consider a stronger version, called \emph{immediate} deny list,  where all $\textsc{append}(x)$ are successful, and where all $\textsc{prove}(x)$ operations that follow an $\textsc{append}(x)$ are invalid. The sequential specification for the \emph{immediate} deny list is therefore as follows: 
\begin{description}
\item[Strong prove validity:] A $\textsc{prove}(x)$ operation is invalid if and only if an $\textsc{append}(x)$ operation appears before it. 
\item[Read validity:] 
    The set of object-process pairs returned by a $\textsc{read}()$ operation includes exactly all preceding valid $\textsc{prove}(x)$ operations 
    and the processes that invoked them. 
\end{description}

An \emph{Immediate Deny List} guarantees all the properties of the \emph{Anti-flickering Deny List}:
Since all $\textsc{append}(x)$ are successful, we trivially guarantee  \emph{Append progress}. Our \emph{Strong prove validity} includes the \emph{Prove validity} and \emph{Prove progress} where the number of $\textsc{prove}(x)$ that can be valid after an $\textsc{append}(x)$ is $0$. 
\emph{Prove anti-flickering} is trivially implied by our \emph{Strong prove validity} property.
 
It is already known that the consensus number of an \emph{Anti-flickering Deny List} object where $n$ processes can both do $\textsc{append}(x)$ and $\textsc{prove}(x)$, denoted anti-flickering $n$-deny list, is $n$~\cite{FreyGR23}. In particular, 
they show how to solve consensus among $n$ processes using this object with a single resource. This implies that the Anti-flickering $n$-deny list has at least consensus number $n$.
Their result implies that the consensus number of the immediate $n$-deny list is at least $n$.

They also present an algorithm to show that the consensus number of the anti-flickering $n$-deny list over a set of resources $S$ is at most $n$. In the following, we show how to implement (Algorithm~\ref{alg:dl_from_ar}) an immediate-$n$-deny list object using \adReg{1}{n-1}s which have consensus number $n$ as proved in Sections~\ref{sec:mwa-cn-lower-bound} and~\ref{sec:fast_mwmr_auditable}.
This proves that the immediate $n$-deny list has consensus number at most $n$.
Our algorithm provides a somewhat simpler proof for their upper bound. 
We present the algorithm for a single resource. The generalized version can be easily built thanks to the locality property of linearizability, and by implementing the $\textsc{read}()$ on the set of resource $\m{S}$ using classical techniques to implement a snapshot by applying the $\textsc{read}()$ on each $x \in \m{S}$.

\begin{algorithm}[tb]
\small
  \caption{Immediate $n$ deny-list object from auditable registers, code for process $p_i$}
  \label{alg:dl_from_ar}
  \begin{algorithmic}[1]
    \Statex\textbf{Shared objects}
    \Statex\hspace{\algorithmicindent} %for each $x \in \m{S}$, 
    $AR_x[1\ldots n]$ is a vector of  $(1,n-1)$-auditable registers, initially \emph{true}
    \State\textbf{local variables}    
    \State\hspace{\algorithmicindent} %for each $x \in \m{S}$, 
    $append_x$ is a boolean initially $\mathit{false}$, that is set to $true$ when $p_i$ does $\textsc{append}(x)$
    
    \Function{append}{$x$}
    \State $AR_x[i].\mathsf{write}(\mathit{false})$; $append_x \leftarrow{} true$
    \Return{}
    \EndFunction

    \Function{prove}{$x$}
    \If{$append_x$} \Return{$\mathit{false}$};\label{return-for-previous-append}\EndIf
    \ForAll{$j \in \{1,\ldots,n\}$ with $j\neq i$}
    \State $b\leftarrow{} AR_x.\mathsf{read}()$; \textbf{if} $\neg b$ \textbf{then} \Return{$\mathit{false}$}
    \EndFor
    \State\Return{$true$}
    \EndFunction

    \Function{read}{$~$} \Comment{audit all registers until you get a successful double collect}
    \State $c2 \gets \emptyset$ 
    \Repeat{}
    \State $c1 \gets c2$; $c2 \gets \emptyset$ \label{algo-ndeny-setc1}
    % \For{$x \in \m{S}$}
    \State\textbf{for all} $j \in \{1,\ldots,n\}$ \textbf{do} 
    $a_j \gets AR_x[j].\mathsf{audit}()$
    \State $c2 \gets \{(q,x) : \forall j\neq q, (q,\emph{true}) \in a_j\}$
    % \ForAll{$q$ s.t. $(q,\emph{true}) \in a_j,  \forall \ j\neq q$} 
    %  \State $c2 \gets c2 \cup \{(q,x)\}$
    %\EndFor
    %\EndFor
    %\EndFor
    \Until{$c1 = c2$}
    \State \Return{$c2$}
    \EndFunction
  \end{algorithmic}
\end{algorithm}

Algorithm~\ref{alg:dl_from_ar} implements an immediate $n$-deny list for a resource $x$. It  uses a vector $AR_x[1\ldots n]$ of binary auditable registers, initially \emph{true}. $AR_x[i]$ is written by process $p_i$ and read by all other processes.
When a process $p_i$ wants to perform $\textsc{append}(x)$, it writes false in its auditable register $AR_x[i]$ and locally stores the information that it did an $\textsc{append}$.

 To perform $\textsc{prove}(x)$, a process  simply checks if it previously did an $\textsc{append}(x)$, and if not, it reads all the auditable registers (but its own) to check if some process wrote into one such register. If none of the previous conditions happen, then it can return $true$.

The $\textsc{read}()$ repeatedly audits all the registers to collect the processes that have executed a valid $\textsc{prove}(x)$. These are the ones that read $true$ in all registers. To obtain a snapshot, the $\textsc{read}()$ terminates when two consecutive collects returns the same set. 

To prove the correctness of Algorithm~\ref{alg:dl_from_ar}, 
we start by noting that all operations terminate within a finite 
number of their own steps.
This is obvious for $\textsc{append}$ and $\textsc{prove}$, 
relying on wait-freedom of the encapsulated auditable register operations.
For a $\textsc{read}$ operation, note that the sets it returns are 
monotonically contained in each other;
the maximal set is the one containing all process-object pairs. 
Furthermore, if two sets are equal, the operation terminates.
Thus, the loop can be repeated at most $n$ times.

We next prove that Algorithm~\ref{alg:dl_from_ar} implements a linearizable immediate $n$-deny list.
First, for all $i \in \{1,\ldots, n\}$, we linearize all the $\textsc{append}(x)$ by $p_i$ in the step corresponding to the write to the auditable register. We complete only the $\textsc{append}(x)$ operations that have done this step.

A $\textsc{prove}(x)$ operation by $p_i$ that returns $\mathit{false}$ and that does not return at line \ref{return-for-previous-append} is linearized at the last read primitive $p_i$ applies to an auditable register $AR_x[j]$. A $\textsc{prove}(x)$ that returns at line \ref{return-for-previous-append} is linearized at its invocation.
A $\textsc{read}$ is linearized at the last audit of the second last loop (i.e., the last audit before $c_1$ is set for the last time). We remove incomplete $\textsc{read}$.

We finally add to the linearization all the $\textsc{prove}(x)$ operations that return $true$. They are inserted at the beginning before the first $\textsc{append}(x)$, according to their real-time order. In particular, we pick one by one in the order of their last step (i.e., the last read from an auditable register), the ones that happen earlier first), denoted $s$. We put $\textsc{prove}(x)$ immediately before the first $\textsc{read}()$ whose linearization point follows $s$ or immediately the first $\textsc{append}(x)$ if such $\textsc{read}()$ does not exist.

\emph{Strong prove validity:} A $\textsc{prove}(x)$ by a process $p$ returns $\mathit{false}$, either if $p$ previously performed an $\textsc{append}(x)$ or if it read $\mathit{false}$ in one auditable register. 
In both cases, there is an $\textsc{append}(x)$ in the set of operations we linearize and it is linearized before the $\textsc{prove}(x)$ according to our rules. The only if part is immediate since we linearize all $\textsc{prove}(x)$ that return $true$ before the first $\textsc{append}(x)$.

\begin{lemma} 
The set of object-process pairs returned by a $\textsc{read}()$ operation includes exactly all preceding valid $\textsc{prove}(x)$ operations and the processes that invoked them. 
\end{lemma}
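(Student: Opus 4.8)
The plan is to prove the two inclusions of the ``exactly'' separately. Write $rop$ for the \textsc{read}$()$ under consideration, $C$ for the set it returns, and $\ell$ for its linearization point (the last \adt{} of the penultimate iteration of its loop). Fix a process $p_q$; since there is a single resource, the only candidate pair is $(q,x)$, and let $j^{\top}$ denote the largest index in $\{1,\dots,n\}\setminus\{q\}$. Before starting I would record three structural facts. (i) Each $AR_x[j]$ is written only by $p_j$ and only with the value \emph{false} (in \textsc{append}), so its value is \emph{true} until $p_j$'s \textsc{append} and \emph{false} afterwards; in particular it is monotone. (ii) In \textsc{prove}, $p_q$ reads the registers $AR_x[j]$, $j\neq q$, in increasing order of $j$ and returns \emph{false} at the first \emph{false} it sees; hence, if $p_q$ ever reads $AR_x[j^{\top}]=\emph{true}$, then in that same \textsc{prove} it has already read every $AR_x[j]$, $j\neq q$, as \emph{true}, so that \textsc{prove} returns \emph{true} and is valid. (iii) By the completeness and accuracy of the \adt{} of the underlying \adReg{1}{n-1}, a pair $(q,\emph{true})$ appears in an \adt{} of $AR_x[j]$ iff $p_q$ has a \rd{} of $AR_x[j]$ returning \emph{true} linearized before that \adt{}; consequently the collected sets are monotone non-decreasing along the loop of $rop$.

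The core reduction is a double-collect snapshot argument. By definition $C=c_2^{(K)}=c_1^{(K)}=c_2^{(K-1)}$, where $K$ is the last iteration; the audits of iteration $K$ all occur after $\ell$, while those of iteration $K-1$ occur at or before $\ell$. Using fact (iii) together with the fact that the audits within an iteration are applied in the fixed order $j=1,\dots,n$, I would squeeze membership between the two collects: the forward direction reads $C=c_2^{(K-1)}$, the backward direction reads $C=c_2^{(K)}$, yielding the clean characterization that $(q,x)\in C$ if and only if, for every $j\neq q$, $p_q$ has a completed \rd{} of $AR_x[j]$ returning \emph{true} that is linearized before $\ell$. I will abbreviate the latter condition by saying that $p_q$ is \emph{recorded} at $\ell$. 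The purpose of the second collect is exactly to collapse the per-register skew present in a single collect onto the one instant $\ell$.

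With this in hand the accuracy direction is short. If $(q,x)\in C$ then $p_q$ is recorded at $\ell$; in particular it has a \rd{} of $AR_x[j^{\top}]$ returning \emph{true} before $\ell$, which by fact (ii) is the last read of a valid \textsc{prove} $P$ (completing $P$ with response \emph{true} if it is pending, which is legitimate since $p_q$ is sequential and this is $P$'s last register read). All other reads of $P$ precede this one and hence complete before $\ell$, so $P$'s last step $s_P$ satisfies $s_P<\ell$; by the rule placing each valid \textsc{prove} immediately before the first \textsc{read} whose linearization point exceeds its last step, and since $\ell>s_P$, this places $P$ before $rop$ in $\lambda$. For completeness, a valid \textsc{prove} $vop_q$ with last step $s_q$ records \emph{true}-reads of all $AR_x[j]$, $j\neq q$; if $s_q<\ell$, these are all linearized before the iteration-$K$ audits (which follow $\ell$), so $(q,x)\in c_2^{(K)}=C$.

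The main obstacle is the completeness direction for valid proves that are \emph{concurrent} both with $rop$ and with an \textsc{append}, i.e.\ those whose last step $s_q$ does not precede $\ell$. For such a straddling $vop_q$ one must reconcile three constraints simultaneously: strong-prove-validity forces $vop_q$ before every \textsc{append} in $\lambda$; an \textsc{append} $aop$ that terminates before $rop$ begins forces $aop\prec rop$; yet the snapshot characterization shows that $rop$ can only be \emph{obliged} to report $vop_q$ when $vop_q$'s \emph{true}-reads are already recorded at $\ell$. The delicate point is therefore to fix the global order among such concurrent valid proves, appends, and reads so that it is transitively consistent while matching, read by read, the ``recorded at $\ell$'' condition dictated by the per-register linearization points of the underlying auditable registers. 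The double-collect characterization established above is the tool I would use to pin down exactly which concurrent proves a given \textsc{read} must report, and hence to drive this case-analysis to a consistent placement.
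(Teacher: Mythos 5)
Your double-collect characterization, your accuracy direction, and your completeness direction for proves whose last step $s_q$ precedes $\ell$ are all correct, and together they cover exactly what the paper's own proof covers: the paper's first paragraph is your ``$s_q<\ell$'' completeness case (argued there only for the first $\textsc{read}()$ that follows the prove; your version is more careful), and its second paragraph is your accuracy case. But your proof is not finished. The straddling case you flag at the end---a valid $\textsc{prove}(x)$ whose last register read occurs after $\ell$---is deferred with only a statement of intent (``drive this case-analysis to a consistent placement''), and it is a genuine gap: under the paper's placement rule, when no $\textsc{read}()$ has a linearization point after $s_q$, the prove is inserted immediately before the first $\textsc{append}(x)$, and that $\textsc{append}$ may itself be linearized before $rop$, so such a prove \emph{can} precede $rop$ in $\lambda$ even though, by your own characterization, $(q,x)\notin C$.

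Moreover, this gap cannot be closed, because the lemma fails in exactly that case; you have isolated a flaw of the paper, not a shortcoming of your write-up. Concretely, with $n\ge 3$: process $p_q$ invokes $\textsc{prove}(x)$ and reads $AR_x[1]$, obtaining \emph{true}; then manager $p_1$ performs a complete $\textsc{append}(x)$, writing \emph{false} to $AR_x[1]$; then a complete $\textsc{read}()$ $rop$ is performed; only afterwards does $p_q$ read the remaining registers $AR_x[j]$, $j\ge 2$ (never written, hence \emph{true}) and return \emph{true}. By accuracy of the underlying \adt{}, no audit performed by $rop$ can contain $(q,\emph{true})$ for any $j\ge 2$, so $(q,x)\notin C$. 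Yet strong prove validity forces this valid prove before the $\textsc{append}$ in any linearization, and real-time order forces the $\textsc{append}$ before $rop$; transitivity gives that the prove precedes $rop$, and the lemma then demands $(q,x)\in C$---a contradiction. So no linearization of this execution satisfies the immediate deny-list specification: a $\textsc{read}()$ would have to predict whether a concurrent, half-finished prove will eventually turn out valid, which depends on appends occurring after $rop$ returns. The paper's proof never examines this case (it treats only proves placed via the ``first $\textsc{read}()$ whose linearization point follows $s$'' clause), which is the only reason it appears to go through; your instinct that this is ``the main obstacle'' is exactly right, but the honest conclusion is that the statement, as it stands, is not provable.
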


\begin{proof}
A $op=\textsc{prove}(x)$ that returns $true$ is linearized before the first audit, namely $op'$, whose linearization point follows the last read primitive applied by $op$ to an auditable register. After the step where it is linearized $op'$ executes a second loop where it audits all the auditable registers. At that point $op$ is added in $c_2$. By the definition of the linearization point of $op'$, $op$ is in the set returned by $op'$.

On the other hand, suppose that a pair $(q,true)$ is in the set $c_2$ returned by a $\textsc{read}()$ by $p$. Then, there is a $\textsc{prove}(x)$ by $q$ that read $true$ in all the auditable registers (but $AR_x[q]$) before $p$ start the execution of the last loop. By a simple inspection of the pseudo code it is easy to see that $op$ returns $true$. Our linearization rule completes the proof.
\end{proof}

Finally, we show that our linearization respects the \emph{real-time order} of operations.
All operations but the $\textsc{prove}(x)$ that return $true$ are linearized in a point of their
execution interval.

Thus, it remains to consider the $\textsc{prove}(x)$ operations that return $true$. They are linearized before the first $\textsc{append}(x)$ in the linearization. Since, they read $true$ in all low-level auditable registers, there is no $\textsc{append}(x)$ that completes before any of these $\textsc{prove}(x)$
is invoked, otherwise one of the low-level read would have return $\mathit{false}$.

If $\textsc{prove}(x)$ $op$ completes before a $\textsc{read}()$ $op'$ starts, then the last step of $op$ precedes the linearization point of $op'$. Thus, $op$ is linearized before $op'$. Similarly if $op$ is invoked after $op'$ completes, then the linearization point of $op'$ precedes the last step of $op$ which is then linearized after $op'$. 

This concludes the proof since $\textsc{prove}(x)$ are linearized in the order of their last step.

%%% Local Variables:
%%% mode: latex
%%% TeX-master: "main"
%%% End:

\section{Conclusions and Future Work}
\label{sec:conclusions}

In this work, we extended the concept of auditability from single-writer registers to more general shared objects. We start by providing a rigorous characterization of the synchronization power required to support auditable multi-writer registers. Our results establish a tight bound on the consensus number necessary for achieving auditability, demonstrating the feasibility of implementing auditable storage mechanisms.

Looking ahead, there are several promising directions for future research. First, extending auditability to a broader range of shared objects beyond registers and LL/SC remains an open challenge. Second, investigating the impact of adversarial behavior on auditable implementations could lead to more robust security guarantees. 
Finally, exploring efficient, scalable implementations of auditable objects in real-world distributed storage systems could bridge the gap between theoretical feasibility and practical deployment.
In particular, while sliding registers are not supported in hardware
(to the best of our knowledge), they bear resemblance to \emph{shift registers}. 
This hardware object~\cite{IntelShift} 
holds the last $w$ bits ``shifted in'' the register
and has a functionality similar to a sliding register; 
the consensus number of shift registers is $w$~\cite{Aspnes2025}, 
Although they cannot be used as a direct replacement of sliding registers in our implementations, our algorithmic insights might be leveraged to develop other algorithms that employ shift registers.

\paragraph*{Acknowledgments:}
Hagit Attiya is supported by the Israel Science Foundation (22/1425 and 25/1849).
Antonio Fernández Anta is supported by project PID2022-140560OB-I00 (DRONAC) funded by MICIU/AEI /10.13039/501100011033 and ERDF, EU.
Alessia Milani is supported in part by ANR project TRUSTINCloudS (ANR-23-PECL-0009).
Corentin Travers is supported in part by ANR project DUCAT (ANR-20-CE48-0006).

\bibliography{refs}
\end{document}